\def\ps@pprintTitle{%
 \let\@oddhead\@empty
 \let\@evenhead\@empty
 \def\@oddfoot{\centerline{\thepage}}%
 \let\@evenfoot\@oddfoot}
\newcommand{\vect}[1]{\boldsymbol{\mathbf{#1}}}
\newcommand{\R}{\mathbb{R}}
\newcommand{\N}{\mathbb{N}}
\newcommand{\norm}[1]{\left\lVert #1 \right\rVert}
\newcommand{\lap}{\Delta}
\newcommand{\grad}{\nabla}
\newtheorem{theorem}{Theorem}[section]
\newtheorem{remark}{Remark}[section]
\newtheorem{lemma}{Lemma}[section]
\newtheorem{proposition}{Proposition}[section]
\DeclareMathOperator*{\argmin}{arg\,min}
\newcolumntype{C}{ >{\centering\arraybackslash} m{4cm} }
\def\BState{\State\hskip-\ALG@thistlm}
\begin{document}
\begin{frontmatter}
\title{A Globally Convergent Modified Newton Method for the Direct Minimization of the Ohta-Kawasaki Energy with Application to the Directed Self-Assembly of Diblock Copolymers}

\author{Lianghao Cao*}
\author{Omar Ghattas}
\author{J. Tinsley Oden}

\address{Oden Institute for Computational Engineering \& Sciences, The University of Texas at Austin, Austin, TX, 78712, USA}
\cortext[cor1]{Email address: \texttt{lianghao@ices.utexas.edu}}

\begin{abstract}
We propose a fast and robust scheme for the direct minimization of the Ohta–Kawasaki energy that characterizes the microphase separation of diblock copolymer melts. The scheme employs a globally convergent modified Newton method with line search which is shown to be mass-conservative, energy-descending, asymptotically quadratically convergent, and typically three orders of magnitude more efficient than the commonly-used gradient flow approach. The regularity and the first-order condition of minimizers are carefully analyzed. A numerical study of the chemical substrate guided directed self-assembly of diblock copolymer melts, based on a novel polymer-substrate interaction model and the proposed scheme, is provided. 
\end{abstract}

\begin{keyword}
the Ohta-Kawasaki Model, the nonlocal Cahn-Hilliard model, block copolymer, directed self-assembly, Newton method, Hessian approximation.

\end{keyword}
\end{frontmatter}

\section{Introduction}
A \textit{block copolymer} (BCP) is a polymer consisting of sub-chains or blocks of chemically distinct monomers joined by covalent bonds, each block being a linear series of identical monomers. A large collection of one type of block copolymer is called a melt. At high temperatures, the blocks in an incompressible melt are mixed homogeneously. As the temperature is reduced, the dislike blocks tend to segregate and lead to a process termed \textit{microphase separation}. The microphase separation of BCP melts results in the self-assembly of meso-scale multi-phase ordered structures, such as lamellae, spheres, cylinders, and gyroids \cite{Bates1999, Hamley1999,Abetz2005}. The microphase separation could be further guided by chemically and/or topologically patterned templates formed on the underlying surface, enabling the design of complex nano-structures. This process is referred to as the \textit{directed self-assembly} (DSA) of BCPs. The design of the DSA of BCPs to reproduce nano-structures with desired features is highly attractive in nano-manufacturing applications \cite{Bates2014, Stoykovich2007, Ji2016, Park2003}.\\
\indent Computational studies of the DSA of BCPs have proven to be valuable in determining the effects of material properties, film thickness, polymer-substrate interactions, and geometric confinement on the self-assembly process \cite{Ginzburg2015, Liu2013, Yi2013, Yang2015}. Continuum models of microphase separation of BCP melts \cite{Muller2018}, such as the self-consistent field theory (SCFT) model, the Ohta-Kawasaki (OK) model, and the Swift–Hohenberg model, make possible the exploration of the space of nano-structures formed by the DSA process with a relatively low computational cost. They are often used in design and inverse problems associated with the DSA of BCPs \cite{Hannon2013,Qin2013,Latypov2014, Hannon2014, Gadelrab2017, Hannon2018, Mukherjee2016}. To further reduce the computational cost, it is essential to develop fast and robust algorithms for obtaining model solutions, particularly since the model must be repetitively solved in the course of solving design and inverse problems.\\
\indent In this work, we focus on the OK model for microphase separation of diblock copolymer (BCP with two blocks) melts. The OK model, first introduced in \cite{Ohta1986}, is a density functional theory derived from the SCFT model. Its connection to particle-based models of polymers is discussed in \cite{Ohta1986,Choksi2003,Fredrickson2005,Muller2018}. The OK model leads to a free energy functional in terms of the relative local densities of the monomers that characterize the phase separation. The minimizers of the free energy subject to incompressibility constraints represent the possible equilibrium morphologies of the microphase separation. Various studies have demonstrated success in using the OK model to depict periodic structures of the diblock copolymer microphase separation that agree with those observed in experiments and fine-scale models \cite{Choksi2009, Choksi2011, Choksi2012, van_den_Berg_2017}.\\
\indent The OK model also belongs to a class of nonlocal Cahn-Hilliard models and minimizing the OK free energy through the nonlocal Cahn-Hilliard equation, i.e., the mass-conservative gradient flow of the OK energy, is commonly-adopted in the literature \cite{Zhang2006, Parsons2012,Qin2013,Choksi2009, Choksi2011,Choksi2012}. The nonlocal Cahn-Hilliard equation is a time-dependent fourth-order nonlinear PDE, known for its stiffness. To solve for the minimizers of the OK free energy through the nonlocal Cahn-Hilliard equation, one needs to solve for the steady-state solutions while retaining energy stability and accuracy. This typically requires a large number of small time steps and results in slow linear convergence. Despite this downside, the nonlocal Cahn-Hilliard equation is used in many computationally challenging studies on the DSA of BCPs in the literature, such as the numerical investigation of the equilibrium morphology phase diagram \cite{Choksi2009, Choksi2011, Choksi2012} and the design of optimal DSA guiding patterns \cite{Qin2013}. The gradient flow approach is also referred to as the \textit{continuous steepest descent method} in the context of the SCFT model. As pointed out in \cite[p.234]{Fredrickson2005}, since the SCFT model is an equilibrium model and only the steady states are of physical interest, one ultimately does not care much about the accuracy of the PDE solve, as the goal is to obtain the steady-state solutions as quickly as possible. The same argument applies to the OK model, which is posed as an energy minimization problem similar to the SCFT model. Therefore, it is natural that we seek to develop numerical schemes that preserve the advantages of the gradient flow approach, such as mass conservation, global convergence, and energy stability, while discarding its slow and fictitious dynamical trajectory.\\
\indent In this paper, we propose a fast and robust algorithm for the direct minimization of the OK energy functional that greatly reduces the computational cost of using the OK model. A globally convergent modified Newton method with line search, defined in an appropriate function space setting, is used for the mass-conservative minimization of the OK energy. The minimization iterations are shown to monotonically decrease the OK energy, attain quadratic convergence, and find local minimizers three order of magnitude faster than the traditional gradient flow approach. \\
\indent The rest of the paper is organized as followings. In Section 2, the OK energy functional is introduced and relevant Hilbert spaces are defined. The minimization problem of the OK energy functional is posed in these function spaces. In Section 3, the first order optimality condition is reformulated by transforming the test space. The regularity of the minimizers is analyzed for establishing the two-way equivalency of the transformation. In Section 4, a mass-conservative Newton method for minimizing the OK energy is proposed and analyzed. A modified Hessian operator is introduced and we show that it can be employed to generate energy-descending Newton steps, thus leading to global convergence. The choice of appropriate initial guesses is discussed. In Section 5, numerical examples are presented to demonstrate the properties of the proposed scheme. In Section 6, we provide a numerical study of the chemical substrate guided DSA of BCPs with the proposed scheme. A novel polymer-substrate interaction model based on the OK model is introduced. The conclusion is given in Section 7.
\section{Problem Statement}
\subsection{The OK energy functional}
Let $\Omega\subset \R^d$, $d=1,2,3$, be a bounded domain. Let $u_A$,$u_B:\Omega\to[0,1]$ be the normalized segment densities of monomer A and B inside the domain. They satisfy the incompressibility constraint, $u_A+u_B = 1$. Define an order parameter $u\coloneqq u_A-u_B\in[-1,1]$ that represents the normalized local density difference between the two phases. We consider the following form of the OK energy functional:
\begin{equation}\label{def:ok_energy_strong}
    F_{OK}(u) \coloneqq \frac{1}{2}\int_\Omega \Big\{ 2\kappa W(u) + \epsilon^2 |\grad u|^2+\sigma (u-m)(-\lap^{-1})(u-m)\Big\}\;d\vect{x}\;.
\end{equation}
where $\kappa,\epsilon,\sigma\in\R^+$ are scalar parameters. The constant $m$ is the spatial or mass average of the order parameter, i.e., $m = |\Omega|^{-1}\int_\Omega u\;d\vect{x}\in[-1,1]$. This equality is typically imposed as the incompressibility constraint on the order parameter for a given $m$.\\
\indent $W$ is a double well potential function that reaches minimum at $\pm 1$. It originally had the following form:
\begin{equation}
    W(u) = 
    \begin{cases}
    \frac{1}{4}(1-u^2)\;,&u\in[-1,1]\;;\\
    \infty\;,&\textrm{otherwise}\;.
    \end{cases}
\end{equation}
The infinite walls at $u=\pm 1$ allow the range of $u$ to be extended to the real axis. The approximate fourth-order polynomial form of the double well potential with global minimum at $u=\pm 1$ is often considered as an alternative:
\begin{equation}\label{eq:dw}
    W(u) = 
    \frac{1}{4}(1-u^2)^2\;.
\end{equation}
The action of the Laplacian inverse in \eqref{def:ok_energy_strong} is defined by the solution of the Poisson equation with the homogeneous Neumann boundary condition:
\begin{equation}\label{eq:lap_inv}
    w = (-\lap^{-1})(u-m)\iff \begin{cases}
        -\lap w = u-m &\text{in }\Omega\;;\\
        \grad w \cdot \nu = 0 &\text{on }\partial\Omega\;,
    \end{cases}
\end{equation}
where $\nu$ is the unit vector normal to $\partial\Omega$.\\
\indent The $\kappa$ and $\epsilon$ terms in \eqref{def:ok_energy_strong} define the Ginzburg-Landau energy and are often associated with the Cahn-Hilliard and Allen-Cahn models of phase change. The $\sigma$ term is responsible for the long-range interaction and is usually expressed as follows:
\begin{equation}\label{eq:nonlocal_energy}
    F_\sigma = \frac{\sigma}{2}\int_\Omega\int_\Omega \big(u(\vect{x})-m\big)G(\vect{x},\vect{y})\big(u(\vect{y})-m\big)\;d\vect{x}\;d\vect{y}\;,
\end{equation}
where $G$ is the Green's function of the Laplacian operator. $F_\sigma$ is therefore often referred to as the nonlocal energy.\\
\indent The scalar parameters in the OK energy are linked to the material parameters of the diblock copolymer. In particular, an appeal to self-consistent field theory in \cite{Choksi2003} leads to the relations
\begin{equation}\label{eq:model_param}
    \kappa = 1\;,\quad \epsilon^2 = \frac{l^2}{3f(1-f)\chi|D|^{2/d}}\;,\quad\sigma= \frac{36|D|^{2/d}}{f^2(1-f)^2l^2\chi N^2}\;,
\end{equation}
where $N$ is the degree of polymerization of the polymer, $\chi$ is the Flory-Huggins parameter that indicates the strength of repulsion between dissimilar monomers, $l$ is the statistical segment length of the polymer, and $f\in(0,1)$ is the segment ratio of one of the sub-chains in the polymer. The domain $D$ refers to the physical domain, whereas $\Omega$ in the energy is originally defined as a normalized domain of unit volume. To maintain generality of the model, we treat $\kappa$ as an arbitrary positive number and $\Omega$ as an arbitrary bounded domain with sufficiently regular boundary. The regularity of the boundary is discussed in Section \ref{subsec:regularity}.

\subsection{The Laplacian inverse operator and the Hilbert space \texorpdfstring{$\mathring{H}^{-1}$}{}}
To arrive at a mathematically rigorous problem formulation, we define the Laplacian inverse operator \eqref{eq:lap_inv} in a weak sense. First, we need to define some function spaces.\\
\indent Consider decomposing $H^1\coloneqq H^1(\Omega)$ into $\mathring{H^1}\oplus \mathcal{C}$,
where $\mathring{H}^1$ is a subspace of $H^1$ consisting of functions with zero spatial average and $\mathcal{C}$ consists of constant-almost-everywhere functions:
\begin{equation}
    \mathring{H}^1 \coloneqq \Big\{v\in H^1 : \int_\Omega v\;d\vect{x} = 0\Big\}\;,\quad\mathcal{C} \coloneqq \Big\{v\in H^1:v = \text{const }a.e.\Big\}\;.
\end{equation}
In particular, $\mathring{H}^1$ is a Hilbert space equipped with the inner product $(u,v)_{\mathring{H}^1} = (\grad u,\grad v)$, where $(\cdot,\cdot)$ denotes the $L^2(\Omega)$ inner product. \\
\indent The dual space $(\mathring{H}^1)^*$ can be naturally extended to a subspace of $(H^{1})^*$, denoted $\mathring{H}^{-1}$, by taking $\mathcal{C}$ as the kernel,
\begin{equation}
    \mathring{H}^{-1} \coloneqq \big\{f\in (H^1)^* \:\big|\: \langle f, c\rangle=0\; \forall c\in \mathcal{C}\big\}\;,
\end{equation}
where $\langle\cdot,\cdot\rangle$ is the duality pairing of $(H^1)^*$ and $H^1$.\\
\indent Now we define a Laplacian inverse operator $(-\lap_N)^{-1}:\mathring{H}^{-1}\to \mathring{H}^1$ that weakly solves the Poisson equation with the homogeneous Neumann boundary condition:
\begin{equation}\label{def:lap_inv_weak}
    \Big(\grad \big((-\lap_N)^{-1}f\big),\grad \tilde{v}\Big) = \langle f, \tilde{v}\rangle\quad\forall \tilde{v}\in H^1,f\in \mathring{H}^{-1}\;.
\end{equation}
The Laplacian inverse operator is linear, bounded and bijective. As a result, we can define an inner product on $\mathring{H}^{-1}$ via the Laplacian inverse operator:
\begin{equation}\label{eq:h-1_inner_prod}
    (f, g)_{\mathring{H}^{-1}} \coloneqq \big((-\lap_N)^{-1}f, (-\lap_N)^{-1}g\big)_{\mathring{H}^1}=
    \begin{cases}
        \big\langle f, (-\lap_N)^{-1}g \big\rangle\;;\\
        \big\langle g, (-\lap_N)^{-1}f \big\rangle\;.
    \end{cases}
\end{equation}
$\mathring{H}^{-1}$ is complete under the inner product \cite[p.~32]{Cowan2005}, which makes $\big(\mathring{H}^{-1}, (\cdot,\cdot)_{\mathring{H}^{-1}}\big)$ a Hilbert space.\\
\indent Let $R$ be the $L^2$ Riesz map, $R:L^2\ni v \mapsto (v,\cdot)\in \big(L^2\big)^*$. The nonlocal energy \eqref{eq:nonlocal_energy} can be expressed weakly as:
\begin{equation}
    F_\sigma = \frac{\sigma}{2}\norm{R(u-m)}^2_{\mathring{H}^{-1}}\;.
\end{equation}


\subsection{The minimization problem in \texorpdfstring{$\mathring{H}^1$}{}}
Given $m\in(-1,1)$ and $\kappa,\epsilon,\sigma\in \R^+$, consider the weak form of the OK energy $F_{OK}:\mathring{H}^1_m\to\R^+$:
\begin{equation}\label{eq:ok_energy_weak}
    F_{OK}(u) =\kappa \norm{W(u)}_{L^1} + \frac{\epsilon^2}{2}\norm{u-m}^2_{\mathring{H}^1} + \frac{\sigma}{2} \norm{R(u-m)}^2_{\mathring{H}^{-1}} \;,
\end{equation}
where $\mathring{H}^1_m\coloneqq\big\{v\in H^1: v-m\in\mathring{H}^1\big\}$ and $W:H^1\to L^1$ with $W(u) = (1-u^2)^2/4$.\\
\indent The minimization problem is posed as follows:
\begin{equation}\label{def:min_prob_h1}
    \begin{aligned}
    \text{Find } u_{\text{min}}\in \mathring{H}^1_m \text{ such that :} && u_{\text{min}}=\argmin_{u\in\mathring{H}^1_m} F_{\text{OK}}(u)\;.    \end{aligned}
\end{equation}
The first and second order conditions for the minimizers are
\begin{equation}\label{eq:min_prob_var_h1}
u_{\text{min}}\in \mathring{H}^1_m \text{ such that :}\quad
    \begin{cases}
        \big\langle DF(u_{\text{min}}), \tilde{u}_0\big\rangle=0 &\forall \tilde{u}_0\in \mathring{H}^1\;;\\
        \exists \alpha > 0:\big\langle D^2F(u_{\text{min}})\tilde{u}_0, \tilde{u}_0\big\rangle \geq \alpha\norm{\grad \tilde{u}_0} &\forall \tilde{u}_0\in \mathring{H}^1\;,
    \end{cases}
\end{equation}
where $DF:\mathring{H}^1_m\to(\mathring{H}^1)^*$ is the G\^ateaux derivative of $F_{\text{OK}}$ at $u\in \mathring{H}^1_m$:
\begin{equation}\label{eq:1st_var}
    \big\langle DF(u),\tilde{u}_0\big\rangle \coloneqq \big(\kappa W'(u), \tilde{u}_0\big) + \big(\epsilon^2\grad u, \grad\tilde{u}_0\big) + \big(\sigma (-\lap_N)^{-1}R(u-m), \tilde{u}_0\big)\quad\forall \tilde{u}_0\in\mathring{H}^1\;,
\end{equation} 
and $D^2F(u):\mathring{H}^1\to (\mathring{H}^1)^*$ is a linear Hessian operator that corresponds to the second-order G\^ateaux derivative of $F_{\text{OK}}$ at $u\in\mathring{H}^1_m$:
\begin{equation}\label{eq:2nd_var}
    \big\langle D^2F(u)\hat{u}_0, \tilde{u}_0\big\rangle \coloneqq \big(\kappa W''(u)\hat{u}_0, \tilde{u}_0\big) + (\epsilon^2\grad \hat{u}_0, \grad\tilde{u}_0) + \big(\sigma (-\lap_N)^{-1}R(\hat{u}_0), \tilde{u}_0\big)\quad\forall \hat{u}_0,\tilde{u}_0\in\mathring{H}^1\;.
\end{equation}
\indent As the G\^ateaux derivatives $DF(u)$ and $D^2F(u)$ are defined on the subspace $\mathring{H}^1$ of $H^1$, this minimization problem is difficult to implement numerically without adaptation. One could adopt a constrained minimization approach and introduce a Lagrange multiplier into the energy functional in order to extend the test space to $H^1$. A first-order method that utilizes this approach is the mass-constrained Allen-Cahn equation. In the following section, we adopt a different approach. The problem is reformulated via the $\mathring{H}^{-1}$ inner-product such that the solution space and the test space of the transformed first-order condition are all in $H^1$. Such an approach is known to give rise to the local/nonlocal Cahn-Hilliard equation \cite{Fife2000}. We instead focus on the equivalency between the original formulation in $\mathring{H}^1$ and the transformed formulation in $\mathring{H}^{-1}$.

\section{The First-Order Condition in \texorpdfstring{$\mathring{H}^{-1}$}{}}
 In this section, we first show regularity results for the critical points of the OK energy functional. Then we utilize the $\mathring{H}^{-1}$ inner-product to transform the first-order condition \eqref{eq:1st_var}. The two-way equivalency of the transformation is established along the way. 
 
 \subsection{On the regularity of the critical points of \texorpdfstring{$F_{OK}$}{}}\label{subsec:regularity}
Despite residing in $\mathring{H}^1_m$, the critical points of $F_{OK}$ that satisfy the first-order condition \eqref{eq:1st_var} usually possess bounded derivatives of order higher than one. Let $u_c\in\mathring{H}^1_m$ such that $DF(u_c)\equiv 0$; we have
\begin{equation}
    (\epsilon^2\grad u_c, \grad \tilde{u}_0) = \big(\kappa W'(u_c),\tilde{u}_0\big) + \big(\sigma(-\lap_N)^{-1}R(u_c-m),\tilde{u}_0\big)\quad\forall \tilde{u}_0\in\mathring{H}^1\;.
\end{equation}
Consequently, the first-order condition implies $u_c$ can be expressed as weak solutions to the pure Neumann Poisson problem, which admits the possibility of applying the Neumann elliptic regularity theorem on $u_c$. A general observation is that the highest order of the bounded derivatives of $u_c$ is dictated by the highest order permissible by the Neumann elliptic regularity theorem for the given domain $\Omega$ \cite{Cowan2005}.\\
\indent The following proposition is a reflection of this general observation. It shows that high regularity of the critical points can be achieved with high regularity of the domain boundary. The proof of the proposition is provided in Appendix \ref{appendix:a}. Here we define $H^k\coloneqq H^k(\Omega)$, and $\mathring{H}^k_m \coloneqq \mathring{H}^1_m\cap H^k$ with $k\geq 1$. 
\begin{proposition}[Regularity of the critical points]\label{prop:regularity}
    Let $u_c\in\mathring{H}^1_m$ and $DF(u_c)\equiv 0$. If $\Omega$ has a boundary of class $C^{r,1}$ \cite{Gilbarg1983} with $r\geq 1$ and satisfies the cone condition \cite{Adams2003}, then $u_{c}\in \mathring{H}^{r+1}_m$.
\end{proposition}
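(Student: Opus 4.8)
The plan is to read the first-order condition \eqref{eq:1st_var} as an elliptic PDE for $u_c$ and run a standard regularity bootstrap against the Neumann elliptic regularity theorem. Rearranging \eqref{eq:1st_var} with $DF(u_c)\equiv 0$, the critical point $u_c$ is a weak solution, against test functions in $\mathring{H}^1$, of the pure-Neumann Poisson problem
\begin{equation*}
    -\epsilon^2\lap u_c \;=\; \kappa W'(u_c) + \sigma(-\lap_N)^{-1}R(u_c-m) \;=:\; f(u_c)\,,
\end{equation*}
understood modulo an additive constant (only zero-mean test functions are seen, so only the zero-mean part of $f(u_c)$ is relevant, and subtracting a constant does not affect regularity). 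The structural point — already flagged in the paragraph preceding the proposition — is that the two pieces of $f(u_c)$ behave differently: the nonlocal term $(-\lap_N)^{-1}R(u_c-m)$ is itself the solution of a Neumann Poisson problem with source $u_c-m$, hence is always (up to whatever $\partial\Omega$ permits) two Sobolev orders smoother than $u_c$ and never limits the argument; the polynomial term $W'(u_c)=u_c^3-u_c$ is the only obstruction, so the whole proof reduces to showing that cubing preserves Sobolev regularity.

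For the base step I would use that $u_c\in\mathring{H}^1_m\subset H^1$ and, in the dimensions of interest ($d\le 3$), the Sobolev embedding $H^1\hookrightarrow L^6$ — valid under the cone condition — to get $W'(u_c)\in L^2$; since $u_c-m\in\mathring{H}^1\subset L^2$ we also have $(-\lap_N)^{-1}R(u_c-m)\in\mathring{H}^1\subset L^2$, so $f(u_c)\in L^2=H^0$. Applying the Neumann elliptic regularity theorem (cf.\ \cite{Gilbarg1983,Cowan2005}), whose hypotheses require $\partial\Omega$ of class $C^{1,1}$ — implied by $C^{r,1}$ since $r\ge 1$ — yields $u_c\in H^2$. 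This already settles the case $r=1$.

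For $r\ge 2$ I would induct: assume $u_c\in H^{j+1}$ for some $1\le j<r$. Since $j+1\ge 2>d/2$ for $d\le 3$, $H^{j+1}$ is a Banach algebra (again using the cone condition), so $u_c^3\in H^{j+1}$ and hence $\kappa W'(u_c)\in H^{j+1}\subset H^j$; likewise, feeding the source $u_c-m\in H^{j+1}$ into the Neumann regularity theorem (the requisite boundary smoothness being available since $r\ge j+1$) shows the nonlocal term lies in $H^j$ as well. Thus $f(u_c)\in H^j$, and one more application of Neumann elliptic regularity — legitimate because it requires $\partial\Omega\in C^{j+1,1}$ with $j+1\le r$ — gives $u_c\in H^{j+2}$. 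Iterating up to $j=r-1$ produces $u_c\in H^{r+1}$, and since $u_c\in\mathring{H}^1_m$ by hypothesis we conclude $u_c\in\mathring{H}^1_m\cap H^{r+1}=\mathring{H}^{r+1}_m$, as claimed.

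The main obstacle — and essentially the only place care is needed — is the nonlinear term: making precise that $u_c^3$ remains in the same Sobolev class as $u_c$. In the high-regularity regime this is just the algebra property of $H^k$ for $k>d/2$, but the delicate point is the base case, where $H^1$ is not an algebra and one must instead exploit a Sobolev embedding, which is what forces the restriction $d\le 3$ (in higher dimensions one would substitute $L^p$ Calderón--Zygmund estimates). A secondary bookkeeping matter is to track exactly which order of boundary smoothness each invocation of the Neumann elliptic regularity theorem consumes, so that the induction terminates precisely at $H^{r+1}$ without demanding more than a $C^{r,1}$ boundary; stating that theorem with its precise hypotheses and verifying the compatibility/zero-mean point is otherwise routine.
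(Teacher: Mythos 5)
Your proposal is correct and follows essentially the same route as the paper: recast the first-order condition as a pure-Neumann Poisson problem for $u_c$, handle the cubic term via the Sobolev embedding ($u_c\in H^1\Rightarrow u_c^3\in L^2$ for $d\le 3$) in the base case and via the Banach-algebra property of $H^k$ for $k>d/2$ under the cone condition in the inductive step, and bootstrap with Neumann elliptic regularity up to the order permitted by the $C^{r,1}$ boundary. (The only quibble is a sign slip in your source term --- the first-order condition gives $-\epsilon^2\lap u_c = -\kappa W'(u_c)-\sigma(-\lap_N)^{-1}R(u_c-m)$ up to an additive constant --- which has no bearing on the regularity argument.)
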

From now on, we proceed with the assumption that \textit{the domain boundary $\partial\Omega$ is sufficiently regular to imply at least $u_c\in\mathring{H}^3_m$ if $DF(u_c)\equiv 0$.} For the numerical examples in Section \ref{sec:numerical_results} and \ref{sec:chemical_substrate}, we consider rectangular domains in $2$d and box domains in $3$d, with which we have at least $u_c\in\mathring{H}^4_m$ if $DF(u_c)\equiv 0$ \cite{Grisvard2011,Hell2014}.\\
\indent The following proposition further establishes that the critical points and the Laplacian of the critical points satisfy the homogeneous Neumann boundary condition if the critical points have sufficiently high regularity. The proof is also provided in Appendix \ref{appendix:a}.
\begin{proposition}[Boundary conditions of the critical points]\label{prop:bc}
Let $u_c\in \mathring{H}^1_m$ and $DF(u_c)\equiv 0$.
\begin{enumerate}[label = (\roman*)]
    \item If $u_c\in\mathring{H}^2_m$, then $\grad u_c\cdot \nu=0$.
    \item If $u_c\in\mathring{H}^4_m$, then $\grad (\lap u_c) \cdot \nu = 0$ on $\partial\Omega$.
\end{enumerate}
\end{proposition}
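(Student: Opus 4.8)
The plan is to convert the first-order condition $DF(u_c)\equiv 0$ into a pointwise Euler--Lagrange equation carrying a Lagrange-multiplier constant, and then read off the boundary conditions by integration by parts. Writing out \eqref{eq:1st_var} with $DF(u_c)\equiv 0$ gives
\[ (\epsilon^2\grad u_c,\grad\tilde{u}_0)=\big(\kappa W'(u_c)+\sigma(-\lap_N)^{-1}R(u_c-m),\,\tilde{u}_0\big)\qquad\forall\,\tilde{u}_0\in\mathring{H}^1. \]
Since $\mathring{H}^4_m\subset\mathring{H}^2_m\subset H^2$, in both parts $u_c\in H^2$, so the left-hand side may be integrated by parts. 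First I would test against the zero-mean elements of $C_c^\infty(\Omega)$ (of which there are plenty) to annihilate the boundary term and conclude that $-\epsilon^2\lap u_c-\kappa W'(u_c)-\sigma(-\lap_N)^{-1}R(u_c-m)$ equals a constant a.e.\ in $\Omega$; decomposing $H^1=\mathring{H}^1\oplus\mathcal{C}$ identifies that constant as $-\lambda$ with $\lambda=\tfrac{\kappa}{|\Omega|}\int_\Omega W'(u_c)$, where $W'(u_c)\in L^2$ by the embedding $H^1\hookrightarrow L^6$. Hence $u_c$ satisfies, a.e.\ in $\Omega$,
\[ -\epsilon^2\lap u_c=\kappa W'(u_c)+\sigma w_c-\lambda,\qquad w_c:=(-\lap_N)^{-1}R(u_c-m)\in L^2. \]

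For part (i), I would substitute this equation back into the integrated-by-parts weak form tested over $\tilde{u}_0\in\mathring{H}^1$; the interior contributions cancel identically, leaving $\int_{\partial\Omega}\epsilon^2(\grad u_c\cdot\nu)\,\tilde{u}_0\,ds=\lambda\int_\Omega\tilde{u}_0=0$, the last equality because $\tilde{u}_0$ has zero mean. Since $u_c\in H^2$ makes $\grad u_c\cdot\nu\in L^2(\partial\Omega)$, and since the trace operator maps $\mathring{H}^1$ onto $H^{1/2}(\partial\Omega)$, which is dense in $L^2(\partial\Omega)$, this forces $\grad u_c\cdot\nu=0$ on $\partial\Omega$.

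For part (ii), the new ingredient is that the nonlocal term inherits a homogeneous Neumann condition of its own. When $u_c\in\mathring{H}^4_m$, the same Neumann elliptic-regularity bootstrap used in Proposition~\ref{prop:regularity} gives $w_c\in H^2$, and the defining identity \eqref{def:lap_inv_weak} together with an integration by parts — identical to the argument in part (i) — yields $-\lap w_c=u_c-m$ a.e.\ in $\Omega$ and $\grad w_c\cdot\nu=0$ on $\partial\Omega$. In $2$d and $3$d, $u_c\in H^4$ provides enough smoothness (via Sobolev multiplication, $H^4$ being a Banach algebra there) that $\lap u_c\in H^2$ and the Euler--Lagrange equation may be differentiated in $H^1$:
\[ \grad(\lap u_c)=-\tfrac{1}{\epsilon^2}\big(\kappa W''(u_c)\grad u_c+\sigma\grad w_c\big). \]
Taking the normal component on $\partial\Omega$, the right-hand side vanishes termwise — the first term by part (i), the second because $\grad w_c\cdot\nu=0$ — so $\grad(\lap u_c)\cdot\nu=0$ on $\partial\Omega$.

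The difficulties I anticipate are organizational rather than conceptual. The first is the clean extraction of the multiplier $\lambda$: since $C_c^\infty(\Omega)\not\subset\mathring{H}^1$, one must restrict to zero-mean compactly supported test functions (or peel off the mean via $H^1=\mathring{H}^1\oplus\mathcal{C}$) before asserting a pointwise PDE, and then verify that the boundary integral in part (i) reduces to a multiple of $\int_\Omega\tilde{u}_0$, hence vanishes on $\mathring{H}^1$. The second is recognizing the feature specific to the nonlocal term — that $w_c$ itself solves a homogeneous Neumann problem — which is the only point where the precise definition \eqref{def:lap_inv_weak} of $(-\lap_N)^{-1}$ enters; this must be combined with the regularity ($u_c\in H^4$, Sobolev products, $w_c\in H^2$) required for the normal traces $\grad u_c\cdot\nu$, $\grad w_c\cdot\nu$, and $\grad(\lap u_c)\cdot\nu$ to be well-defined functions in $L^2(\partial\Omega)$.
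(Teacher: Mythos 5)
Your proof is correct. For part (i) it is essentially the paper's argument: the paper also identifies $u_c$ as an $H^2$ weak solution of a pure Neumann Poisson problem (absorbing the multiplier constant into the right-hand side, as in the proof of Proposition \ref{prop:regularity}) and kills the boundary term via Green's identity; your version keeps the test space at $\mathring{H}^1$ and notes that the boundary integral equals $\lambda\int_\Omega\tilde{u}_0=0$, which is the same maneuver. (Minor caveat: your formula $\lambda=\tfrac{\kappa}{|\Omega|}\int_\Omega W'(u_c)$ is only justified \emph{after} (i) is proved, since a priori $\int_\Omega\lap u_c=\int_{\partial\Omega}\grad u_c\cdot\nu$ need not vanish; but nothing in the argument uses the value of $\lambda$, only that it is constant and that $\tilde{u}_0$ has zero mean.) For part (ii) you take a genuinely different, more self-contained route. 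The paper invokes Theorem \ref{thm:h-1_1st} to obtain the weak fourth-order identity $A(u_c)\equiv 0$ over $H^1$ test functions, integrates by parts, and reads off the boundary relation $\big(\epsilon^2\grad(\lap u_c)\cdot\nu,\tilde{u}\big)_{L^2(\partial\Omega)}+\big(\kappa W''(u_c)\grad u_c\cdot\nu,\tilde{u}\big)_{L^2(\partial\Omega)}=0$, whose second term dies by (i); the nonlocal term never produces a boundary contribution because it enters $A$ with no derivatives. You instead differentiate the pointwise second-order Euler--Lagrange equation and take normal traces, which obliges you to prove separately that $w_c=(-\lap_N)^{-1}R(u_c-m)$ satisfies $\grad w_c\cdot\nu=0$ --- a fact the paper's route never needs, and the one place where the definition \eqref{def:lap_inv_weak} enters your argument. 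Both are valid: the paper's version reuses machinery it has already built, while yours avoids Theorem \ref{thm:h-1_1st} entirely at the cost of the extra Neumann condition on $w_c$ and the regularity bookkeeping ($\lap u_c\in H^2$, $W'(u_c)$ in the algebra, $w_c\in H^2$) needed to make the normal traces meaningful, which you do supply.
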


 \subsection{The \texorpdfstring{$\mathring{H}^{-1}$}{} first-order condition}
Assume $u\in \mathring{H}^3_m$ and $\grad u\cdot\nu=0$ on $\partial\Omega$. We consider the equivalent representation of $DF(u)$ in $\mathring{H}^1$, in the sense that $\big\langle DF(u), \tilde{u}_0\big\rangle = \big(G(u), \tilde{u}_0\big)\;\forall \tilde{u}_0\in \mathring{H}^{1}$:
\begin{equation}\label{def:h1_df}
    G(u) = \kappa W'(u) - \epsilon^2\lap u + \sigma (-\lap_N)^{-1}R(u-m) - s_1\;,
\end{equation}
where $s_1 = (kW'(u) - \epsilon^2\lap u,1)$. \\
\indent With $R(\tilde{u}_0)\in \mathring{H}^{-1}$, define $\tilde{v}_0\in\mathring{H}^1$ such that $\tilde{v}_0 = (-\lap_N)^{-1}R(\tilde{u}_0)$. We invoke the definition of the $\mathring{H}^{-1}$ inner product \eqref{eq:h-1_inner_prod}:
\begin{subequations}\label{eq:1st_h-1_inner_prod}
\begin{gather}
    \big\langle DF(u), \tilde{u}_0\big\rangle = \big\langle R(\tilde{u}_0), G(u)\big\rangle = \big(G(u), \tilde{v}_0\big)_{\mathring{H}^{1}}\;,\\
    \big(G(u), \tilde{v}_0\big)_{\mathring{H}^{1}} = \big(\kappa \grad W'(u), \grad \tilde{v}_0\big) - \big(\epsilon^2\grad \lap u, \grad \tilde{v}_0\big) + \big(\sigma (u-m), \tilde{v}_0)\;.
\end{gather}
\end{subequations}
Let us define $A: H^3\to (H^1)^*$ such that 
\begin{equation}\label{def:h-1_1st_A}
\big\langle A(u), \tilde{u}\big\rangle \coloneqq \big(\kappa \grad W'(u), \grad \tilde{u}\big) - \big(\epsilon^2\grad \lap u, \grad \tilde{u}\big) + \big(\sigma (u-m), \tilde{u})\;.
\end{equation}
Notice that the operator $A$ identifies the subset of its domain $\mathring{H}^3_m$ with $\mathring{H}^{-1}$, as $\mathcal{C}$ is in the kernel of the gradient operator and $R(\mathring{H}^1)$. This leads to the following lemma.
\begin{lemma}\label{lemma:H-1_1st_cond}
    We have $A(u)\in\mathring{H}^{-1}$ if and only if $u\in \mathring{H}^3_m$.
\end{lemma}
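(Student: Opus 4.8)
The plan is to reduce the membership $A(u)\in\mathring{H}^{-1}$ to a single scalar condition on the spatial average of $u-m$, and then read off the equivalence. First I would record the preliminary fact, already implicit in the stated mapping property $A:H^3\to(H^1)^*$, that $A(u)$ is a genuine bounded linear functional on $H^1$ for every $u\in H^3$: in \eqref{def:h-1_1st_A} the two gradient terms are bounded by $\norm{\grad\tilde u}$ once one notes $\grad W'(u)=(3u^2-1)\grad u\in L^2$ (which follows from the Sobolev embedding $H^3\hookrightarrow W^{1,\infty}$ valid in the space dimensions considered, so that $(3u^2-1)\grad u\in L^\infty\subset L^2$ on the bounded domain $\Omega$) and $\grad\lap u\in L^2$, while the last term is bounded by $\norm{\tilde u}_{L^2}$. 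So the genuine question is which of these three functionals annihilate $\mathcal{C}$.

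Next I would invoke the definition of $\mathring{H}^{-1}$: a functional $f\in(H^1)^*$ lies in $\mathring{H}^{-1}$ precisely when $\langle f,c\rangle=0$ for all $c\in\mathcal{C}$, and since $\mathcal{C}=\{c\cdot 1:c\in\R\}$ is spanned by the constant function $1$, it suffices to compute $\langle A(u),1\rangle$. Setting $\tilde u=1$ in \eqref{def:h-1_1st_A}, the two gradient contributions vanish identically because $\grad 1=0$, leaving
\begin{equation*}
    \big\langle A(u),1\big\rangle=\big(\sigma(u-m),1\big)=\sigma\int_\Omega (u-m)\,d\vect{x}.
\end{equation*}
Phrased the way the remark preceding the lemma does: the gradient terms automatically belong to $\mathring{H}^{-1}$ since $\mathcal{C}$ lies in the kernel of $\grad$, whereas the term built from $(\sigma(u-m),\cdot)$ (i.e.\ $\sigma R(u-m)$ restricted to $H^1$) belongs to $\mathring{H}^{-1}$ if and only if $u-m\in\mathring{H}^1$, that is, $R(u-m)\in R(\mathring{H}^1)$.

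Finally I would read off the equivalence. If $A(u)\in\mathring{H}^{-1}$, the displayed identity forces $\int_\Omega(u-m)\,d\vect{x}=0$, i.e.\ $u-m\in\mathring{H}^1$; together with $u\in H^3$ (the domain of $A$) this gives $u\in\mathring{H}^1_m\cap H^3=\mathring{H}^3_m$. Conversely, if $u\in\mathring{H}^3_m$ then $u\in H^3$, so $A(u)$ is a well-defined element of $(H^1)^*$, and $\big\langle A(u),c\big\rangle=\sigma c\int_\Omega(u-m)\,d\vect{x}=0$ for every $c\in\mathcal{C}$, whence $A(u)\in\mathring{H}^{-1}$.

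I do not expect a genuine obstacle: the entire content is the bookkeeping identity $\langle A(u),1\rangle=\sigma\int_\Omega(u-m)\,d\vect{x}$ combined with the description of $\mathring{H}^{-1}$ as the annihilator of $\mathcal{C}$. The only two points deserving a sentence of care are the preliminary well-definedness of $A(u)$ on all of $H^3$, which relies on the Sobolev embedding taming the cubic nonlinearity $W'$, and the harmless observation that $\mathcal{C}$ is one-dimensional, so that testing against the single function $1$ suffices.
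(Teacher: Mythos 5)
Your proof is correct and is essentially the paper's own argument, fully written out: the paper justifies the lemma in a single preceding sentence (``$\mathcal{C}$ is in the kernel of the gradient operator and $R(\mathring{H}^1)$''), which is exactly your computation that testing $A(u)$ against constants kills the two gradient terms and leaves $\sigma\int_\Omega(u-m)\,d\vect{x}$. Your added remarks on the well-definedness of $A(u)$ for $u\in H^3$ and on $\mathcal{C}$ being spanned by $1$ are harmless elaborations, not a different route.
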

The following theorem establishes a connection between the kernel of $DF$ and $A$ that gives rise to the $\mathring{H}^{-1}$ first-order condition, under which the solution space and the test space are both extended to $H^1$:
\begin{theorem}[The $\mathring{H}^{-1}$ first-order condition]\label{thm:h-1_1st}
We have $u_c\in \mathring{H}^1_m$ and $DF(u_c)\equiv 0$ if and only if $u_c\in H^3$ and $A(u_c)\equiv 0$.
\end{theorem}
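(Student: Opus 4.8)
I would prove the two implications separately, using in both the transformation $\tilde{u}_0\mapsto\tilde{v}_0=(-\lap_N)^{-1}R(\tilde{u}_0)$ from \eqref{eq:1st_h-1_inner_prod} as the bridge, and using Lemma \ref{lemma:H-1_1st_cond} to keep track of the mean-zero condition.

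\emph{($\Rightarrow$).} Suppose $u_c\in\mathring{H}^1_m$ with $DF(u_c)\equiv 0$. Proposition \ref{prop:regularity} (together with the standing assumption on $\partial\Omega$) gives $u_c\in\mathring{H}^3_m\subset H^3$, and Proposition \ref{prop:bc}(i) gives $\grad u_c\cdot\nu=0$ on $\partial\Omega$. Hence $u_c$ satisfies exactly the hypotheses under which $G(u)$ in \eqref{def:h1_df} is the $L^2$-representative of $DF(u)$, so for each $\tilde{u}_0\in\mathring{H}^1$, writing $\tilde{v}_0=(-\lap_N)^{-1}R(\tilde{u}_0)\in\mathring{H}^1$, the identities \eqref{eq:1st_h-1_inner_prod} and the definition \eqref{def:h-1_1st_A} of $A$ give
\[
\langle A(u_c),\tilde{v}_0\rangle=(G(u_c),\tilde{v}_0)_{\mathring{H}^1}=\langle DF(u_c),\tilde{u}_0\rangle=0 .
\]
As $\tilde{u}_0$ runs over $\mathring{H}^1$, $\tilde{v}_0$ runs over $(-\lap_N)^{-1}R(\mathring{H}^1)$, which is dense in $\mathring{H}^1$: $(-\lap_N)^{-1}$ is an isomorphism onto $\mathring{H}^1$, and $R(\mathring{H}^1)$ is dense in $\mathring{H}^{-1}$ since $\mathring{H}^1$ is dense in the mean-zero subspace of $L^2$, which embeds densely in $\mathring{H}^{-1}$ via $R$. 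By Lemma \ref{lemma:H-1_1st_cond}, $A(u_c)\in\mathring{H}^{-1}$, i.e.\ $A(u_c)$ is a bounded functional on $H^1$ that annihilates $\mathcal{C}$; as it also vanishes on a dense subspace of $\mathring{H}^1$, it vanishes on all of $H^1=\mathring{H}^1\oplus\mathcal{C}$, so $A(u_c)\equiv 0$.

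\emph{($\Leftarrow$).} Suppose $u_c\in H^3$ with $A(u_c)\equiv 0$. Since $0\in\mathring{H}^{-1}$, Lemma \ref{lemma:H-1_1st_cond} gives $u_c\in\mathring{H}^3_m\subset\mathring{H}^1_m$, so that $DF(u_c)$ is defined and $u_c-m$ has zero mean. It remains to show $DF(u_c)\equiv 0$, and the idea is to run the computation \eqref{eq:1st_h-1_inner_prod} in reverse: once it is known that $\grad u_c\cdot\nu=0$ on $\partial\Omega$ — the precise condition that makes $G(u_c)$ the $L^2$-representative of $DF(u_c)$ — the same chain gives, for each $\tilde{u}_0\in\mathring{H}^1$ and $\tilde{v}_0=(-\lap_N)^{-1}R(\tilde{u}_0)$,
\[
\langle DF(u_c),\tilde{u}_0\rangle=(G(u_c),\tilde{v}_0)_{\mathring{H}^1}=\langle A(u_c),\tilde{v}_0\rangle=0 ,
\]
and since $\tilde{u}_0\in\mathring{H}^1$ is arbitrary, $DF(u_c)\equiv 0$. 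The substantive point is therefore to extract $\grad u_c\cdot\nu=0$ from $A(u_c)\equiv 0$. I would start as in Section~\ref{subsec:regularity}: testing $A(u_c)\equiv 0$ over $H^1$ shows that $\kappa W'(u_c)-\epsilon^2\lap u_c$ weakly solves a pure-Neumann Poisson problem with the mean-zero datum $-\sigma(u_c-m)\in L^2$, equivalently that $\kappa W'(u_c)-\epsilon^2\lap u_c+\sigma(-\lap_N)^{-1}R(u_c-m)$ is constant; rearranging, $u_c$ solves a Poisson equation with $H^1$ right-hand side, and one then appeals to Neumann elliptic regularity under the standing assumption on $\partial\Omega$ to read off the homogeneous Neumann condition on $u_c$.

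The identity manipulations in \eqref{eq:1st_h-1_inner_prod} are routine, relying only on the weak definition \eqref{def:lap_inv_weak} of $(-\lap_N)^{-1}$ and the symmetry of $(\cdot,\cdot)_{\mathring{H}^{-1}}$. Two points need care. In $(\Rightarrow)$, one cannot skip the density of $(-\lap_N)^{-1}R(\mathring{H}^1)$ in $\mathring{H}^1$, since a priori the argument only controls $A(u_c)$ on that proper subspace, and Lemma \ref{lemma:H-1_1st_cond} is exactly what is needed to also kill the constant mode. The main obstacle, however, is in $(\Leftarrow)$: recovering $\grad u_c\cdot\nu=0$ from $A(u_c)\equiv 0$ is the step that genuinely reconnects the $\mathring{H}^{-1}$ formulation with the original $\mathring{H}^1$ variational problem, so I would expect the bulk of the work there, leaning on the regularity analysis of Section~\ref{subsec:regularity} (Propositions \ref{prop:regularity} and \ref{prop:bc}) and Neumann elliptic estimates rather than a purely formal reversal of \eqref{eq:1st_h-1_inner_prod}.
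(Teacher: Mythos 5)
Your forward direction matches the paper's: both pass from $DF(u_c)\equiv 0$ to $\langle A(u_c),\tilde v_0\rangle=0$ for $\tilde v_0\in(-\lap_N)^{-1}R(\mathring{H}^1)$ via \eqref{def:h1_df}--\eqref{eq:1st_h-1_inner_prod} and then enlarge the test space to $H^1$. Your explicit appeal to Propositions \ref{prop:regularity} and \ref{prop:bc} to license the representation $G(u_c)$, and your density argument for $(-\lap_N)^{-1}R(\mathring{H}^1)$ in $\mathring{H}^1$ combined with Lemma \ref{lemma:H-1_1st_cond} to kill the constant mode, are a more careful rendering of what the paper dispatches as ``trivially extended.'' That half is sound.

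The gap is in ($\Leftarrow$), at exactly the step you single out as the crux. You propose to recover $\grad u_c\cdot\nu=0$ by observing that $A(u_c)\equiv 0$ forces $\epsilon^2\lap u_c=\kappa W'(u_c)+\sigma(-\lap_N)^{-1}R(u_c-m)-\mathrm{const}$ and then ``appealing to Neumann elliptic regularity to read off the homogeneous Neumann condition on $u_c$.'' That does not work: elliptic regularity upgrades the smoothness of a function already known to solve a boundary-value problem, but an interior identity $\lap u_c=h$ carries no boundary data, and a function can satisfy it with arbitrary Neumann trace. What testing $A(u_c)\equiv 0$ against all of $H^1$ actually yields, after integration by parts, is the single natural boundary condition $\kappa W''(u_c)\,\grad u_c\cdot\nu-\epsilon^2\grad(\lap u_c)\cdot\nu=0$ on $\partial\Omega$ (the Neumann condition on $g=\kappa W'(u_c)-\epsilon^2\lap u_c$), not the two separate conditions $\grad u_c\cdot\nu=0$ and $\grad(\lap u_c)\cdot\nu=0$ appearing in the strong form of the remark following the theorem. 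Consequently the reversal of \eqref{eq:1st_h-1_inner_prod} leaves the identity $\langle A(u_c),\tilde v_0\rangle=\langle DF(u_c),\tilde u_0\rangle-\epsilon^2\int_{\partial\Omega}(\grad u_c\cdot\nu)\,\tilde u_0\,ds$, with a boundary term you have not shown to vanish. To be fair, the paper's own proof of this direction is no more explicit---it simply ``reverses the arguments'' of \eqref{eq:1st_h-1_inner_prod}, which were derived under the standing assumption $\grad u\cdot\nu=0$ announced at the top of the subsection---so you have correctly located the load-bearing step; but the mechanism you propose for supplying it would fail, and as written this half of the equivalence remains unproved.
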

\begin{proof}
    ($\Rightarrow$) Equation \eqref{def:h1_df} and \eqref{eq:1st_h-1_inner_prod} lead to $\big\langle A(u_c), \tilde{v}_0\big\rangle=0\;\forall \tilde{v}_0\in\big\{ v\in \mathring{H}^3:\grad v\cdot \nu = 0 \text{ on }\partial\Omega\big\}$. The test space can be trivially extended to $H^1$, as no high-order derivative or boundary term appears in $A$ and $\mathcal{C}$ is in the kernel of $\mathring{H}^{-1}$.\\ 
    \indent ($\Leftarrow$) If $u_c\in H^3$ and $A(u_c)\equiv 0$, then $u_c\in\mathring{H}^3_m$ by Lemma \ref{lemma:H-1_1st_cond}. If $A(u_c)$ is restricted to $(-\lap_N)^{-1}R(\mathring{H}^1)$, it is clear that $u_c$ satisfies $DF(u_c)\equiv 0$ by reversing the arguments that lead to \eqref{eq:1st_h-1_inner_prod}. 
\end{proof}
\begin{remark}
    The strong form of the $\mathring{H}^{-1}$ first order condition is 
    \begin{subequations}
    \begin{align}
        \lap(\kappa W'(u_c) - \epsilon^2\lap u_c) -\sigma(u_c-m) &= 0 && \text{in }\Omega\;;\\
        \grad u_c \cdot\nu = \grad(\lap u_c)\cdot\nu &= 0 &&\text{on }\partial\Omega\;.
    \end{align}
    \end{subequations}
    These equations characterize the steady state of the nonlocal Cahn-Hilliard equation. The strong form above is equivalent to the $\mathring{H}^1$ first-order condition only if there is enough regularity to claim $u_c\in\mathring{H}^4_m$ if $DF(u_c)\equiv 0$.
\end{remark}

 \subsection{The mixed \texorpdfstring{$\mathring{H}^{-1}$}{} first-order condition}
\indent We define the following operator $B:H^1 \times H^1\to (H^1)^*\times (H^1)^*$:
\begin{equation}\label{eq:h-1_1st_mixed}
    \big\langle B(u,\mu), (\tilde{u},\tilde{\mu})\big\rangle\coloneqq \Big((\grad \mu, \grad\tilde{u}) + (\sigma (u-m),\tilde{u}), (\mu, \tilde{\mu})-\big(\kappa W'(u), \tilde{\mu}\big) - (\epsilon^2\grad u, \grad\tilde{\mu})\Big)\;,\quad\forall\tilde{u},\tilde{\mu}\in H^1\;.
\end{equation}
The following theorem identifies the kernel of $DF$ with the the kernel of $B$.
\begin{theorem}[The mixed $\mathring{H}^{-1}$ first-order condition]\label{thm:mixed_h-1_1st}
We have $B(u_c, \mu_c) \equiv 0$ if and only if $u_c\in \mathring{H}^1_m$ and $DF(u_c)\equiv 0$.
\end{theorem}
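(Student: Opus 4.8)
The plan is to split the equation $B(u_c,\mu_c)\equiv 0$ into its two scalar components, read the second one as the definition of the auxiliary ``chemical potential'' $\mu_c$ in terms of the local part $\kappa W'(u_c)-\epsilon^2\lap u_c$ of the first variation, and read the first one as the Poisson-type relation $-\lap\mu_c=-\sigma(u_c-m)$ that reintroduces the nonlocal term $\sigma(-\lap_N)^{-1}R(u_c-m)$ present in $DF$. In other words, one shows directly that $B(u_c,\mu_c)\equiv 0$ is the standard mixed reformulation of the first-order condition, using \eqref{def:lap_inv_weak}, \eqref{eq:1st_var}, the regularity results of Section \ref{subsec:regularity}, and Theorem \ref{thm:h-1_1st}.

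For the forward direction ($\Rightarrow$), suppose $B(u_c,\mu_c)\equiv 0$, so $u_c,\mu_c\in H^1$ and, writing out \eqref{eq:h-1_1st_mixed}, (a) $(\grad\mu_c,\grad\tilde u)+\sigma(u_c-m,\tilde u)=0$ for all $\tilde u\in H^1$ and (b) $(\mu_c,\tilde\mu)=\kappa(W'(u_c),\tilde\mu)+\epsilon^2(\grad u_c,\grad\tilde\mu)$ for all $\tilde\mu\in H^1$. Testing (a) with $\tilde u\equiv 1$ gives $\int_\Omega(u_c-m)\,d\vect{x}=0$, hence $u_c\in\mathring{H}^1_m$ and $R(u_c-m)\in\mathring{H}^{-1}$. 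Setting $\bar\mu_c:=\mu_c-|\Omega|^{-1}\int_\Omega\mu_c\,d\vect{x}\in\mathring{H}^1$, equation (a) reads $(\grad\bar\mu_c,\grad\tilde u)=\langle-\sigma R(u_c-m),\tilde u\rangle$ for all $\tilde u\in H^1$, so by the defining property \eqref{def:lap_inv_weak} of $(-\lap_N)^{-1}$ and its bijectivity, $\bar\mu_c=-\sigma(-\lap_N)^{-1}R(u_c-m)$. Now for $\tilde u_0\in\mathring{H}^1$, (b) with $\tilde\mu=\tilde u_0$ gives $\kappa(W'(u_c),\tilde u_0)+\epsilon^2(\grad u_c,\grad\tilde u_0)=(\mu_c,\tilde u_0)=(\bar\mu_c,\tilde u_0)$, the constant dropping against the zero-average test function; substituting into \eqref{eq:1st_var} then yields $\langle DF(u_c),\tilde u_0\rangle=(\bar\mu_c,\tilde u_0)+\sigma\big((-\lap_N)^{-1}R(u_c-m),\tilde u_0\big)=0$, so $DF(u_c)\equiv 0$.

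For the reverse direction ($\Leftarrow$), suppose $u_c\in\mathring{H}^1_m$ with $DF(u_c)\equiv 0$. By Proposition \ref{prop:regularity} and the standing assumption on $\partial\Omega$ we have $u_c\in\mathring{H}^3_m$; by Proposition \ref{prop:bc} (item (i)), $\grad u_c\cdot\nu=0$ on $\partial\Omega$; and by Theorem \ref{thm:h-1_1st}, $A(u_c)\equiv 0$ with $A$ as in \eqref{def:h-1_1st_A}. Define $\mu_c:=\kappa W'(u_c)-\epsilon^2\lap u_c$; since $\lap u_c\in H^1$ and, by the regularity of $u_c$, $\grad W'(u_c)=(3u_c^2-1)\grad u_c\in L^2$, we have $\mu_c\in H^1$, so $(u_c,\mu_c)$ lies in the domain of $B$. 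Component (b) follows from Green's identity and $\grad u_c\cdot\nu=0$: $(\mu_c,\tilde\mu)-\kappa(W'(u_c),\tilde\mu)-\epsilon^2(\grad u_c,\grad\tilde\mu)=-\epsilon^2(\lap u_c,\tilde\mu)-\epsilon^2(\grad u_c,\grad\tilde\mu)=0$ for all $\tilde\mu\in H^1$. Component (a) follows from $A(u_c)\equiv 0$: $(\grad\mu_c,\grad\tilde u)+\sigma(u_c-m,\tilde u)=\kappa(\grad W'(u_c),\grad\tilde u)-\epsilon^2(\grad\lap u_c,\grad\tilde u)+\sigma(u_c-m,\tilde u)=\langle A(u_c),\tilde u\rangle=0$ for all $\tilde u\in H^1$. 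Hence $B(u_c,\mu_c)\equiv 0$.

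The main obstacle is the dependence of the reverse direction on the regularity theory of Section \ref{subsec:regularity}: one genuinely needs $u_c\in\mathring{H}^3_m$ so that $\lap u_c$, and therefore $\mu_c$, lies in $H^1$ (otherwise the pair is not even admissible for $B$), and one needs $\grad u_c\cdot\nu=0$ so that the boundary term in Green's identity for component (b) vanishes — these are exactly Propositions \ref{prop:regularity} and \ref{prop:bc} together with Theorem \ref{thm:h-1_1st}. The forward direction is largely bookkeeping; its one subtle point is that $B$ involves $\mu_c$ itself whereas $DF$ involves the zero-average function $(-\lap_N)^{-1}R(u_c-m)$, and these differ precisely by the constant $|\Omega|^{-1}\int_\Omega\mu_c$, which is annihilated by the zero-average test functions $\tilde u_0\in\mathring{H}^1$ — this is what makes the two formulations coincide.
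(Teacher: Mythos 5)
Your proof is correct and follows essentially the same route as the paper's: in the forward direction you identify the zero-mean part of $\mu_c$ with $-\sigma(-\lap_N)^{-1}R(u_c-m)$ via the weak Neumann Poisson problem and combine this with the second component of $B$ to recover $DF(u_c)\equiv 0$, and in the reverse direction you route through Theorem \ref{thm:h-1_1st} and set $\mu_c=\kappa W'(u_c)-\epsilon^2\lap u_c$. You merely supply more detail than the paper (the explicit regularity needed for $\mu_c\in H^1$, and Green's identity with $\grad u_c\cdot\nu=0$ for the second component), and your sign in the relation $\mu_c+\sigma(-\lap_N)^{-1}R(u_c-m)=\mathrm{const}$ is the correct one, whereas the corresponding display in the paper's proof carries a sign typo.
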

\begin{proof}

     \indent ($\Rightarrow$) $u_c\in\mathring{H}^1_m$ is implied by taking $\tilde{u}\in\mathcal{C}$. Consider subtracting $s_2=(\kappa W'(u_c),1)$ from $\mu_c$ such that $\mu_c-s_2\in \mathring{H}^1$ and restricting $\tilde{u}$ to be in $(-\lap_N)^{-1}R(\mathring{H}^1)$. Then we invoke the definition of the $\mathring{H}^{-1}$ inner-product \eqref{eq:h-1_inner_prod} to arrive at
    \begin{equation*}
        \big(\mu_c - \sigma (-\lap_N)^{-1}(u_c-m), \tilde{u}_0\big) = 0\quad\forall \tilde{u}_0\in\mathring{H}^1\;.
    \end{equation*}
    Additionally, restricting $\tilde{\mu}$ to $\mathring{H}^1$ gives us
    \begin{equation*}
        (\mu_c,\tilde{u}_0)=(\kappa W'(u_c),\tilde{u}_0) + (\epsilon^2\grad u_c, \grad \tilde{u}_0)\quad\forall \tilde{u}_0 \in\mathring{H}^1\;.
    \end{equation*}
    Combining the two equations above, we have $DF(u_c)\equiv 0$.\\
    \indent ($\Leftarrow$) Theorem \ref{thm:h-1_1st} established the equivalence between $DF(u_c)\equiv 0$ and $A(u_c)\equiv 0$. $A(u_c)\equiv 0$ implies $B(u_c,\mu_c)\equiv 0$ by defining $\mu_c = \kappa W'(u_c) - \epsilon^2\lap u_c$.
\end{proof}

\section{The Newton Iteration for Energy Minimization}
Given $u_0\in \mathring{H}_m^1$, consider the sequence of functions $\{u_n\}_{n=1}^\infty$ generated by a Newton method with line search globalization for minimizng the OK energy:
\begin{equation}\label{eq:newton_seq}
    u_{n+1} = u_n + t_n\delta u_n\;\quad \forall n\in\N\cup\{0\}\;,
\end{equation}
where the step length $t_n\in(0,1]$ and the Newton step $\delta u_n$ is given by the following variational problem in $\mathring{H}^1$:
\begin{align}\label{eq:newton_step_original}
    \text{Find } \delta u_n\in \mathring{H}^1 \text{ such that}: &&\big\langle D^2F(u_n)\delta u_n, \tilde{u}_0\big\rangle = - \big\langle DF(u_n), \tilde{u}_0 \big\rangle\quad\forall \tilde{u}_0\in\mathring{H}^1\;.
\end{align}
Now consider the $\mathring{H}^{-1}$ variational Newton step problem defined with $H^1$ test spaces:
\begin{subequations}\label{eq:h-1_newton_step_var}
\begin{align}
&\text{Find }(\delta u_n,\hat{\mu})\in H^1\times H^1 \text{ such that:}\nonumber\\
    &\qquad\qquad(\grad\hat{\mu}_n, \grad\tilde{u}_n) + (\sigma\delta u_n, \tilde{u}) = -(\grad \mu_n, \grad\tilde{u}) - \big(\sigma(u_n-m),\tilde{u})\big)&&\forall \tilde{u}\in H^1\;,\label{eq:h-1_newton_step_delta_u}\\
    &\qquad\qquad(\hat{\mu}_n, \tilde{\mu}) - \big(\kappa W''(u_n)\delta u_n, \tilde{\mu}\big) - \big(\epsilon^2 \grad (\delta u_n), \grad\tilde{\mu}\big) = 0  &&\forall\tilde{\mu}\in H^1\;.\label{eq:h-1_newton_step_mu_hat}
\end{align}
\end{subequations}
with $\mu_n\in H^1$ subject to
\begin{equation}\label{eq:mu_n_var}
    (\mu_n, \tilde{\mu}) = \big(\kappa W'(u_n), \tilde{\mu}\big) + (\epsilon^2\grad u_n, \grad\tilde{\mu})) \quad\forall\tilde{\mu}\in H^1\;.
\end{equation}
\indent The following proposition shows that, if $\delta u_n$ solves the $\mathring{H}^{-1}$ variational Newton step problem at $u_n\in\mathring{H}^1_m$, then it also solves the corresponding $\mathring{H}^1$ problem at $u_n$.
\begin{proposition}\label{prop:equiv_newton_step}
    Suppose $\delta u$ is a solution to the $\mathring{H}^{-1}$ Newton step problem (\eqref{eq:h-1_newton_step_var} and \eqref{eq:mu_n_var}) at $u_n\in H^1$. Then
    \begin{enumerate}[label = (\roman*)]
        \item $\delta u_n\in\mathring{H}^1$ if and only if $u_n\in\mathring{H}^1_m$.
        \item if $u_n\in \mathring{H}^1_m$, then $\delta u_n$ also solves the $\mathring{H}^1$ variational Newton step problem \eqref{eq:newton_step_original}.
    \end{enumerate}
\end{proposition}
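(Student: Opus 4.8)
The plan is to prove the two claims in order, exploiting the structure of the mixed system \eqref{eq:h-1_newton_step_var} together with the already-established equivalences for the first-order condition. For part (i), the key observation is that \eqref{eq:h-1_newton_step_delta_u} is a pure-Neumann Poisson problem for $\hat\mu_n$ with source term involving $\sigma\delta u_n$ and $\sigma(u_n-m)$, so testing against constants $\tilde u = c\in\mathcal{C}$ kills all gradient terms and forces a solvability (compatibility) condition. First I would substitute $\tilde u\equiv 1$ into \eqref{eq:h-1_newton_step_delta_u}: the two gradient pairings vanish, leaving $\sigma\int_\Omega \delta u_n\,d\vect{x} = -\sigma\int_\Omega(u_n-m)\,d\vect{x}$, i.e. $\int_\Omega \delta u_n = -\int_\Omega (u_n - m)$. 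Since $\sigma>0$ this is an exact equality, and it immediately yields the biconditional: $\int_\Omega \delta u_n = 0$ (equivalently $\delta u_n\in\mathring{H}^1$) holds if and only if $\int_\Omega(u_n-m)=0$ (equivalently $u_n - m\in\mathring{H}^1$, i.e. $u_n\in\mathring{H}^1_m$). This is the cleanest route; no regularity of $u_n$ beyond $H^1$ is needed.

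For part (ii), assume $u_n\in\mathring{H}^1_m$, so by (i) we have $\delta u_n\in\mathring{H}^1$. The plan is to eliminate $\hat\mu_n$ from the mixed system and recover exactly the $\mathring{H}^1$ Newton equation \eqref{eq:newton_step_original}, mirroring the argument used in the proof of Theorem \ref{thm:mixed_h-1_1st}. First I would subtract the appropriate constant from $\hat\mu_n$ and from $\mu_n$ so that both lie in $\mathring{H}^1$: set $s = \big(\kappa W''(u_n)\delta u_n + \epsilon^2(\text{boundary-free terms}),1\big)$ implicitly via \eqref{eq:h-1_newton_step_mu_hat} with $\tilde\mu\equiv 1$, giving the average of $\hat\mu_n$; then restrict the test function $\tilde u$ in \eqref{eq:h-1_newton_step_delta_u} to $(-\lap_N)^{-1}R(\mathring{H}^1)$, write $\tilde u = (-\lap_N)^{-1}R(\tilde u_0)$ for $\tilde u_0\in\mathring{H}^1$, and invoke the definition \eqref{eq:h-1_inner_prod} of the $\mathring{H}^{-1}$ inner product to convert $(\grad\hat\mu_n,\grad\tilde u)$ into $\langle R(\tilde u_0),\hat\mu_n\rangle = (\hat\mu_n,\tilde u_0)$ and $(\sigma\delta u_n,\tilde u)$ into $\big(\sigma(-\lap_N)^{-1}R(\delta u_n),\tilde u_0\big)$, and similarly for the right-hand side. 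This produces
\begin{equation*}
    (\hat\mu_n,\tilde u_0) + \big(\sigma(-\lap_N)^{-1}R(\delta u_n),\tilde u_0\big) = -\big(\mu_n,\tilde u_0\big) - \big(\sigma(-\lap_N)^{-1}R(u_n-m),\tilde u_0\big)\quad\forall\tilde u_0\in\mathring{H}^1\;.
\end{equation*}
Next I would use \eqref{eq:h-1_newton_step_mu_hat} restricted to $\tilde\mu\in\mathring{H}^1$ to express $(\hat\mu_n,\tilde u_0) = \big(\kappa W''(u_n)\delta u_n,\tilde u_0\big) + \big(\epsilon^2\grad(\delta u_n),\grad\tilde u_0\big)$, and \eqref{eq:mu_n_var} restricted to $\tilde\mu\in\mathring{H}^1$ to express $(\mu_n,\tilde u_0) = \big(\kappa W'(u_n),\tilde u_0\big) + \big(\epsilon^2\grad u_n,\grad\tilde u_0\big)$. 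Substituting both in and comparing with \eqref{eq:2nd_var} and \eqref{eq:1st_var}, the left-hand side is exactly $\langle D^2F(u_n)\delta u_n,\tilde u_0\rangle$ and the right-hand side is exactly $-\langle DF(u_n),\tilde u_0\rangle$, which is \eqref{eq:newton_step_original}.

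The main obstacle is bookkeeping the constant shifts and making sure the averaging arguments are legitimate: one must check that $\hat\mu_n$ and $\mu_n$ indeed have well-defined spatial averages (true since they lie in $H^1\subset L^1$) and that subtracting constants does not disturb the pairings against $\tilde u_0\in\mathring{H}^1$ (true since $(\text{const},\tilde u_0)=0$ and $(\text{const},\cdot)$ vanishes under $R$ restricted to $\mathring{H}^1$). A secondary subtlety is that \eqref{eq:h-1_newton_step_delta_u} must be restricted to the range $(-\lap_N)^{-1}R(\mathring{H}^1)$ rather than all of $H^1$ when recovering the $\mathring{H}^1$ form — this is legitimate precisely because $(-\lap_N)^{-1}R$ maps $\mathring{H}^1$ onto $\mathring{H}^1$ bijectively, so ranging $\tilde u_0$ over $\mathring{H}^1$ recovers the full test space of \eqref{eq:newton_step_original}. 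Everything else is a routine matching of terms against the definitions \eqref{eq:1st_var} and \eqref{eq:2nd_var}.
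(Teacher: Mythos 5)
Your proposal is correct and follows essentially the same route as the paper: part (i) by testing \eqref{eq:h-1_newton_step_delta_u} against constants to obtain $\int_\Omega \delta u_n = -\int_\Omega (u_n-m)$, and part (ii) by restricting $\tilde u$ to $(-\lap_N)^{-1}R(\mathring{H}^1)$, invoking the $\mathring{H}^{-1}$ inner product \eqref{eq:h-1_inner_prod}, and eliminating $\hat\mu_n$ and $\mu_n$ via \eqref{eq:h-1_newton_step_mu_hat} and \eqref{eq:mu_n_var} restricted to $\mathring{H}^1$ test functions, exactly mirroring the proof of Theorem \ref{thm:mixed_h-1_1st}. Your observation that the constant shifts of $\mu_n$ and $\hat\mu_n$ are immaterial because they vanish against $\tilde u_0\in\mathring{H}^1$ is, if anything, slightly cleaner than the paper's explicit subtraction of $s_3$ and $s_4$.
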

\begin{proof}
    \begin{enumerate}[label = (\roman*)]
        \item Assume $\delta u\in\mathring{H}^{1}$, taking $\tilde{u} = c \in\mathcal{C}$ in \eqref{eq:h-1_newton_step_delta_u} gives $(\sigma (u_n-m),c) = 0$ and $u_n\in\mathring{H}^1_m$. The converse is true for the same argument.
        \item $u_n\in\mathring{H}^1_m$ implies $\delta u_n\in\mathring{H}^1$. The rest of the proof follows closely to that of Theorem \ref{thm:mixed_h-1_1st}. Define $s_3 = \big(kW'(u_n),1\big)$ and $s_4 = \big(kW''(u_n),1\big)$. Consider replacing $\mu$ and $\hat{\mu}$ with $\mu-s_3\in\mathring{H}^1$ and $\hat{\mu}-s_4\in\mathring{H}^1$ in \eqref{eq:h-1_newton_step_delta_u}. Restricting $\tilde{u}$ to $(-\lap_N)^{-1}R(\mathring{H}^1)$ in \eqref{eq:h-1_newton_step_delta_u} and invoking the definition of the $\mathring{H}^{-1}$ inner-product \eqref{eq:h-1_inner_prod}, we have
        \begin{equation*}
            \big(\hat{\mu}_n + \sigma(-\lap_N)^{-1}\delta u_n + \mu_n + \sigma (-\lap_N)^{-1}(u_n-m)), \tilde{u}_0\big) = 0\quad\forall \tilde{u}_0\in\mathring{H}^1\;. 
        \end{equation*}
        Restricting $\tilde{\mu}$ in \eqref{eq:h-1_newton_step_mu_hat} to $\mathring{H}^1$, we have $D^2F(u_n)\delta u_n - DF(u_n) \equiv 0$.
    \end{enumerate}
\end{proof}
The following lemma further establishes that the Newton iterates \eqref{eq:newton_seq} generated by the $\mathring{H}^{-1}$ Newton step problem are indeed in $\mathring{H}^{1}_m$ for $n\geq1$ if $t_0=1$, even when the initial guess $u_0\in H^1$ does not have the spatial average of $m$.
\begin{lemma}\label{lemma:projection}
Given $t_0 = 1$ and arbitrary $u_0,\mu_n, \hat{\mu}_n,\in H^1$, the $\mathring{H}^{-1}$ Newton iterates $\{u_n\}_{n=1}^\infty$ generated by \eqref{eq:newton_seq} and the $\mathring{H}^{-1}$ variational Newton steps \eqref{eq:h-1_newton_step_var} are in $\mathring{H}^1_m$.
\begin{proof}
    Let $\tilde{u} = c\in \mathcal{C}$ in \eqref{eq:h-1_newton_step_delta_u}, then it implies $(\delta u_n ,c) = -  (u_n-m,c)\;\forall n\in\N\cup\{0\}$. With $t_0 = 1$, \eqref{eq:newton_seq} implies
    \begin{align*}
        (u_1, c) = (u_0, c) + (\delta u_0, c)= (m,c) &\Rightarrow u_1\in \mathring{H}^{1}_m\\
        (t_1\delta u_1, c) = -\big(t_1(u_1-m), c\big) = 0 &\Rightarrow \delta u_1 \in \mathring{H}^{1} \text{ and } u_2 \in \mathring{H}^{1}_m\;.
    \end{align*}
    By induction, $u_n\in\mathring{H}^1_m\;\forall n\in \N$.
\end{proof}
\end{lemma}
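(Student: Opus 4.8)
The plan is to exploit the single crucial identity that comes from testing the first equation of the $\mathring{H}^{-1}$ Newton step problem against constant functions, and then to propagate mass conservation forward by induction on $n$. Concretely, I would first establish the lemma's key micro-fact: setting $\tilde u = c \in \mathcal{C}$ in \eqref{eq:h-1_newton_step_delta_u}, the terms $(\grad \hat\mu_n, \grad c)$, $(\grad \mu_n, \grad c)$ vanish because $\grad c = 0$, leaving $(\sigma \delta u_n, c) = -(\sigma(u_n - m), c)$, i.e. $(\delta u_n, c) = -(u_n - m, c)$ for all $c \in \mathcal{C}$ and all $n \in \N \cup \{0\}$. This holds regardless of whether $u_n$ has spatial average $m$, since it is just a consequence of the structure of the Newton step equation.

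Next I would run the induction. For the base case, write $u_1 = u_0 + t_0 \delta u_0 = u_0 + \delta u_0$ (using $t_0 = 1$), pair against an arbitrary $c \in \mathcal{C}$, and apply the micro-fact at $n = 0$: $(u_1, c) = (u_0, c) + (\delta u_0, c) = (u_0, c) - (u_0 - m, c) = (m, c)$. Since this holds for every constant $c$, it forces $\int_\Omega (u_1 - m)\, d\vect{x} = 0$, i.e. $u_1 \in \mathring{H}^1_m$. For the inductive step, assume $u_n \in \mathring{H}^1_m$ for some $n \geq 1$. Then the micro-fact at index $n$ gives $(\delta u_n, c) = -(u_n - m, c) = 0$ for all $c$, so $\delta u_n \in \mathring{H}^1$; consequently $(u_{n+1}, c) = (u_n, c) + t_n(\delta u_n, c) = (u_n, c) = (m, c)$, and $u_{n+1} \in \mathring{H}^1_m$. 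This closes the induction and yields $u_n \in \mathring{H}^1_m$ for all $n \in \N$.

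I do not expect a serious obstacle here — the argument is essentially the one already sketched in the excerpt's proof. The only point requiring a little care is the asymmetry between $n = 0$ and $n \geq 1$: at $n = 0$ the step length is fixed to $1$ so that the $\delta u_0$ correction exactly cancels the initial mass defect $u_0 - m$, whereas for $n \geq 1$ the iterate already lies in $\mathring{H}^1_m$, $\delta u_n$ is automatically mean-zero, and the line search parameter $t_n \in (0,1]$ is irrelevant to the mass balance. I would make sure the write-up flags that $(\delta u_n, c) = -(u_n - m, c)$ is derived once, uniformly in $n$, and is then specialized in the two regimes. A minor stylistic choice is whether to phrase the conclusion via "for all $c \in \mathcal{C}$" or directly via $\int_\Omega \cdot\, d\vect{x} = 0$; I would use the former for consistency with how $\mathcal{C}$ and $\mathring{H}^1_m$ are introduced in Section 2.
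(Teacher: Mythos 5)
Your proposal is correct and follows essentially the same route as the paper's proof: test \eqref{eq:h-1_newton_step_delta_u} against constants to obtain $(\delta u_n, c) = -(u_n - m, c)$, use $t_0 = 1$ to show the first step projects onto $\mathring{H}^1_m$, and then induct, noting that for $n \geq 1$ the mean-zero property of $\delta u_n$ makes the step length irrelevant. Your explicit flagging of the asymmetry between the projection step and the subsequent iterations is a welcome clarification of what the paper leaves implicit.
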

The above proposition and lemma establish that the $\mathring{H}^{-1}$ Newton iteration is equivalent to the $\mathring{H}^1$ Newton iteration, if an appropriate initial guess is provided. Furthermore, the fact that $\hat{\mu}$ can be arbitrary allows us to modify the Hessian operator to achieve monotonic energy descent. In the following subsections, we propose a modified Hessian operator that generates energy-descending Newton iterations when implemented along with line search methods. This leads to global convergence of the Newton iteration. Then we discuss the choice of appropriate initial guess. A summary of the scheme is provided at the end of the section.
\begin{remark}
    When $\sigma = 0$, the solutions to the $\mathring{H}^{-1}$ Newton step problem \eqref{eq:h-1_newton_step_var} do not satisfy the $\mathring{H}^1$ Newton step problem \eqref{eq:newton_step_original}. In this case, one could take $\sigma$ to be very small and essentially interpret it as a stabilization term. Moreover, having $\sigma \sim 0$ implies that the physical domain is much smaller in scale than the statistical segment length of the polymer according to \eqref{eq:model_param}, and thus not of interest in the context of the phase separation of BCPs.
\end{remark}

\subsection{Generating energy-descending Newton iteration}\label{subsubsec:bt_gamma}
To achieve global convergence of the Newton iteration with line search, we seek to produce search directions that are descent directions on the energy \cite{Nocedal2006}.  If the search direction $\delta u_n$ that solves the $\mathring{H}^1$ Newton step problem \eqref{eq:newton_step_original} at $u_n\in\mathring{H}^1_m$  also satisfies the following condition:
\begin{equation}\label{eq:descent_cond}
    \big\langle DF(u_n), \delta u_n \big\rangle < 0 \iff \big\langle D^2F(u_n)\delta u_n, \delta u_n\big\rangle > 0\;,
\end{equation}
then, with a sufficiently small $t_n\in(0,1]$, we have $F_{OK}(u_n + t_n\delta u_n)< F_{OK}(u_n)$. The appropriate $t_n$ can be found by line search methods, such as backtracking line search based on the Armijo condition. If the energy is reduced at each step, the iterations will eventually enter the ball of convergence of a critical point, leading to quadratic convergence with $t_n = 1$ and $D^2F(u_n)$ coercive. The energy-descending Newton iteration is global convergent if, additionally, the Hessian operator is Lipschitz continuous and bounded from above \cite[p.40]{Nocedal2006}. We note that the Hessian operator satisfies both of the conditions if the state is bounded point-wise, which is generally true due to the double well potential which has global minimum at $\pm 1$.\\
\indent Of course, the condition \eqref{eq:descent_cond} is not necessarily satisfied at any $u_n\in \mathring{H}^1_m$, as the Hessian operator $D^2F(u_n)$ may not be coercive due to the non-convex double well potential term. We thus seek to modify the Hessian operator by convexifying the double well potential.

\subsubsection{The modified Hessian operator: a Gauss-Newton type approximation}
\indent Consider the following decomposition of $W''(u)$:
\begin{equation}
    W''(u) = 2u^2 + (u^2-1)\;.
\end{equation}
The first term is always positive and the second term could be negative, especially since $u$ typically takes values between $\pm 1$. It is also typical that the minimizers contain large regions with $u$ close to $\pm 1$, which makes the second term relatively small. As a result, we choose an approximation to $W''(u)$ by scaling down this term with a weight $\gamma\in[1,0]$:
\begin{equation}\label{eq:weighted_dw}
    W_\gamma''(u) = 2u^2 + \gamma (u^2-1)\;,
\end{equation}
thus $\gamma$ allows interpolation between a Newton ($\gamma = 1$) and a Gaussian-Newton ($\gamma = 0$) method.
\begin{figure}
    \centering
    \includegraphics[width = 0.6\linewidth]{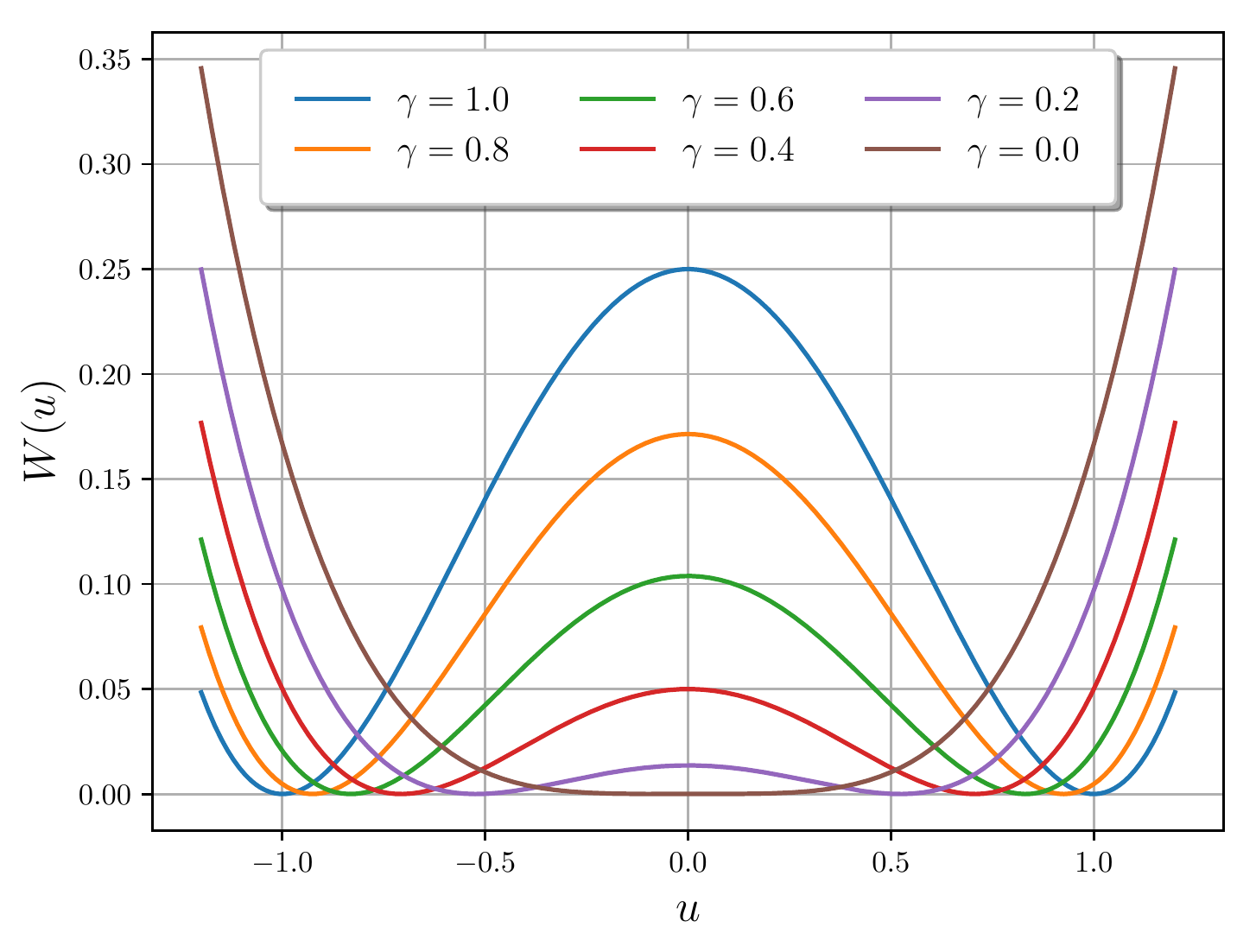}
    \caption{The modified double well potential with $\gamma = 0,0.2,0.4,0.6,0.8,1$.}
    \label{fig:mod_dw}
\end{figure}
Consider a weighted linear operator $H_{\gamma}(u): \mathring{H}^1\to(\mathring{H}^1)^*$ defined with $W_\gamma''$:
\begin{equation}
    \big\langle H_\gamma (u)\hat{u}_0, \tilde{u}_0\big\rangle = \big(\kappa W_\gamma''(u)\hat{u}_0, \tilde{u}_0\big) + (\epsilon^2\grad \hat{u}_0, \grad\tilde{u}_0) + \big(\sigma (-\lap_N)^{-1}R(\hat{u}_0), \tilde{u}_0\big)\;.
\end{equation}
In particular, $H_{\gamma}(u) \to D^2F(u)$ as $\gamma \to 1$ and $H_0(u)$ is coercive at any $u\in \mathring{H}^{1}_m$:
\begin{equation}\label{eq:gn_hessian}
    \big\langle H_0(u)\tilde{u}_0, \tilde{u}_0\big\rangle = (2\kappa u^2,\tilde{u}_0^2) + \epsilon^2\norm{\grad \tilde{u}_0}^2 + \sigma \norm{R(\tilde{u}_0)}_{\mathring{H}^{-1}_0}^2 \geq \epsilon^2 \norm{\grad \tilde{u}_0}^2 \quad\forall \tilde{u}_0\in \mathring{H}^1\;.
\end{equation}
The following proposition indicates that the modified Hessian operator $H_\gamma(u)$ is coercive with a sufficiently small $\gamma$:
\begin{proposition}\label{prop:coercivity}
There exists a constant $\gamma_c(\kappa,\epsilon, \Omega)>0$ such that $H_{\gamma}(u)$ is coercive at any $u\in \mathring{H}^1_m$ for $\gamma \in[0, \gamma_c]$.
\end{proposition}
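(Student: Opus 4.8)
The plan is to obtain coercivity of $H_\gamma(u)$ \emph{uniformly} in $u\in\mathring{H}^1_m$ by viewing the $\gamma$–correction as a small perturbation of the Gauss--Newton operator $H_0(u)$, and to control that perturbation by the Poincar\'e inequality on the zero-mean subspace $\mathring{H}^1$. First I would record the exact splitting coming from $W_\gamma''(u)=2u^2+\gamma(u^2-1)$ and the definitions of $H_\gamma$ and $H_0$: for every $\tilde{u}_0\in\mathring{H}^1$,
\begin{equation*}
    \big\langle H_\gamma(u)\tilde{u}_0,\tilde{u}_0\big\rangle = \big\langle H_0(u)\tilde{u}_0,\tilde{u}_0\big\rangle + \gamma\kappa\big((u^2-1)\tilde{u}_0,\tilde{u}_0\big)\;.
\end{equation*}
By \eqref{eq:gn_hessian} the first summand is bounded below by $\epsilon^2\norm{\grad\tilde{u}_0}^2$, and this bound holds for \emph{any} $u$, since it only discards the nonnegative terms $(2\kappa u^2,\tilde{u}_0^2)$ and $\sigma\norm{R(\tilde{u}_0)}^2_{\mathring{H}^{-1}}$.

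Next I would bound the correction term from below using only the pointwise inequality $u^2-1\ge -1$, which gives
\begin{equation*}
    \gamma\kappa\big((u^2-1)\tilde{u}_0,\tilde{u}_0\big)=\gamma\kappa\int_\Omega (u^2-1)\tilde{u}_0^2\,d\vect{x}\;\ge\;-\gamma\kappa\norm{\tilde{u}_0}_{L^2}^2\;,
\end{equation*}
again uniformly in $u$. Invoking the Poincar\'e--Wirtinger inequality on $\mathring{H}^1$, namely $\norm{\tilde{u}_0}_{L^2}\le C_P(\Omega)\norm{\grad\tilde{u}_0}$ for all zero-mean $\tilde{u}_0$, and combining the two estimates yields
\begin{equation*}
    \big\langle H_\gamma(u)\tilde{u}_0,\tilde{u}_0\big\rangle \;\ge\; \big(\epsilon^2-\gamma\kappa C_P^2\big)\norm{\grad\tilde{u}_0}^2\qquad\forall\,\tilde{u}_0\in\mathring{H}^1\;,\ \forall\,u\in\mathring{H}^1_m\;.
\end{equation*}
Setting $\gamma_c \,=\, \epsilon^2/\!\big(2\kappa C_P^2\big)$ (any value strictly below $\epsilon^2/(\kappa C_P^2)$ works), the prefactor is $\ge \epsilon^2/2>0$ for every $\gamma\in[0,\gamma_c]$, so $H_\gamma(u)$ is coercive on $\mathring{H}^1$ with the $u$-independent constant $\epsilon^2/2$; moreover $C_P$, and hence $\gamma_c$, depends only on $\Omega$, so $\gamma_c=\gamma_c(\kappa,\epsilon,\Omega)$ as asserted.

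I do not expect a genuine obstacle here; the proof is essentially a perturbation-of-coercivity argument. The only points requiring a little care are (i) ensuring the estimate is truly uniform over the unbounded set $\mathring{H}^1_m$ — which is why it is important that the argument uses only the sign facts $u^2\ge 0$ and $u^2-1\ge-1$ and never any pointwise bound on $u$ from above; and (ii) that the Poincar\'e constant invoked is the one for the mean-zero subspace (equivalently, the reciprocal square root of the first nonzero Neumann eigenvalue of $-\lap$ on $\Omega$), which is precisely the norm environment of $\mathring{H}^1$. If desired, one can even state the explicit threshold $\gamma_c=\epsilon^2/(\kappa C_P^2)$ with a strict inequality $\gamma<\gamma_c$ in place of the closed interval.
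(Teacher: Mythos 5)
Your proof is correct and follows essentially the same route as the paper: split off the $\gamma\kappa(u^2-1)$ term, lower-bound $H_0$ by $\epsilon^2\norm{\grad\tilde{u}_0}^2$ via \eqref{eq:gn_hessian}, and absorb the correction with the Poincar\'e inequality on $\mathring{H}^1$. Your only deviation is taking $\gamma_c=\epsilon^2/(2\kappa C_P^2)$ rather than the paper's $\gamma_c=\epsilon^2/(c_p^2\kappa)$, which is actually slightly more careful, since at the paper's endpoint the coercivity constant degenerates to zero.
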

\begin{proof}
    Inequality \eqref{eq:gn_hessian} implies $\big\langle H_{\gamma}(u)\tilde{u}_0, \tilde{u}_0 \big\rangle = \big\langle H_0(u)\tilde{u}_0, \tilde{u}_0  \big\rangle + \big(\kappa \gamma(u^2-1)\tilde{u}_0,\tilde{u}_0\big)
    \geq \epsilon^2\norm{\grad \tilde{u}_0}^2
     - \kappa\gamma\norm{\tilde{u}_0}^2\geq(\epsilon^2- c_p^2\kappa\gamma)\norm{\grad\tilde{u}_0}^2\;\forall\tilde{u}_0\in\mathring{H}^1$, where $c_p$ is the Poincar\'e constant. Therefore, $\gamma_c =\epsilon^2/(c_p^2\kappa)$.
\end{proof}
\subsubsection{The modified Newton step problem: a backtracking approach}
Now consider the $\mathring{H}^1$ Hessian-modified Newton step problem:
\begin{align}\label{eq:newton_step_mod}
    \text{Find } \delta u_n\in \mathring{H}^1 \text{ such that}: &&H_{\gamma_n}(u_n)\delta u_n = - DF(u_n)\;,
\end{align}
where $\gamma_n\in(0,1]$ is small enough so that $\big\langle DF(u_n),\delta u_n) \big\rangle < 0$, i.e., $\delta u_n$ is a descent direction. The following algorithm, a corollary to Proposition \ref{prop:coercivity}, finds the appropriate $\gamma_n$ by backtracking along a decreasing sequence $\vect{\gamma}$ from $1$ to $0$.\\

\begin{algorithm}[H]\label{alg:newton_step}
\SetAlgoLined
\KwResult{Obtain $\delta u_n$ such that $\big\langle DF(u_n),\delta u_n) \big\rangle < 0$}
Given $u_n\in\mathring{H}^1_m$ such that $DF(u_n)\not\equiv 0$ and $\vect{\gamma} = \{\gamma_1 = 1,\gamma_2<1, \gamma_3<\gamma_2, \dots, 0\}$\;
 \For{$\gamma_k$ in $\vect{\gamma}$, $k=1,2,\dots$}{
  Solve \eqref{eq:newton_step_mod} for $\delta u_n$ with $\gamma_n = \gamma_k$\;
  \If{$\big\langle DF(u_n),\delta u_n) \big\rangle < 0$}{
    break
   }
 }
 \caption{Generating an energy-descending Newton step $\delta u_n\in\mathring{H}^1$ at $u_n\in\mathring{H}^1_m$}
\end{algorithm}
The above algorithm also applies to the corresponding $\mathring{H}^{-1}$ Hessian-modified Newton step problem and leads to monotonic energy descent after the initial projection step according to Proposition \ref{prop:equiv_newton_step} and Lemma \ref{lemma:projection}:
\begin{subequations}\label{eq:scaled_h-1_newton_step}
\begin{align}
&\text{Find }(\delta u_n,\hat{\mu})\in H^1\times H^1 \text{ such that:}\nonumber\\
    &\qquad\qquad(\grad\hat{\mu}_n, \grad\tilde{u}) + (\sigma\delta u_n, \tilde{u}) = -(\grad \mu_n, \grad\tilde{u}) - \big(\sigma(u_n-m),\tilde{u})\big)&&\forall \tilde{u}\in H^1\;,\label{eq:scaled_h-1_newton_step_delta_u}\\
    &\qquad\qquad(\hat{\mu}_n, \tilde{\mu}) - \big(\kappa W_{\gamma_n}''(u_n)\delta u_n, \tilde{\mu}\big) - \big(\epsilon^2 \grad (\delta u_n), \grad\tilde{\mu}\big) = 0  &&\forall\tilde{\mu}\in H^1\;.\label{eq:scaled_h-1_newton_step_mu_hat}
\end{align}
\end{subequations}

\indent Backtracking on the weight $\gamma$ from $1$ to $0$ until a descent direction is found guarantees that in the vicinity of a local minimum, the resulting coercivity of the Hessian will lead to termination of Algorithm \ref{alg:newton_step} with $\gamma = 1$, and thus the full Hessian will be used in producing the Newton step. This in turn guarantees that the proposed method attain asymptotic quadratic convergence.

\subsubsection{Backtracking line search on \texorpdfstring{$F_{OK}$}{}}\label{subsec:btls}
Once $\delta u_n\in\mathring{H}^1$ is obtained such that $\big\langle DF(u_n), \delta u_n\big\rangle < 0$, we seek to determine a step size $t_n$ that provides sufficient descent in the energy. Here we consider using backtracking line search via the Armijo condition to obtain $t_n$. Given an Armijo constant $c\in(0,1)$, typically chosen to be $10^{-4}$, and a sequence of step sizes $\alpha_k = 2^{1-k}$, $k\in\N$, we seek to find the smallest $K\in\N$ such that
\begin{equation}
    F_{OK}(u_n + \alpha_K\delta u_n)\leq F_{OK}(u_n) + c\alpha_K\big\langle DF(u_n), \delta u_n\big\rangle\;.
\end{equation}
We take $t_n = \alpha_K$.\\
\indent This procedure involves evaluating the energy and the action of $DF(u_n)$ and, thus, requires computing $(-\lap_N)^{-1}R(u_n + \alpha_k \delta u_n-m)$ at various $\alpha_k$. Let us define $w_n\coloneqq (-\lap_N)^{-1}R(u_n-m)$ and $\delta w_n \coloneqq (-\lap_N)^{-1}R(\delta u_n)$. Then the linearity of the Laplacian inverse operator leads to:
\begin{equation}
    (-\lap_N)^{-1} R(u_n + \alpha_k \delta u_n -m) = w_n + \alpha_k\delta w_n\;.
\end{equation}
One only needs to find $\delta w_n$ to evaluate energy and conduct the backtracking line search, if $w_n$ is known.
\subsection{On the initial guess and the homogeneous state}
A natural choice of the initial guess for the Newton iteration is the homogeneous state, i.e., $u_0 = m$, as it is an unbiased high-energy state for spinodal decomposition problems. However, this particular initial guess is not suitable for initiating the iterations, as it is a trivial solution to the first order condition \eqref{eq:1st_var}. Moreover, whether it satisfies the second-order condition \eqref{eq:2nd_var} is parameter-dependent:
\begin{subequations}
\begin{align}
    \big\langle D^2F(m)\tilde{u}_0, \tilde{u}_0\big\rangle &\leq (\epsilon^2 + c_p^2\sigma)\norm{\grad \tilde{u}_0}^2\qquad\qquad\qquad|m| <1/\sqrt{3}\;,\\
    \big\langle D^2F(m)\tilde{u}_0, \tilde{u}_0\big\rangle &\geq
    \begin{cases}
         \big(\epsilon^2+c_p^2\kappa(3m^2-1)\big)\norm{\grad\tilde{u}_0}^2 & |m|< 1/\sqrt{3}\;;\\
         \epsilon^2\norm{\grad\tilde{u}_0}^2 & |m|\geq 1/\sqrt{3}\;, 
    \end{cases}
\end{align}
\end{subequations}
for all $u_0\in\mathring{H}^1$. The homogeneous state is a local minimizer when $|m|\geq 1/\sqrt{3}$. It is also a local minimizer when $\epsilon^2\geq c_p^2\kappa(1-3m^2)$ and $|m|< 1/\sqrt{3}$. In other parameter regions, it is not straightforward to determine whether the homogeneous state is a local minimizer or a local maximizer on a general domain.\\
\indent For certain domains and boundary conditions, one can exactly determine whether the homogeneous state is a global minimizer. For example, Choksi et al. \cite{Choksi2009} showed that, for cubic domains with periodic boundary conditions and $\kappa = 1$, the homogeneous state is the global minimizer when $1-m^2\leq 2\epsilon\sqrt{\sigma}$. \\
\indent We accept the homogeneous state as a solution when it is a dominant minimizer, or even a global minimizer in some cases. However, for the purpose of acquiring the minimizers that exhibit phase separation, the initial guess should be sufficiently bounded away from the homogeneous state. The typical initial guess or initial condition used in the literature is the homogeneous state perturbed by pointwise-uniform noise \cite{Qin2013,Parsons2012, Zhang2006}:
\begin{equation}\label{eq:iid_ig}
    u_0(\vect{x}) = m + sr(\vect{x})\;,
\end{equation}
where $r(\vect{x})$ is a random variable with uniform distribution in $[-1, 1]$ and the scalar $s$ controls the distance to the homogeneous state, typically chosen to be $0.05$ for the gradient flow approach.\\
\indent However, $u_0$ does not belong to $H^1$ and depends on the spatial disctretization in the numerical implementation. We thus consider a similar form given by an isotropic Gaussian random field (GRF) transformed by the error function:
\begin{equation}\label{eq:grf_ig}
    u_0(\vect{x}) = m + s\,\textrm{erf}\big(u_G(\vect{x})\big)\;,\quad u_G\sim\mathcal{N}(0, \mathcal{C}_0)\;,
\end{equation}
where $\mathcal{C}_0$ is the covariance operator. We consider a covariance operator defined as the inverse of an elliptic differential operator \cite{Bui-thanh2013}:
\begin{equation}\label{eq:covariance}
    \mathcal{C}_0(\delta_G, \gamma_G) = (\delta_G - \gamma_G\lap)^{-2}\;.
\end{equation}
The Laplacian operator is equipped with the Robin boundary condition
\begin{equation}
    \grad u_G\cdot \nu+\beta u_G=0\quad \text{ on } \partial \Omega\;,
\end{equation}
where $\beta_G = \sqrt{\gamma_G\delta_G}/1.42$ is chosen to maintain a spatially-uniform pointwise-variance proportional to $\delta_G^{-4}(\gamma_G/\delta_G)^{-d/2}$.
The correlation length is proportional to $\sqrt{\gamma_G/\delta_G}$. The resulting GRF initial guesses can be tuned to mimic the initial guesses produced by \eqref{eq:iid_ig} for a given mesh while staying invariant to the spatial discretization.

\subsection{Summary of the proposed scheme}
\begin{algorithm}[H]
\SetAlgoLined
\KwResult{Given an initial guess $u_0\in H^1$, generate an minimization sequence $\{(u_n,\mu_n)\}_{n=0}^N$ of $F_{\textrm{OK}}$ such that $\norm{B(u_N, \mu_N)}_{((H^1)^*)^2} < \textit{tol}$.}
$n = 0$\;
\If{$u_0\not\in \mathring{H}^1_m$}{
    Solve \eqref{eq:mu_n_var} for $\mu_0$\;
    Solve \eqref{eq:scaled_h-1_newton_step} for $\delta u_0$ with $\gamma_0 = 1$\;
    $u_1 = u_0 + \delta u_0$\;
    $n = n+1$\;
}
Solve \eqref{eq:mu_n_var} for $\mu_n$\;
Solve the Poisson problem for $w_n$\;
\While{$\norm{B(u_n, \mu_n)}_{((H^1)^*)^2} \geq \textit{tol}$}{
    Solve for $\delta u_n$ by Algorithm \ref{alg:newton_step} with \eqref{eq:scaled_h-1_newton_step}\;
    Solve the Poisson problem for $\delta w_n$\;
    Conduct Armijo backtracking line search on $F_{\textrm{OK}}$ to obtain the step size $t_n$\;
    $u_{n+1} = u_n + t_n\delta u_n$\;
    $w_{n+1} = w_n + t_n\delta w_n$\;
    Solve \eqref{eq:mu_n_var} for $\mu_{n+1}$\;
    $n = n+1$\;
 }
 \caption{The energy-descending Newton iteration for the minimization of the OK energy.}
\end{algorithm}

\section{Numerical Examples}\label{sec:numerical_results}
In this section, we present numerical examples of the minimization of the OK energy functional by the proposed algorithm described in the previous section. We use the finite element method to discretize the $H^1$ space. The discretizations and solves are implemented through version 2019.1.0 of the FEniCS library \cite{AlnaesBlechta2015a, Logg2007a}. The numerical examples include: (1) initial guesses generated by GRFs; (2) the demonstration of mass-conserving, energy-descending, and asymptotically quadratically convergent properties of the proposed scheme; (3) a mesh refinement study to show the independence of the number of minimization iterations from mesh discretization; (4) a comparison between the solutions obtained by the conventional gradient flow approach and the proposed scheme.
\subsection{Initial guesses}
We consider a unit square domain discretized by uniform linear triangular elements. Initial guesses formed by both the pointwise-uniform noise \eqref{eq:iid_ig} and the GRFs \eqref{eq:grf_ig} are generated on the domain, with $m = 0$ and $s = 1$. The GRFs are obtained through the covariance operator \eqref{eq:covariance} with parameters $(\delta_G,\gamma_G) = (100, 0.0016)$. The numerical implementation of the GRF generation can be found in the hIPPYlib library \cite{VillaPetraGhattas18}. The initial guesses for both approaches are formed on meshes with different cell sizes and compared in Figure \ref{tab:ig_plots}. It is clear that the structure of the initial guesses generated by the GRFs is invariant to changes in the discretization while the structure of the initial guesses generated by the pointwise-uniform noise is not. Additionally, the former follow the pointwise-uniform distribution after the transformation through the error function, as indicated by Figure \ref{fig:pwv_grf}.\\
\indent A heuristic approach to generate initial guesses by the GRFs that mimic the conventional initial guesses on the mesh with a distance of scale $l_e$ between its neighboring degree-of-freedoms (DoFs) is to set $\gamma_G$ and $\delta_G$ such that $\sqrt{\gamma_G/\delta_G} = 0.4l_e$ and $\delta_G^{-2}(\gamma_G/\delta_G)^{-d/4} = 2.5l_e$.
\begin{table}[!htbp]
    \centering
    \begin{tabular}{|c|c|c|}\hline
        $100\times100$ cells & $200\times200$ cells & $300\times300$ cells  \\\hline
        \includegraphics[width = 0.3\linewidth]{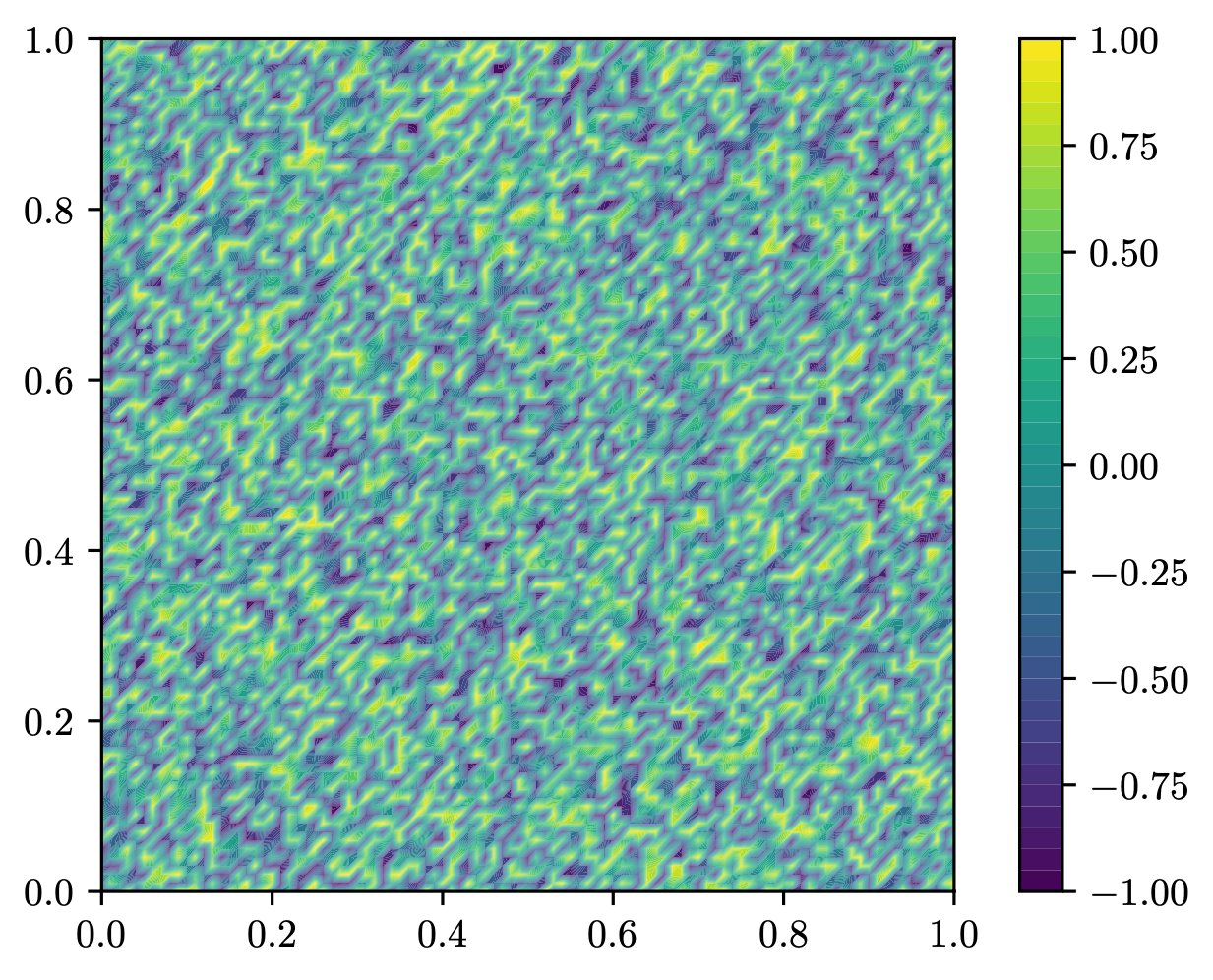} &        \includegraphics[width = 0.3\linewidth]{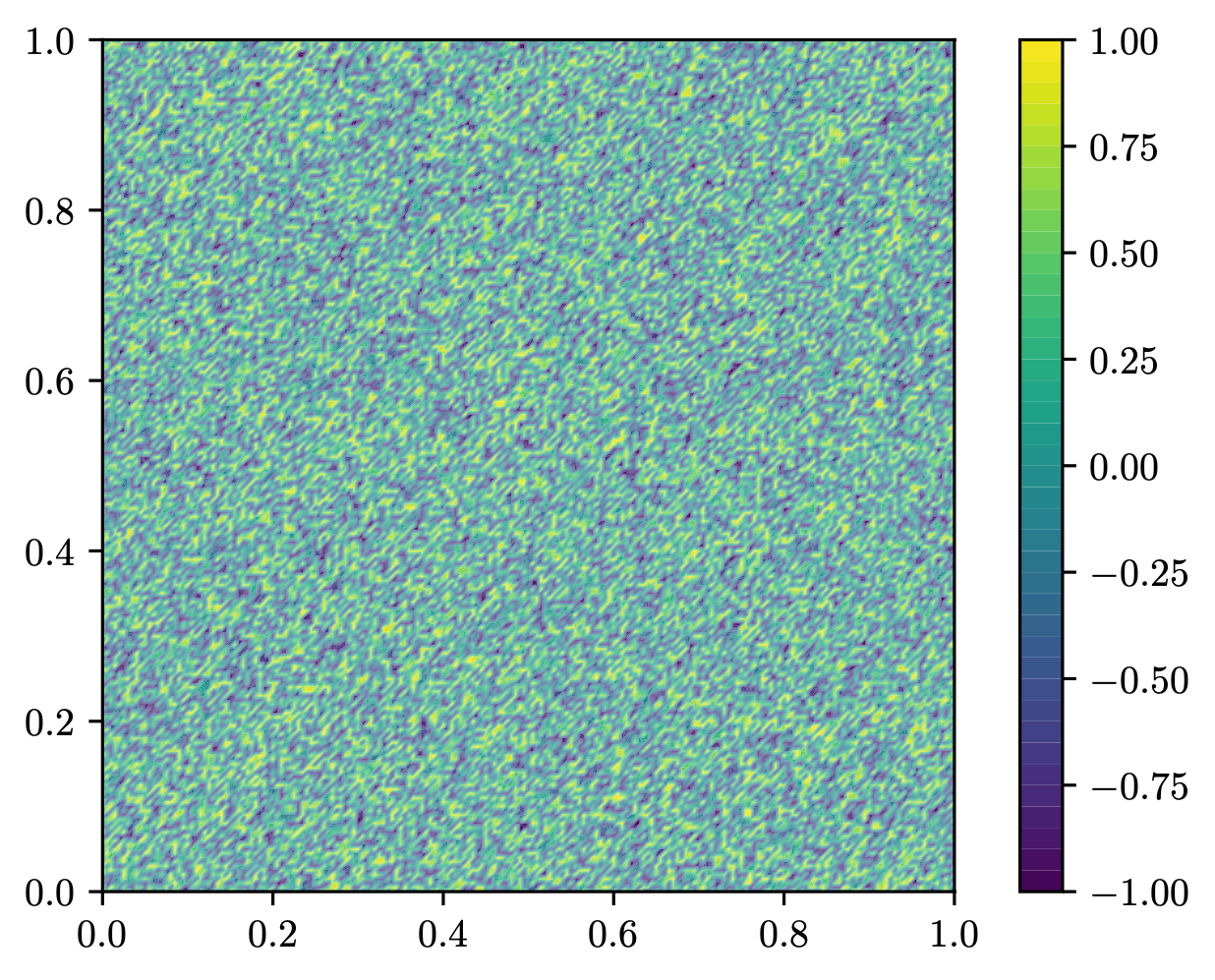} &         \includegraphics[width = 0.3\linewidth]{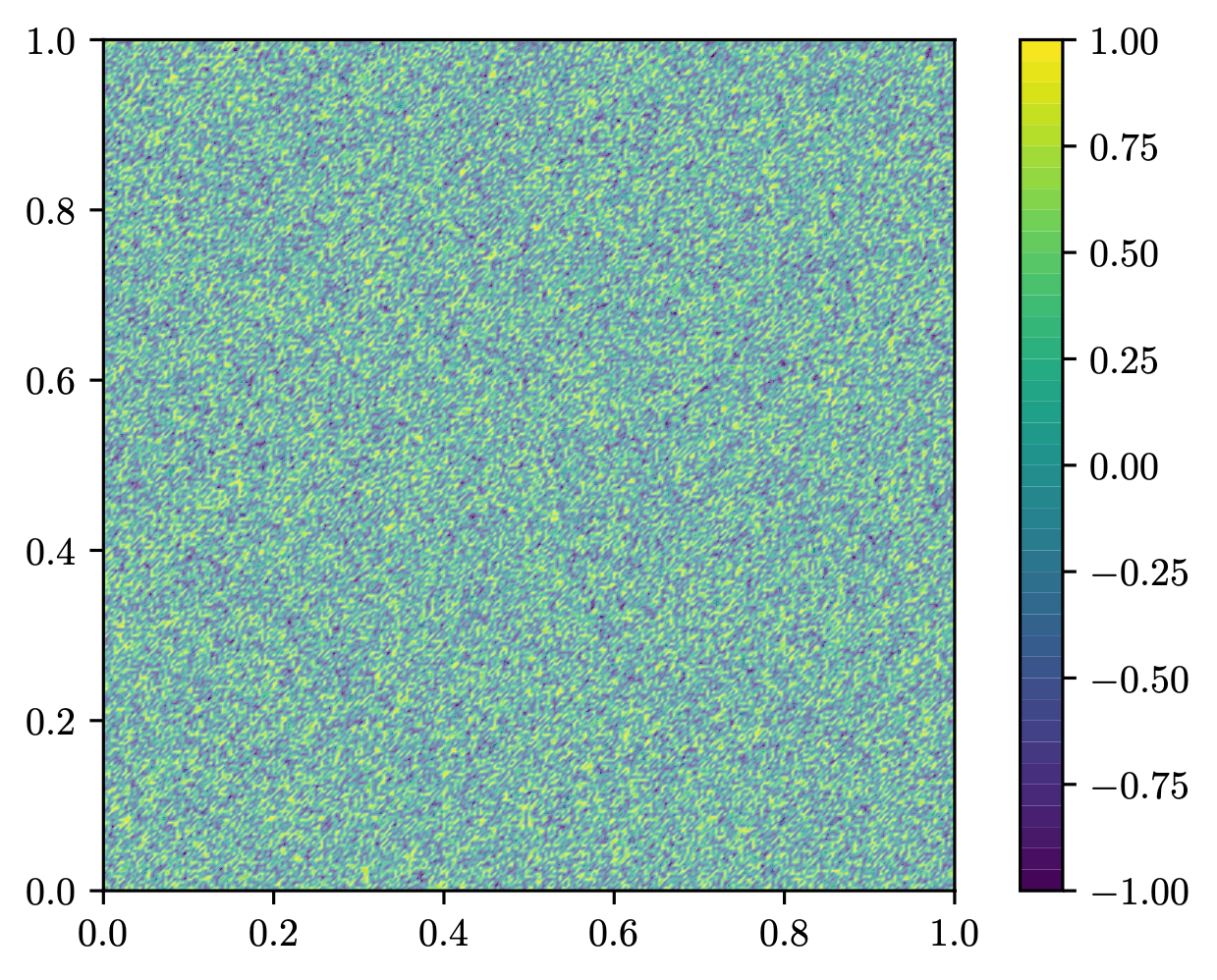}\\\hline
        \includegraphics[width = 0.3\linewidth]{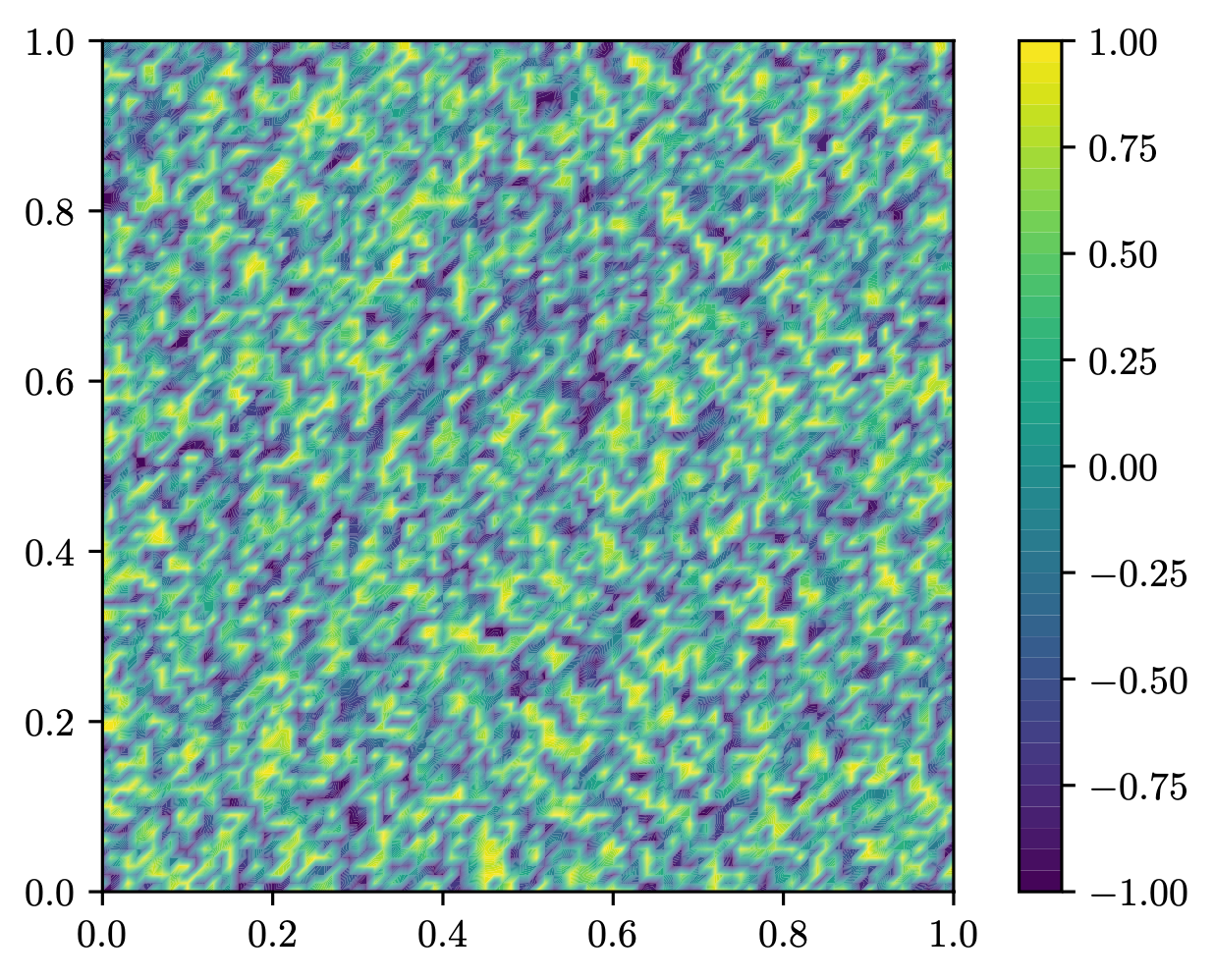} &         \includegraphics[width = 0.3\linewidth]{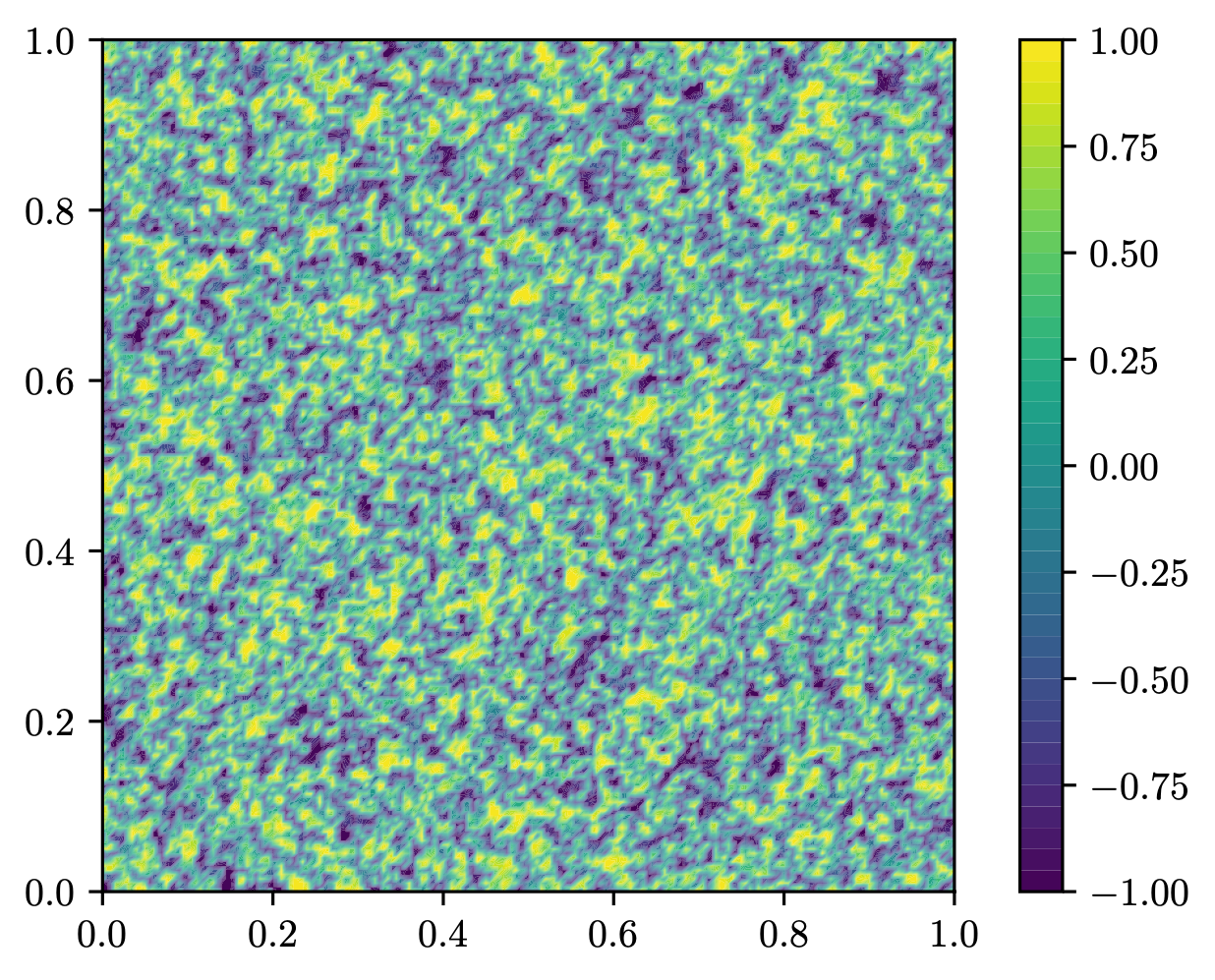} & \includegraphics[width = 0.3\linewidth]{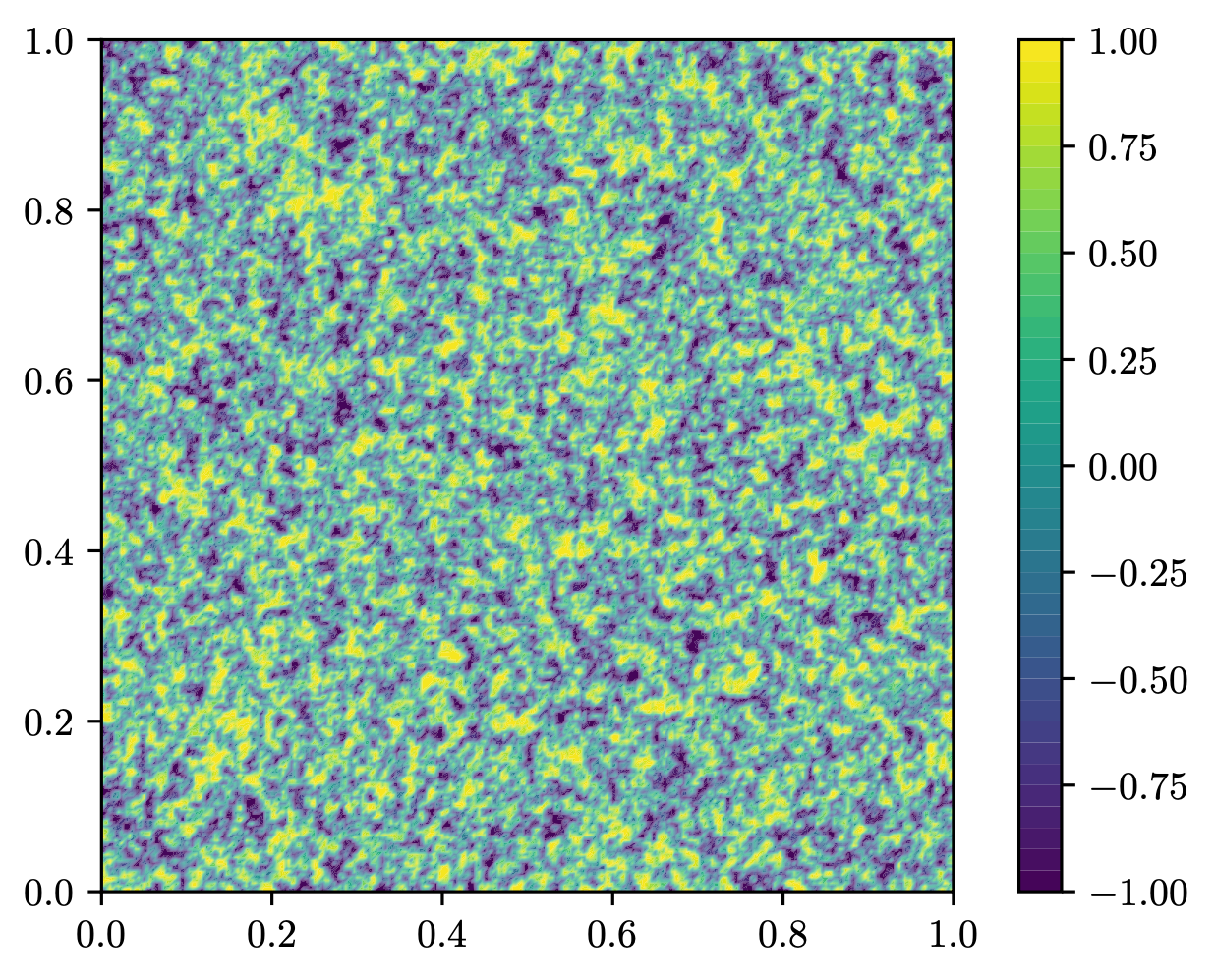}\\
     \hline
    \end{tabular}
    \captionof{figure}{Initial guesses generated on meshes consisting of uniform triangular elements with decreasing cell sizes. The figures on top row are the initial guesses generated by the pointwise-uniform noise \eqref{eq:iid_ig} and the figures on the bottom row are the initial guesses generated by the GRFs \eqref{eq:grf_ig}.}
    \label{tab:ig_plots}
\end{table}

\begin{figure}[!htb]
    \centering
    \includegraphics[width = 0.5\linewidth]{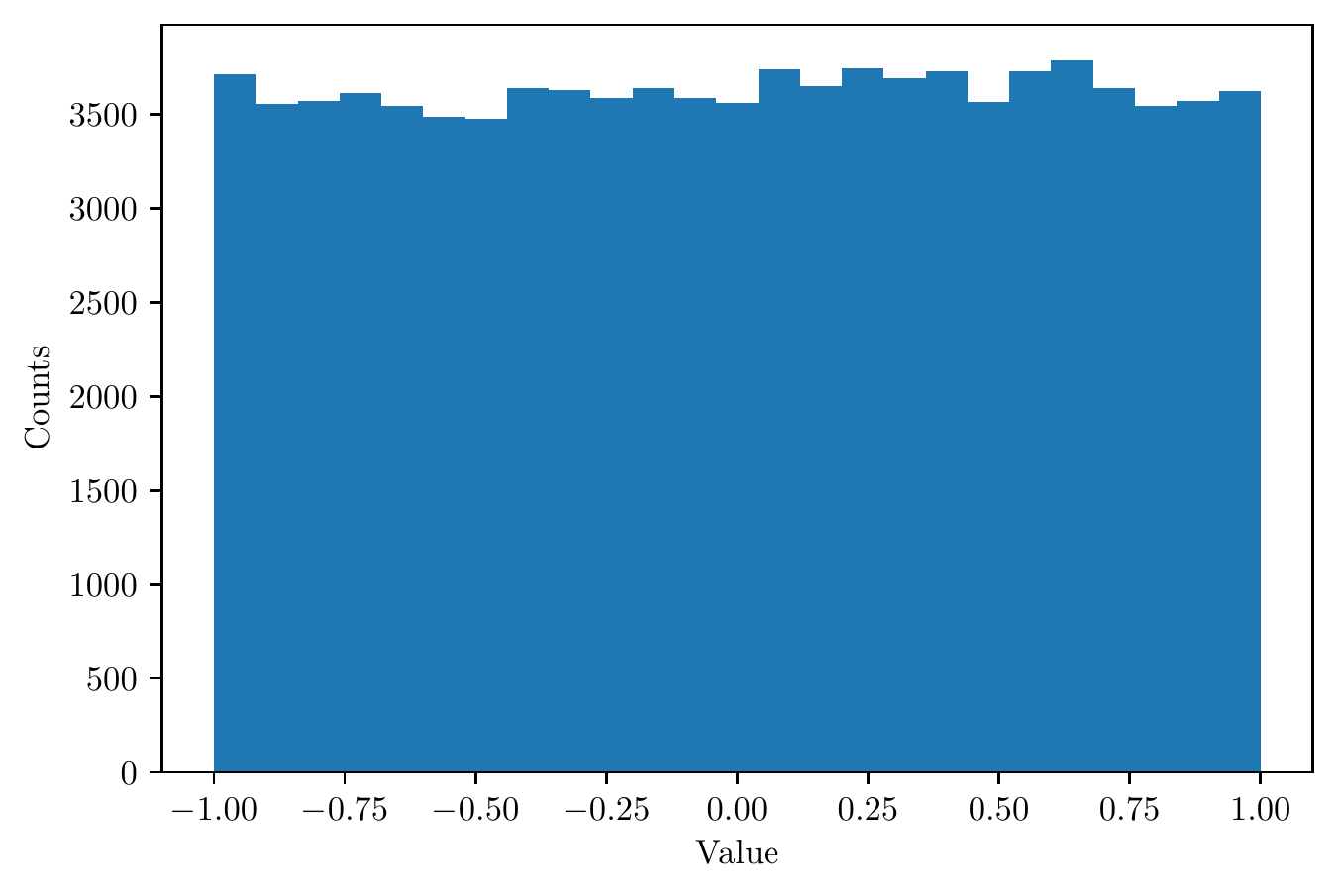}
    \caption{A histogram of the vertex values of the transformed GRFs in Figure \ref{tab:ig_plots} with the mesh of $300\times300$ cells. The pointwise-uniform distribution across samples is implied by ergodicity.}
    \label{fig:pwv_grf}
\end{figure}

\subsection{Mass conservation, monotonic energy descent, and asymptotic quadratic convergence}
    We consider a square 2D domain of $\Omega = [0,40]^2$. The domain is discretized with uniform triangular elements with $100$ cells in each direction. The finite element discretization and the resulting systems of equations of the $\mathring{H}^{-1}$ Newton step problem and the homogeneous Neumann Poisson problem are presented in Appendix \ref{appendix:b}. All the systems of equations are symmetrized and solved by direct solvers through Cholesky decomposition.\\
    \indent Let us consider a set of parameters $(m,\kappa,\epsilon,\sigma) = (0, 1, 0.4, 0.7)$. A minimization sequence, $\{u_n, \mu_n\}_{n=0}^N$, for the parameters is produced via the proposed scheme. The double well backtracking sequence is set to $\vect{\gamma} = \{1, 0.5, 0\}$ and the residual norm tolerance of $\textit{tol} = 10^{-8}$. The initial guess is generated by setting $(s, \delta_G, \gamma_G) = (0.05, 2.5, 0.064)$. In Figure \ref{fig:evolution}, the evolution of (1) the OK free energy, (2) the distance from $\mathring{H}^1_m$, $\{\big|(u_n -m, 1)\big|\}_{n = 0}^N$, (3) the residual norm, $\big\{\norm{B(u_n,\mu_n)}_{((H^1)^*)^2}\big\}_{n=1}^N$, and (4) the weight on the modified Hessian operator, $\{\gamma_n\}_{n=1}^N$, of the minimization sequence are presented. The figures show that the energy is monotonically decreasing. The double well backtracking algorithm for generating energy-descending Newton steps is activated many times, indicated by the change in $\gamma_n$. These results confirm that the proposed scheme leads to monotonic energy descent that enables global convergence. Notice that $u_0\not\in \mathring{H}^1_m$ and the first step is a projection step. The rest of the sequence stays in $\mathring{H}^1_m$, which verifies Lemma \ref{lemma:projection} of the proposed scheme. The figures show that asymptotic quadratic convergence in the residual norm, with no corresponding double well backtracking ($\gamma_n = 1$). The evolution of the order parameter is shown in Figure \ref{tab:evolution_order_parameter}.
    \begin{figure}
        \centering
        \begin{subfigure}{0.49\linewidth}
        \includegraphics[width =  \linewidth]{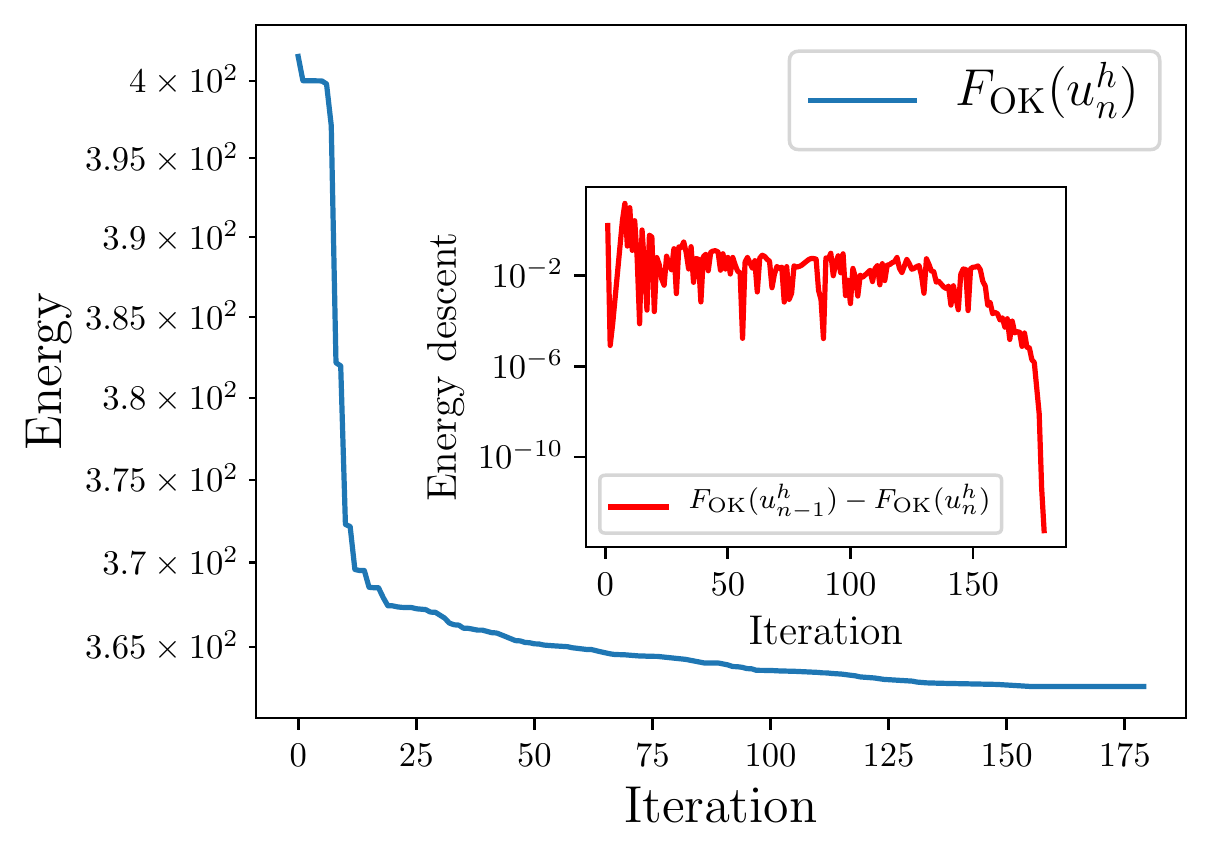}
        \end{subfigure}
        \begin{subfigure}{0.49\linewidth}
        \includegraphics[width =  \linewidth]{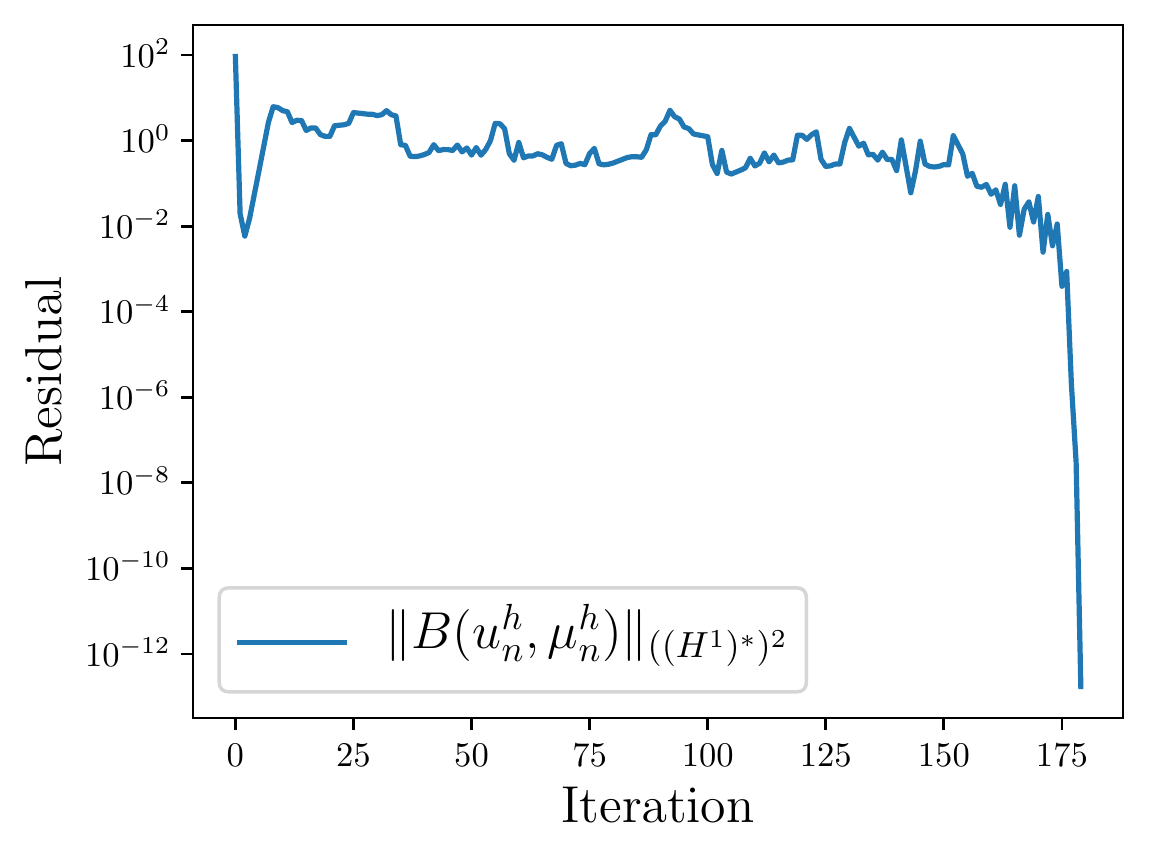}
        \end{subfigure}
        \begin{subfigure}{0.49\linewidth}
        \includegraphics[width =  \linewidth]{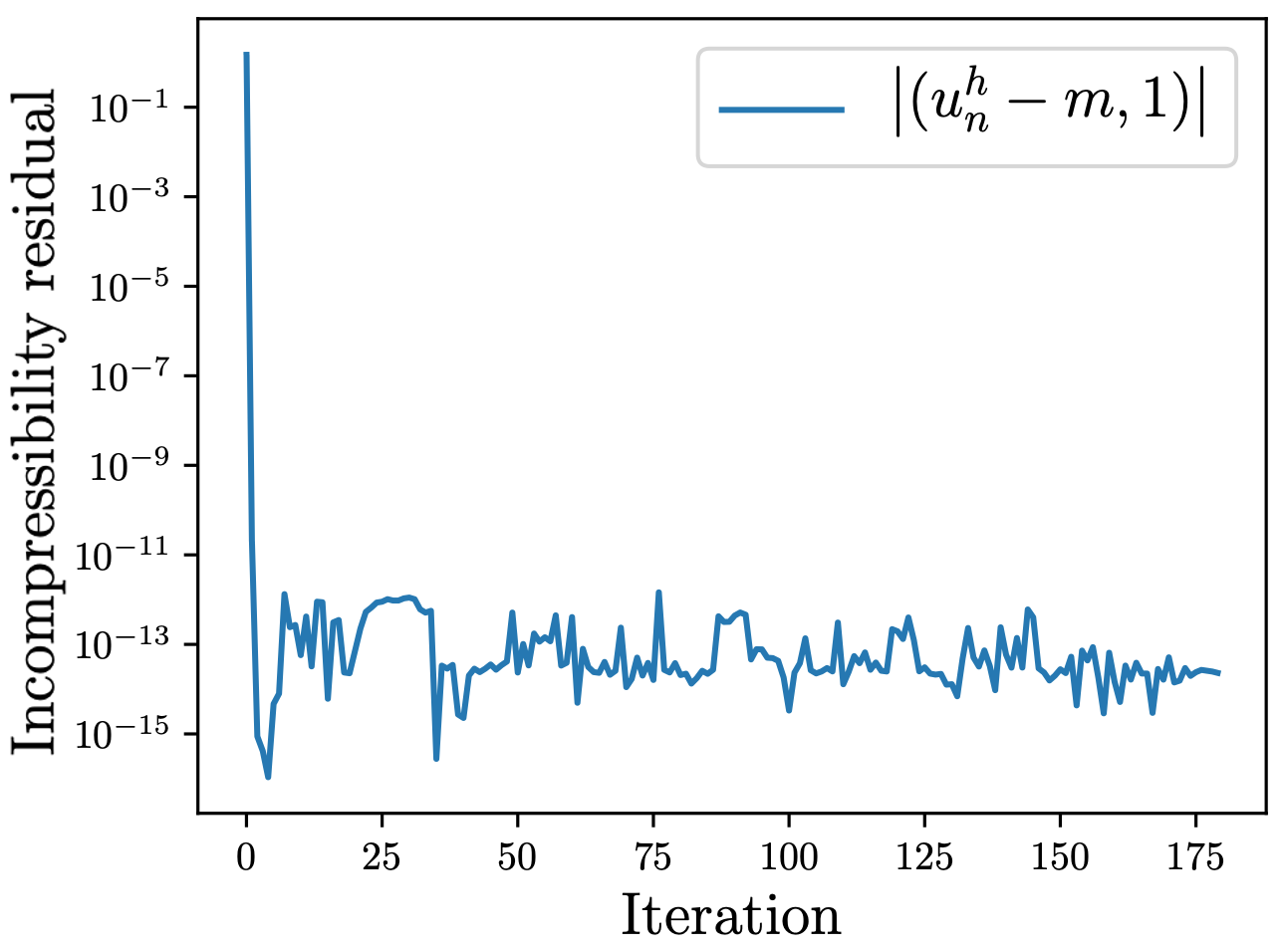}
        \end{subfigure}
        \begin{subfigure}{0.49\linewidth}
        \includegraphics[width =  \linewidth]{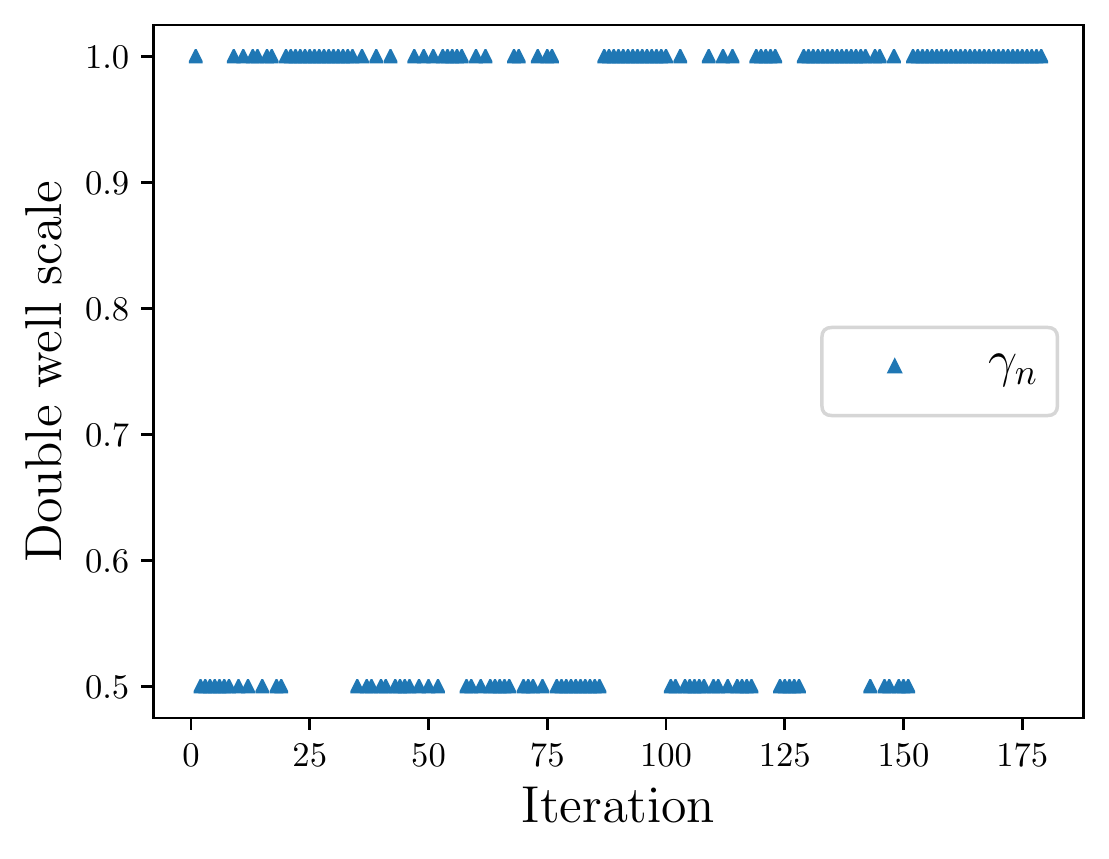}
        \end{subfigure}
        \caption{The evolution of the OK energy, the residual norm, the distance to $\mathring{H}^1_m$ (or the incompressibility residual), and the double well backtracking constant of a minimization sequence generated by the proposed scheme, with the parameters $(m,\kappa,\epsilon,\sigma) = (0, 1, 0.4, 0.7)$ and the double well backtracking sequence $\vect{\gamma} = \{1, 0.5, 0\}$.}
        \label{fig:evolution}
    \end{figure}
    
    \begin{table}
    \centering
    \begin{tabular}{|c|c|c|c|c|}\hline
        $n = 0$ & $n = 20$ & $n = 40$ & $n= 60$ & $n= 80$ \\\hline
        \includegraphics[width = 0.16\linewidth, trim=5 5 70 0, clip]{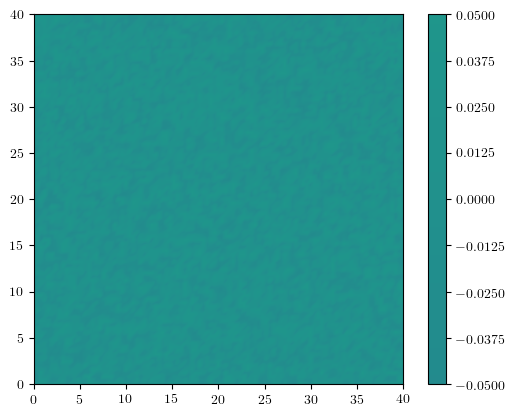} &        \includegraphics[width = 0.16\linewidth, trim=5 5 70 0, clip]{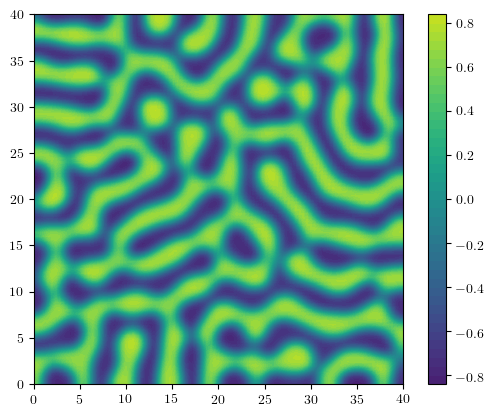} &         \includegraphics[width = 0.16\linewidth, trim=5 5 70 0, clip]{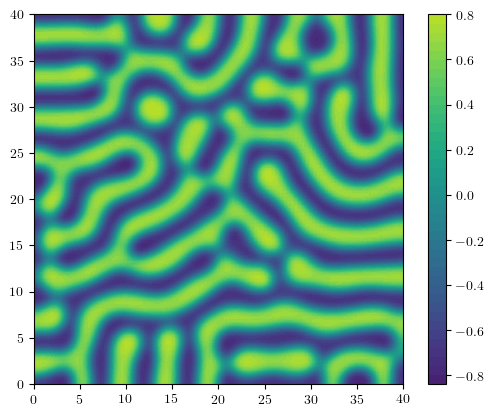} & \includegraphics[width = 0.16\linewidth, trim=5 5 70 0, clip]{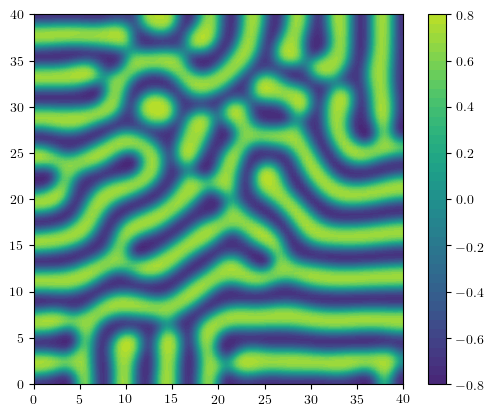} &         \includegraphics[width = 0.16\linewidth, trim=5 5 70 0, clip]{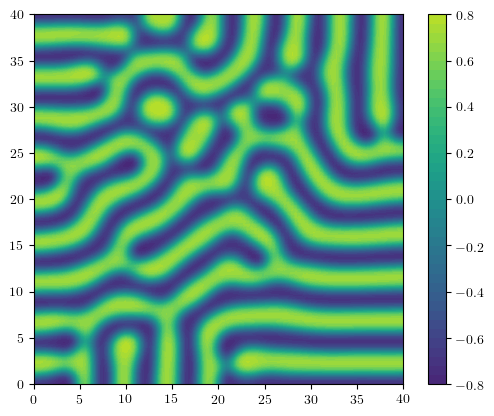} \\\hline
        $n = 100$ & $n = 120$ & $n = 140$ & $n= 160$ & $n= 179$ (final) \\\hline
        \includegraphics[width = 0.16\linewidth, trim=5 5 70 0, clip]{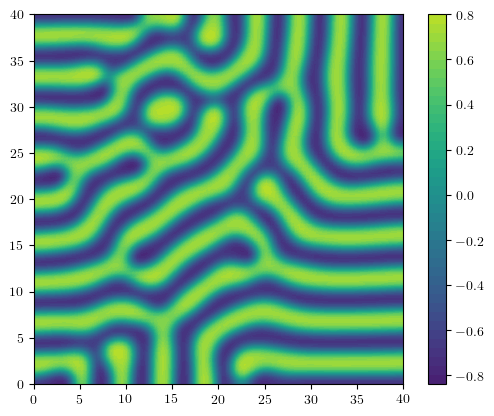} &        \includegraphics[width = 0.16\linewidth, trim=5 5 70 0, clip]{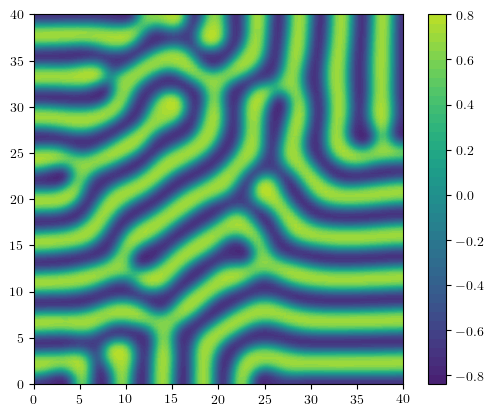} &         \includegraphics[width = 0.16\linewidth, trim=5 5 70 0, clip]{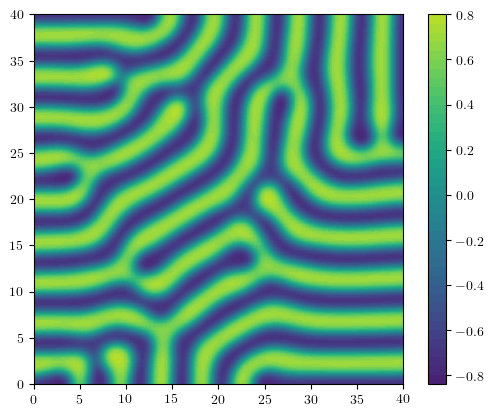} & \includegraphics[width = 0.16\linewidth, trim=5 5 70 0, clip]{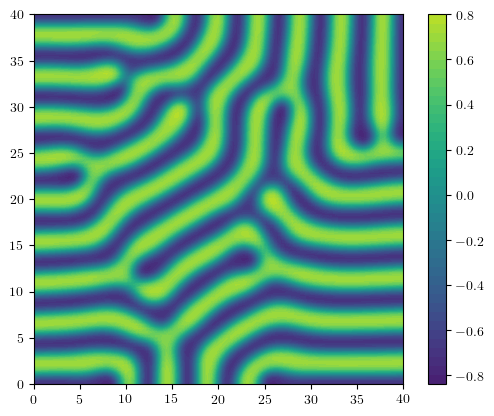} &         \includegraphics[width = 0.16\linewidth, trim=5 5 70 0, clip]{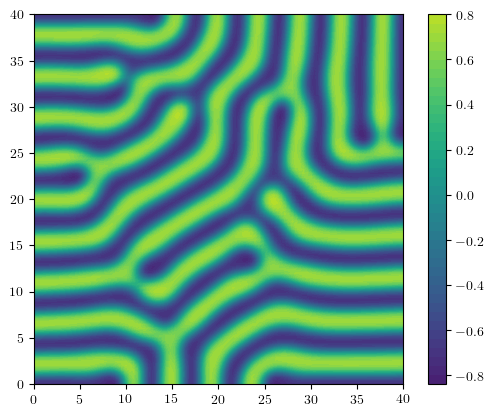} \\
     \hline
    \end{tabular}
    \captionof{figure}{The evolution of the order parameter for a minimization sequence generated by the proposed scheme, with the parameters $(m,\kappa,\epsilon,\sigma) = (0, 1, 0.4, 0.7)$ and the double well backtracking sequence $\vect{\gamma} = \{1, 0.5, 0\}$.}
    \label{tab:evolution_order_parameter}
\end{table}
    
\subsection{Independent of the number of iterations from mesh discretization}
We present a numerical example to demonstrate that the number of iterations to obtain the OK model solutions via the proposed scheme is generally independent of the number of the DoFs of the discretization. Consider the square 2D domain of $\Omega = [0,40]^2$ discretized with uniform triangular elements of $200$, $400$, $600$, and $800$ cells in each direction. The OK model is solved via the proposed scheme with the parameters $(m, \kappa, \epsilon,\sigma) = (0.35, 1,  0.3, 0.7)$, the double well backtracking sequence $\bm{\gamma} = \{1, 0.5, 0\}$, and the four different discretizations. An initial guess is generated with $(s,\delta_G, \gamma_G) = (0.05, 2.5, 0.064)$ and the $800\times 800$ mesh discretization. The initial guess is then projected to the finite element spaces defined by the four discretizations. The number of iterations for each solve as well as the solution states are shown in Table \ref{tab:mesh_refine}. The numerical results show that the proposed Newton scheme takes a similar number of iterations to reach similar solution states for spatial discretization with increasing DoFs.

    \begin{table}
    \centering
    \begin{tabular}{|c|c|c|c|c|}\hline
        \textbf{Discretization} &$200\times 200$ cells& $400\times 400$ cells & $600\times 600$ cells & $800\times 800$ cells \\\hline
        \textbf{Number of iterations} & $118$ & $103$ & $102$ & $111$\\\hline
         & \multirow{7}{*}{\includegraphics[width = 0.16\linewidth]{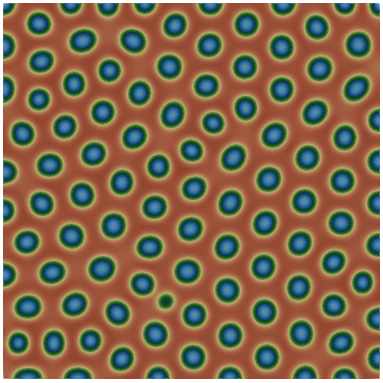}} & \multirow{7}{*}{\includegraphics[width = 0.16\linewidth]{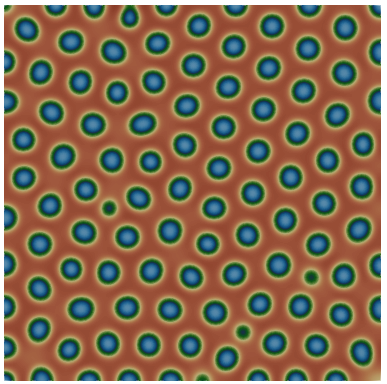}} & \multirow{7}{*}{\includegraphics[width = 0.16\linewidth]{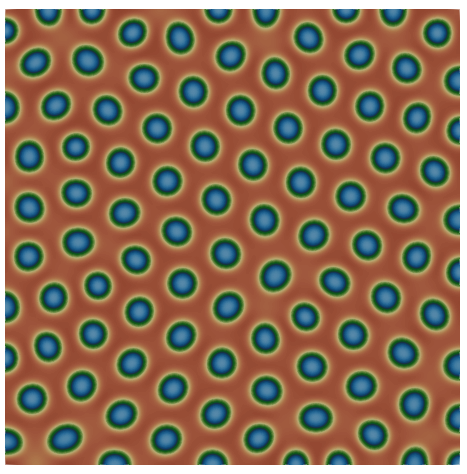}} & \multirow{7}{*}{\includegraphics[width = 0.16\linewidth]{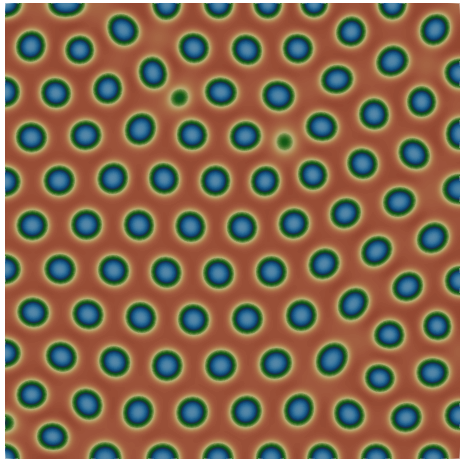}} \\
        & & & &\\
        & & & &\\
        \textbf{Model solutions}& & & &\\
        & & & &\\
        & & & &\\
        & & & &\\\hline
    \end{tabular}
    \caption{The model solutions and the number of iterations to reach below a residual norm value of $10^{-8}$ given by the proposed Newton scheme, with model parameters $(m, \kappa, \epsilon,\sigma) = (0.35, 1,  0.3, 0.7)$, the double well backtracking sequence $\bm{\gamma} = \{1, 0.5, 0\}$, and increasing numbers of DoFs for the discretization.}
    \label{tab:mesh_refine}
\end{table}
\subsection{Comparison to the gradient flow approach}
\indent Now we consider the gradient flow approach to minimize the OK energy, i.e., solving the following Cahn-Hilliard equation
\begin{subequations}
\begin{align}
    u_t(\vect{x},t) - \lap\Big(kW'\big(u(\vect{x},t)\big)-\epsilon^2\lap u(\vect{x},t)\Big) + \sigma\big(u(\vect{x},t)-m\big) &= 0&&(\vect{x},t)\in\Omega\times (0,T]\\ 
    \grad u(\vect{x},t)\cdot\nu = \grad \big(\lap u(\vect{x},t)\big)\cdot\nu &= 0 &&(\vect{x},t)\in\partial\Omega\times(0,T]\\
    u(t,\vect{x})- u_0(\vect{x}) &=0 && (\vect{x},t)\in\Omega\times\{0\}
\end{align}
\end{subequations}
where $u:H^1\big(0, T;H^4(\Omega)\big)$.\\
\indent The solutions obtained by the gradient flow approach and the proposed scheme are compared in Table \ref{tab:comparison}. The computational domain for the solutions is the unit square and it is discretized by uniform triangular finite elements with $200$ cells in each direction. Solutions for the same parameters are obtained through the same initial state generated with $(s,\delta_G,\gamma_G) = (0.1, 100, 0.0016)$. Here we consider two sets of parameters with $(\kappa,\epsilon,\sigma) = (1.0, 0.01, 500)$ and two different mass averages, $m = 0$ and $m=0.3$. For the proposed scheme, a double well backtracking sequence of $\vect{\gamma} = \{1,0.75, 0.5, 0.25, 0\}$ is employed. For the gradient flow approach, the mixed formulation and the first-order convex-splitting scheme for temporal discretization is implemented \cite{Elliott1993,Eyre1998, Wu2014,Guan2014}. The step size for the temporal discretization is set to the conventional value $\delta t = \epsilon^2 = 10^{-4}$.\\
\indent The minimizers obtained by both methods are similar in terms of the feature size and periodicity, while the gradient flow approach is more costly. It requires $\mathcal{O}(10^{5})$ time steps to reach a value below the residual norm tolerance of $10^{-6}$. On the contrary, the proposed Newton scheme takes fewer than $200$ iterations by solving systems of equations with similar asymptotic computational complexity at each Newton iteration as that of the gradient flow approach at each time step. A comparison of the computational complexity at each time step or Newton iteration for the gradient flow approach and the proposed Newton scheme is presented in Table \ref{tab:complexity}. From this example, we can see that the proposed scheme is around three orders of magnitude faster than the traditional gradient flow approach.
\begin{table}
    \centering
    \begin{tabular}{|C|C|C|}\hline
    & \textbf{Gradient flow approach} & \textbf{Proposed Newton scheme} \tabularnewline\hline
     $\mathbf{m=0}$ & \includegraphics[width=0.9\linewidth,trim = 400 160 400 10, clip]{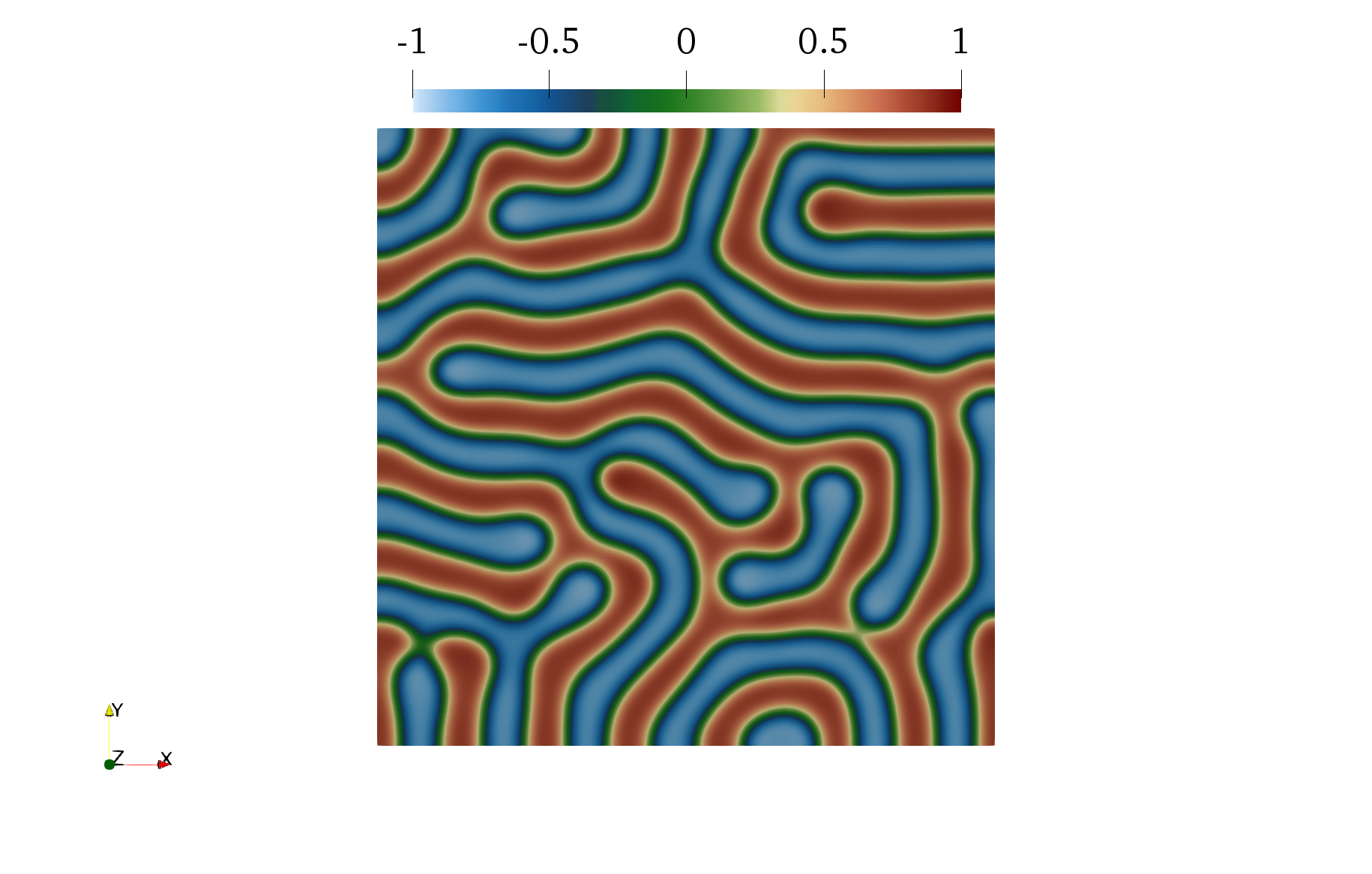} & \includegraphics[width=0.9\linewidth, trim = 400 160 400 10, clip]{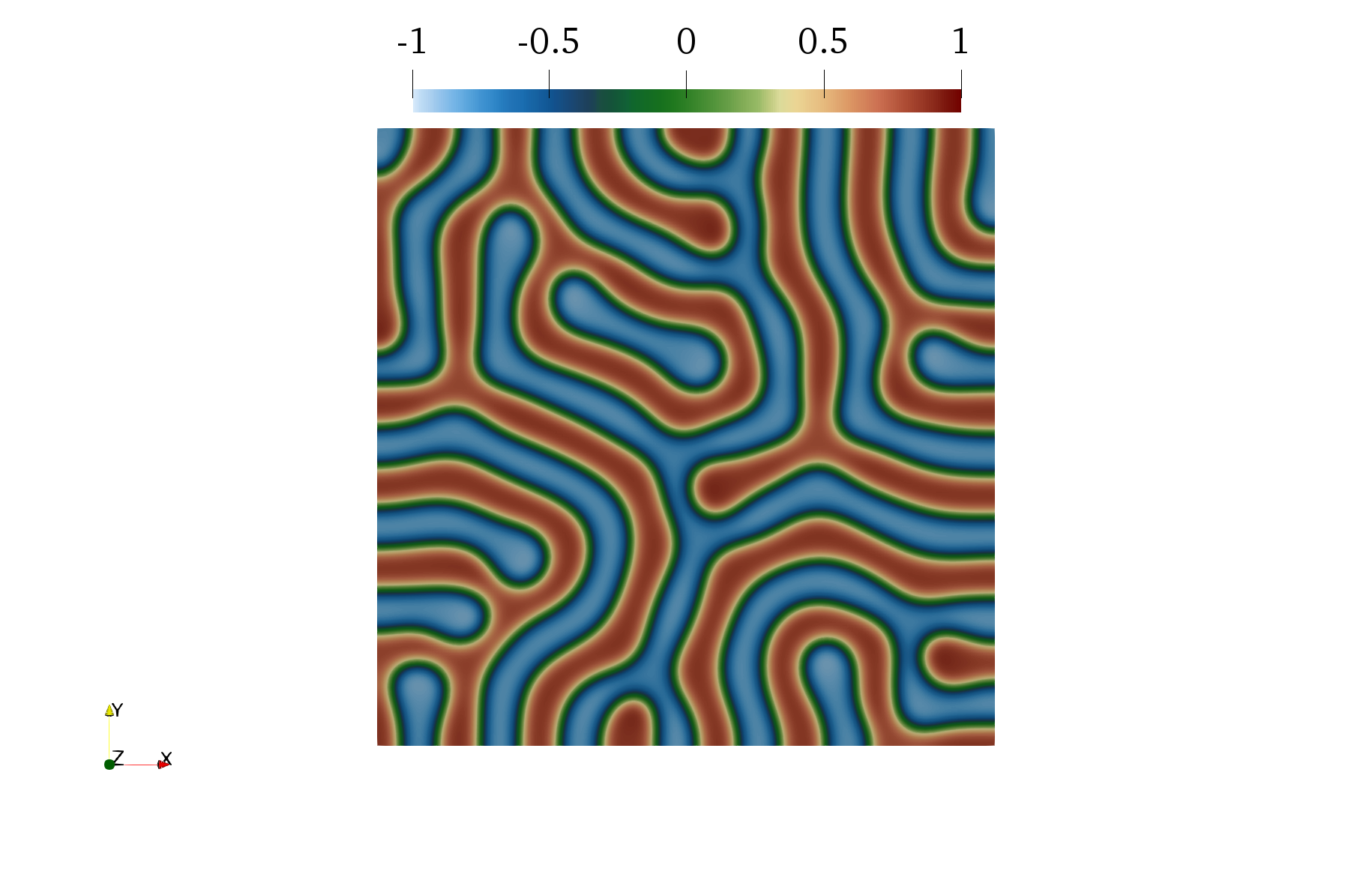}\tabularnewline\hline
     \textbf{Number of time steps or iterations} & $6.54\times10^5$ & $160$\tabularnewline\hline
      $\mathbf{m=0.3}$  & \includegraphics[width=\linewidth, trim = 400 160 400 10, clip]{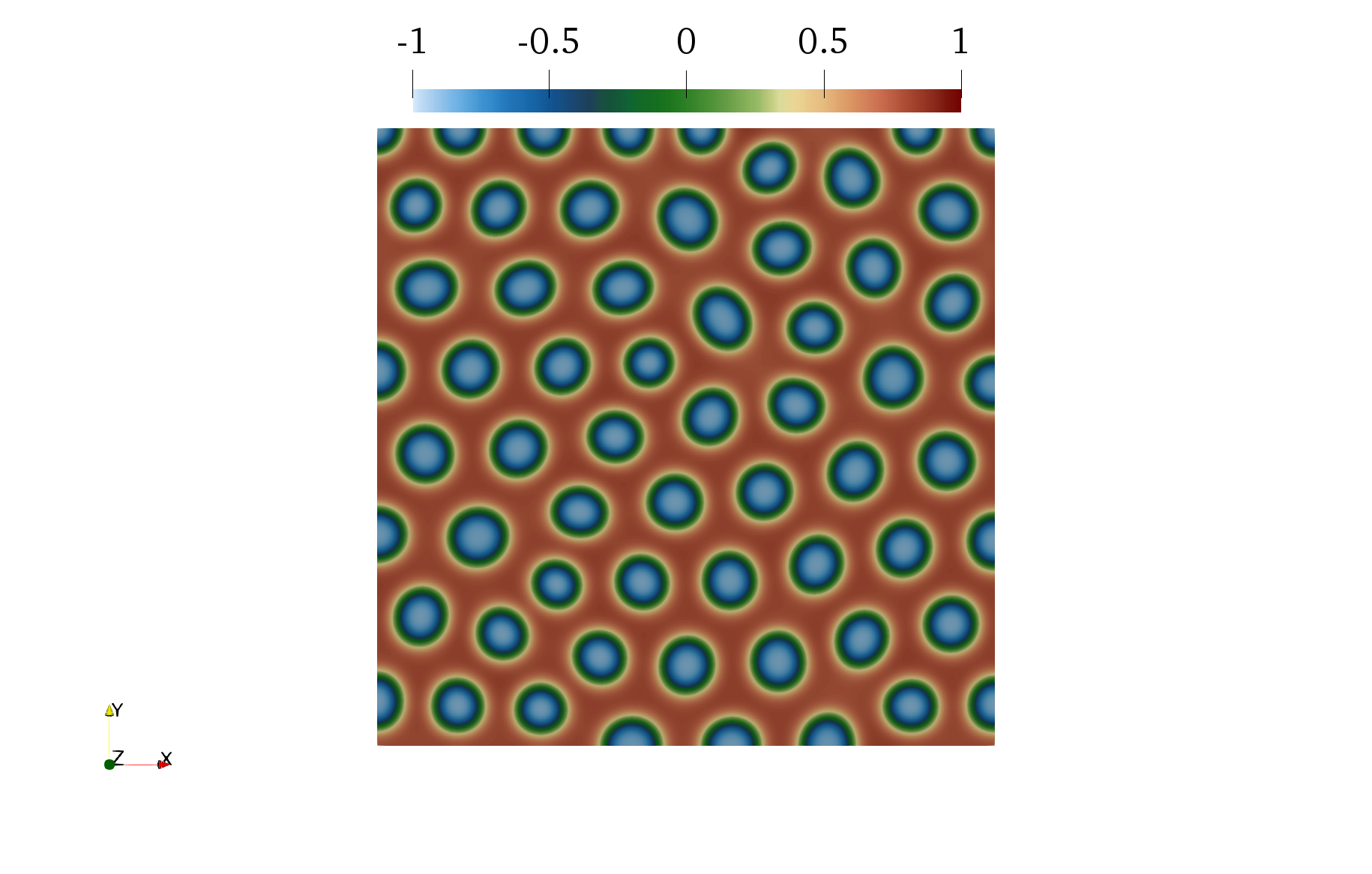} & \includegraphics[width=\linewidth, trim = 400 160 400 10, clip]{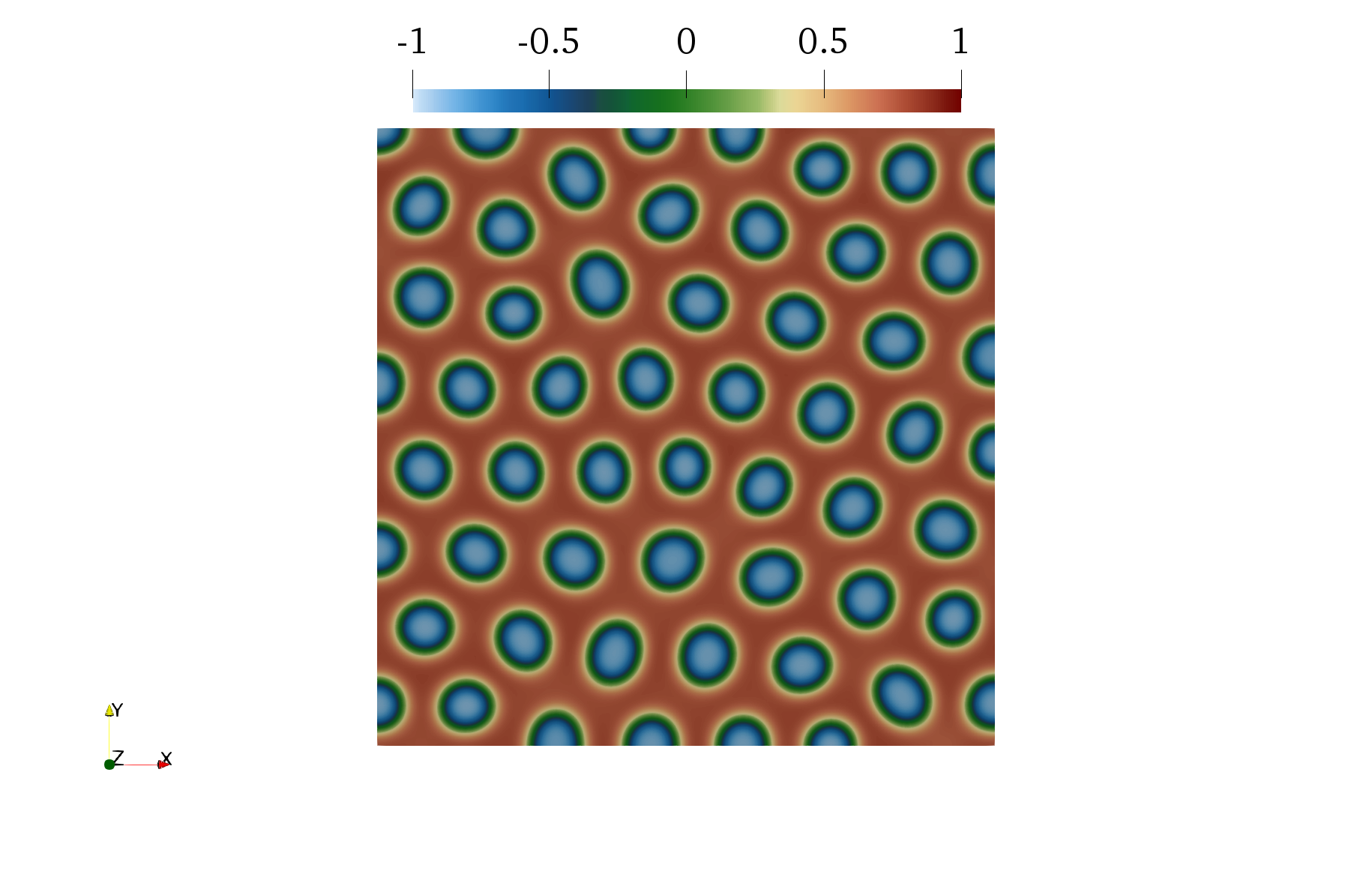}\tabularnewline\hline
     \textbf{Number of time steps or iterations} & $1.36\times 10^5$ & $141$\tabularnewline\hline
    \end{tabular}
    \caption{A comparison between the minimizers obtained through the gradient flow and the proposed scheme, and the number of time steps (gradient flow) or Newton iterations (proposed method) to reach below a residual norm value below $10^{-6}$.}
    \label{tab:comparison}
\end{table}

\begin{table}
    \centering
    \begin{tabular}{|C|C|C|}\hline
    & \textbf{Linear system solves per time step or iteration} & \textbf{Operation counts}\\\hline
    \textbf{Gradient flow approach} & Solving a nonlinear algebraic system with $2N\times 2N$ Jacobian via an iterative method & $\mathcal{O}(N^{3/2})$ $\times$ number of iterative solves\\\hline
    \multirow{3}{*}{\shortstack{\textbf{Proposed Newton}\\\textbf{scheme}}} & Solving $2N\times 2N$ Newton step problems with backtracking on the double well & $\mathcal{O}(N^{3/2})$ $\times$ number of double well backtracking steps \\\cline{2-3}
    & Solving an $(N+1)\times (N+1)$ linear system for $\delta w_n$ & $\mathcal{O}(N\log N)$ with pre-factorization\\\cline{2-3}
    & Solving an $N\times N$ linear system for $\mu_n$ &  $\mathcal{O}(N\log N)$ with pre-factorization \\\hline
    \end{tabular}
    \caption{A comparison of the computational complexity of the gradient flow approach and the proposed Newton method, measured by the dominant cost of linear system solves at each time step (gradient flow) or Newton iteration (proposed method). The 2D domain discretized with $N$ DoFs, and an optimal complexity sparse directed solver is assumed. The proposed Newton scheme is seen to have the same asymptotic complexity as the gradient flow approach per time step or iteration.}
    \label{tab:complexity}
\end{table}

\begin{remark}
Although the gradient flow approach requires many more iterative solves to reach solutions compared to the proposed scheme, several studies have been done to construct preconditioners that make the nonlinear solve at each time step scalable with respect to the DoFs of the spatial discretization \cite{Farrell2017,Li2018}. It is noteworthy that the $\mathring{H}^{-1}$ Newton step problem of the proposed scheme is much more ill-conditioned than the Newton step problem for the nonlinear solves of the gradient flow approach. Therefore, an appropriate preconditioner is imperative for accelerating the proposed scheme, especially when considering 3D problems that require iterative solves for the linear systems. See Appendix \ref{appendix:c} for a discussion on a preconditioner that accelerates iterative solvers for the $\mathring{H}^{-1}$ Newton step problem in a portion of the parameter space.
\end{remark}
\section{Chemical Substrate Guided DSA of BCPs}\label{sec:chemical_substrate}
In this section, the proposed scheme is used to study the chemical substrate guided DSA of BCPs. A novel polymer-substrate interaction energy based on the OK model is first introduced. A generalized modified Hessian operator is proposed to accommodate the additional polymer-substrate interaction. Then we present a numerical study based on a physically-reasonable set up.

\subsection{The polymer-substrate interaction energy}
Suppose that a patterned chemical substrate is placed on the bottom boundary ($x_3 = 0$) of a 3d box domain $D$, which contains a diblock copolymer melt consisting of monomers of type A and B. Let $\partial D_s\subset \partial D$ be the region occupied by the chemical substrate. The chemical substrate interacts with both monomers in the diblock copolymer. Let $\eta_K$, $K = A, B$, be the dimensionless parameters that represent the strength of interaction between monomer type $K$ and the substrate. In accordance with \cite{Detcheverry2010}, we define the contribution of the polymer-substrate interaction to the free energy:
\begin{equation}
    F_s(u_A, u_B) = \sum_{K = A, B} \eta_K \exp\Big(-\frac{x_3^2}{2d_s^2}\Big)u_K\vect{1}_{(x_1,x_2,0)\in\partial D_s}\;,
\end{equation}
where $\vect{1}$ is an indicator function and $d_s$ is the decay length scale of the substrate effect in the direction perpendicular to the substrate. It has the following form:
\begin{equation}
    d_s = c_sR_e = c_s\sqrt{N}l\;,
\end{equation}
where $c_s$ is a positive scalar parameter and $R_e$ is the mean-square end-to-end distance of the polymer. Following the relations between the material parameters and the OK model parameters \eqref{eq:model_param}, the decay length scale can be expressed as:
\begin{equation}\label{eq:decay_length}
    d_s^2 = \cfrac{12\sqrt{3}c_s^2\epsilon}{\sqrt{\sigma(1-m^2)}}\;.
\end{equation}

The above energy can be reformulated as a function of the order parameter $u\coloneqq u_A-u_B$. It is further shifted and normalized by the Flory-Huggins parameter $\chi$, in accordance with the derivation in \cite{Choksi2003}, leading to the following form of the polymer-substrate interaction energy:
\begin{equation}
    F_s(u) = 
    \begin{cases}
    \cfrac{\eta_s}{2} \exp\Big(-\cfrac{x_3^2}{2d_s^2}\Big)u \vect{1}_{(x_1,x_2,0)\in\partial \Omega_s}\;, & u\in[0,1]\;;\\
    \infty\;, & \textrm{otherwise}\;,
    \end{cases}
\end{equation}
where $\eta_s\coloneqq (\eta_B-\eta_A)/\chi$ and $\partial \Omega_s$ is the region occupied by the chemical substrate on the normalized domain $\Omega$ with a unit volume. Notice that, if $\eta_s > 0$, then the monomer A is more attractive to the substrate, resulting in a lower energy at $u = 1$, and vice versa. If $|\eta_s|>1$, one of $u=\pm 1$ is no longer a local minimum\footnotemark, hence we only consider $\eta_s\in[-1,1]$. The total interaction energy now has the form:
\footnotetext{If one of $u=\pm 1$ is not a local minimum, then the density multiplication effect \cite{Liu2013, Ruiz2008} can no longer be induced by the chemical substrate.}
\begin{equation}\label{eq:total_interaction_energy_original}
    I(u) = W(u) + F_s(u) = 
    \begin{cases}
    \cfrac{1}{4}(1-u^2)+ \cfrac{\eta_s}{2} \exp\Big(-\cfrac{x_3^2}{2d_s^2}\Big)u\vect{1}_{(x_1,x_2,0)\in\partial \Omega_s}\;, & u\in[-1,1]\;;\\
    \infty\;, & \textrm{otherwise}\;,
    \end{cases}
\end{equation}
\indent We now seek to formulate an approximate form of the total interaction energy that has the following properties: (1) it has the form of a double well, i.e., with local minima at $u=\pm 1$; (2) it maintains the same energy difference between $u=\pm 1$ as the original form; (3) it resembles the shape of the original form in $u\in[-1, 1]$. We thus fit a fourth-order polynomial $P_4(u)$ with the following constraints:
\begin{equation}
    P_4(1) = 0\;,\quad P_4(-1) = 2I(-1) \;,\quad P_4'(1) = 0\;,\quad P_4'(-1) = 0\;,\quad P_4(u_\textrm{max}) = I(u_{\textrm{max}})\;,
\end{equation}
where $u_{\textrm{max}}$ is the local maximizer of $I$. The solution to the polynomial fitting problem reveals that the approximate total interaction energy has the following form:
\begin{equation}\label{eq:approx_total_interaction}
    P_4(u) = \frac{1}{4}(1-u^2)^2 +  \frac{\eta_s}{4}\exp\Big(-\cfrac{x_3^2}{2d_s^2}\Big)(u^3-3u+2)\vect{1}_{(x_1,x_2,0)\in\partial \Omega_s}\;.
\end{equation}
A comparison between the approximate form $P_4$ and the original form $I$ is shown in Figure \ref{fig:approximated_interaction_energy}. The approximate form is simply a sum between the approximate double well potential and a third-order polynomial that represents the contribution of the polymer-substrate interaction. We thus redefine the polymer-substrate interaction energy and the total interaction energy as
\begin{align}
    F_s(u) &= \frac{\eta_s}{4}\exp\Big(-\cfrac{x_3^2}{2d_s^2}\Big)(u^3-3u+2)\vect{1}_{(x_1,x_2,0)\in\partial \Omega_s}\;,\\
    I(u) &= \frac{1}{4}(1-u^2)^2 +  \frac{\eta_s}{4}\exp\Big(-\cfrac{x_3^2}{2d_s^2}\Big)(u^3-3u+2)\vect{1}_{(x_1,x_2,0)\in\partial \Omega_s}\;.
\end{align}
\begin{figure}[!htbp]
    \centering
    \begin{subfigure}{0.32\linewidth}
    \includegraphics[width = \linewidth]{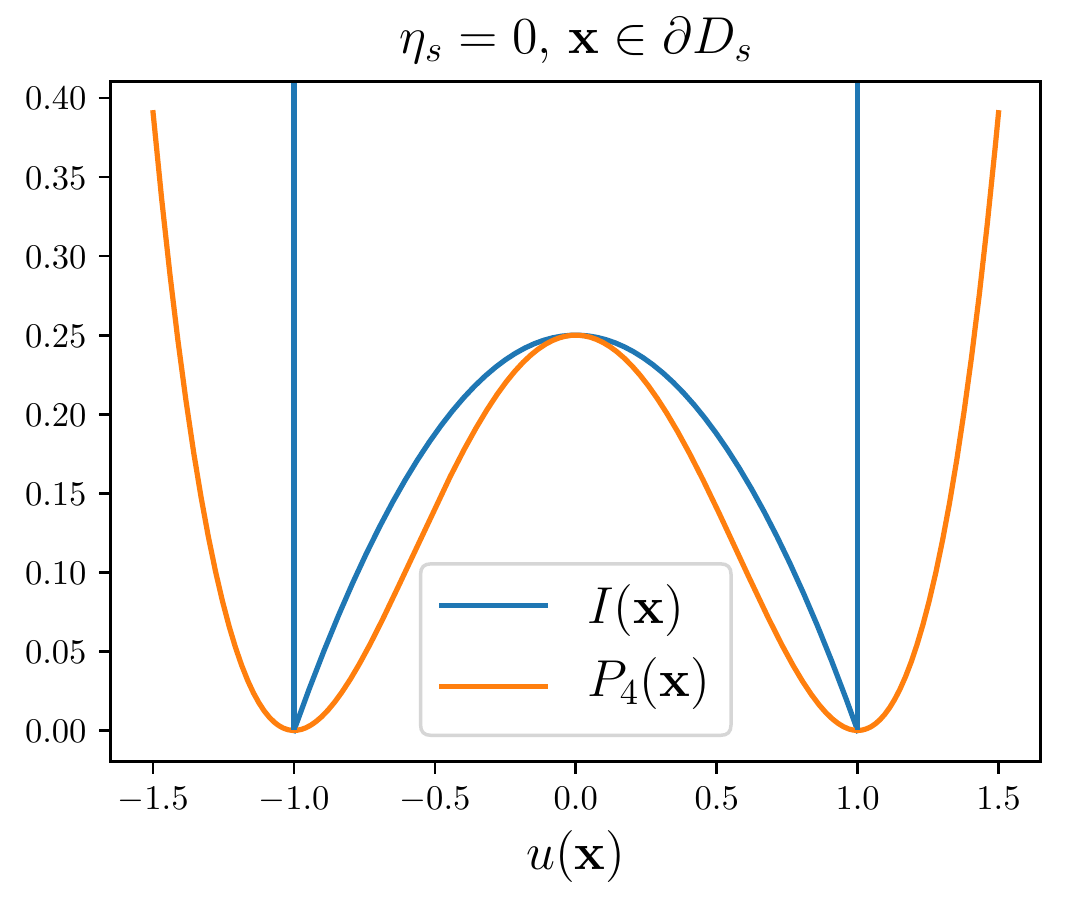}
    \end{subfigure}
    \begin{subfigure}{0.32\linewidth}
    \includegraphics[width = \linewidth]{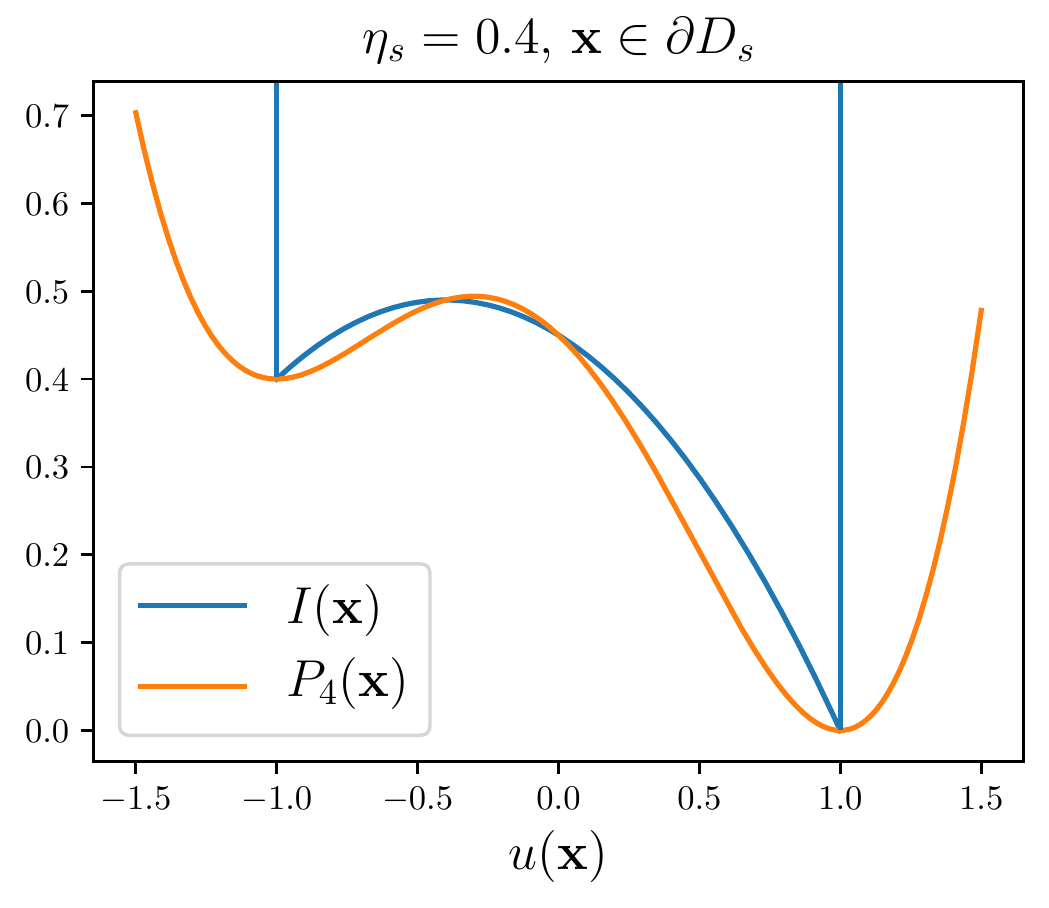}
    \end{subfigure}
    \begin{subfigure}{0.32\linewidth}
    \includegraphics[width = \linewidth]{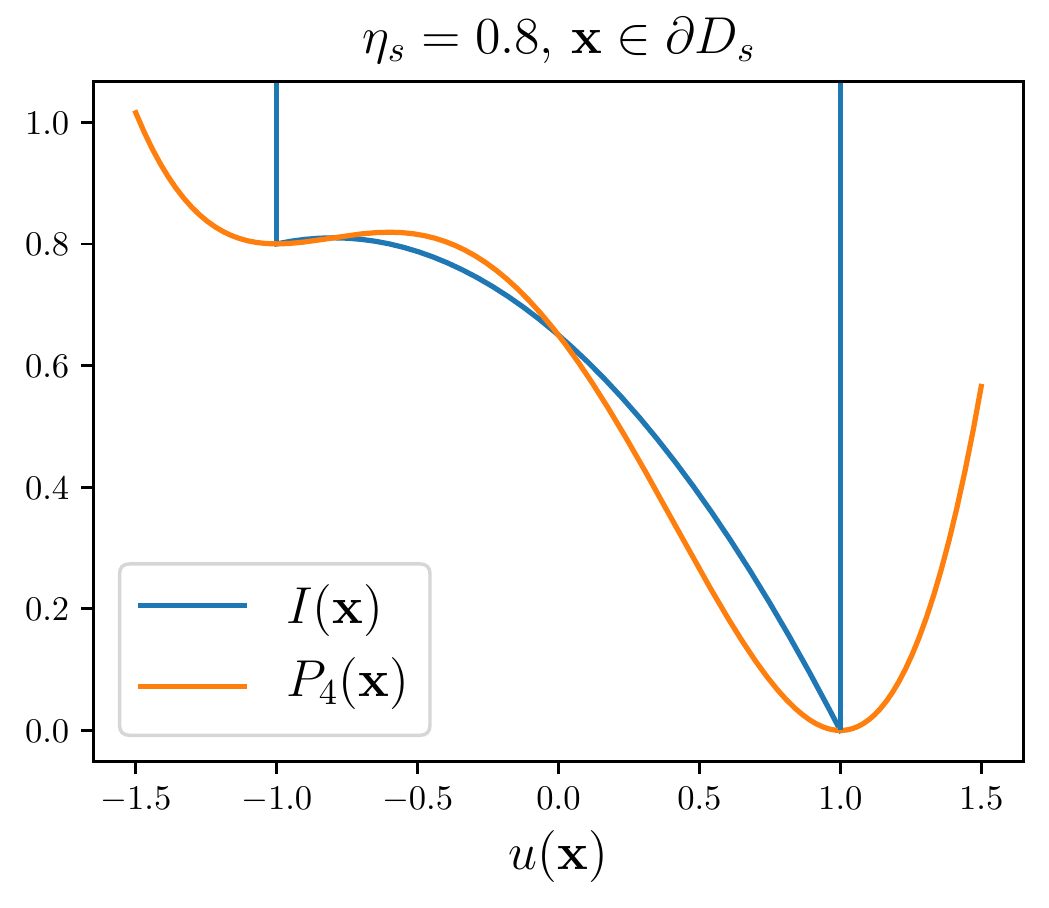}
    \end{subfigure}
    \caption{A comparison between the approximate total interaction energy \eqref{eq:approx_total_interaction} and its original form \eqref{eq:total_interaction_energy_original} at $\eta_s = 0, 0.4, 0.8$ and $\vect{x} \in\partial D_s$.}
    \label{fig:approximated_interaction_energy}
\end{figure}
\subsection{Generalizing the modified Hessian operator}
To solve the OK model with the additional polymer-substrate interaction energy, we generalize the modified Hessian operator in Section \ref{subsubsec:bt_gamma}. Consider  the following decomposition of $I''(u)$:
\begin{equation}
    \begin{aligned}
    I''(u) &= 3u^2 +\frac{3}{2}\eta_s\exp\Big(-\cfrac{x_3^2}{2d_s^2}\Big)u\vect{1}_{(x_1,x_2,0)\in\partial \Omega_s} - 1\\
    & = \Big(\sqrt{2}u + \frac{3\sqrt{2}}{8}\eta_s\exp\Big(-\cfrac{x_3^2}{2d_s^2}\Big)\vect{1}_{(x_1,x_2,0)\in\partial \Omega_s}\Big)^2\\
    &\qquad + u^2-1-\frac{9}{32}\eta_s^2\exp\Big(-\cfrac{x_3^2}{d_s^2}\Big)\vect{1}_{(x_1,x_2,0)\in\partial \Omega_s}\;.
    \end{aligned}
\end{equation}
A scalar weight $\gamma\in[0,1]$ is introduced, similar to \eqref{eq:weighted_dw}:
\begin{equation}
    \begin{aligned}
    I''_\gamma(u) &= \Big(\sqrt{2}u + \frac{3\sqrt{2}}{8}\eta_s\exp\Big(-\cfrac{x_3^2}{2d_s^2}\Big)\vect{1}_{(x_1,x_2,0)\in\partial \Omega_s}\Big)^2\\
    &\qquad + \gamma\Big(u^2-1 - \frac{9}{32}\eta_s^2\exp\Big(-\cfrac{x_3^2}{d_s^2}\Big)\vect{1}_{(x_1,x_2,0)\in\partial \Omega_s}\Big)\;.
    \end{aligned}
\end{equation}
Notice that, when $\eta_s = 0$ or $(x_1,x_2,0) \not \in\partial\Omega_s$, we recover the results in Section \ref{subsubsec:bt_gamma}. The second derivative above is positive when $\gamma = 0$ and it is straightforward to see that Lemma \ref{prop:coercivity} still applies for the modified Hessian operator defined with $I''_{\gamma}(u)$. Consequently, the double well backtracking algorithm for generating descending Newton steps, Algorithm \ref{alg:newton_step}, still holds for the generalized modified Hessian operator and we retain the global convergence of the proposed algorithm.
\subsection{A numerical study with the proposed scheme}
We provide an application of the proposed scheme in the context of the chemical substrate guided DSA of BCPs. Suppose we have a physical domain $D=[0, 400]\textrm{ nm}\times[0, 250]\textrm{ nm}\times[0,30]\textrm{ nm}$, which contains the melt of a symmetric diblock copolymer. We take the material parameters for the polymer to be
\begin{equation*}
    f = 0.5\;,\quad\chi = 0.1\;,\quad N = 900,\quad l = 0.65 \textrm{ nm}\;.
\end{equation*}
\indent Let the computational domain be $\Omega = [0,40]\times[0, 25]\times [0, 3]$. The OK parameters computed by the relations \eqref{eq:model_param} are $(m,\kappa,\epsilon,\sigma) = (0, 1, 0.237, 1.68)$. We first compute two equilibrium morphologies without the chemical substrate through the proposed scheme. The equilibrium morphologies as well as the evolution of the residual norm and the energy are shown in Table \ref{tab:states_w/o_substrate}. The equilibrium morphologies have dominant features of strips perpendicular to the $x_3$ axis, with varying orientations in the $x_1$-$x_2$ plane. Minor non-uniformity is seen in the $x_3$ direction, mainly located near the boundary.

\begin{table}[!htbp]
\centering
    \begin{tabular}{|c|c|c|c|}\hline
    \multicolumn{2}{|c|}{\textbf{Model solutions}} &  \textbf{Energy evolution} & \textbf{Residual evolution}\\ \hline
     \multicolumn{2}{|c|}{\includegraphics[width = 0.25\linewidth]{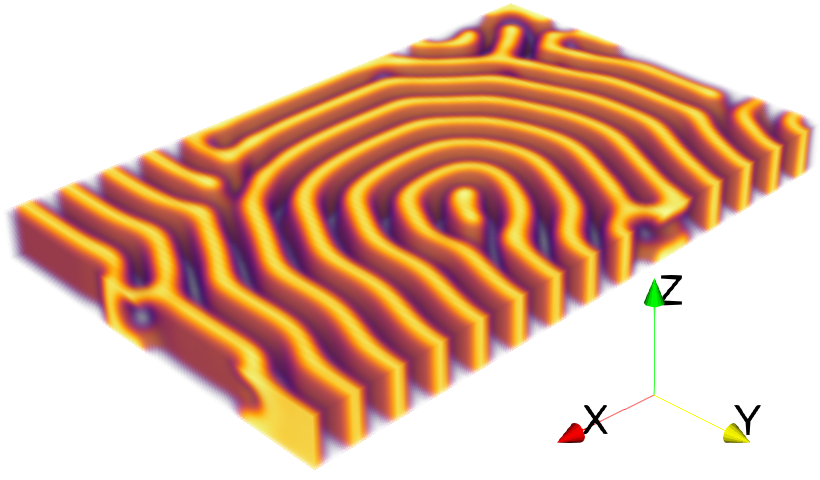}\includegraphics[width = 0.25\linewidth]{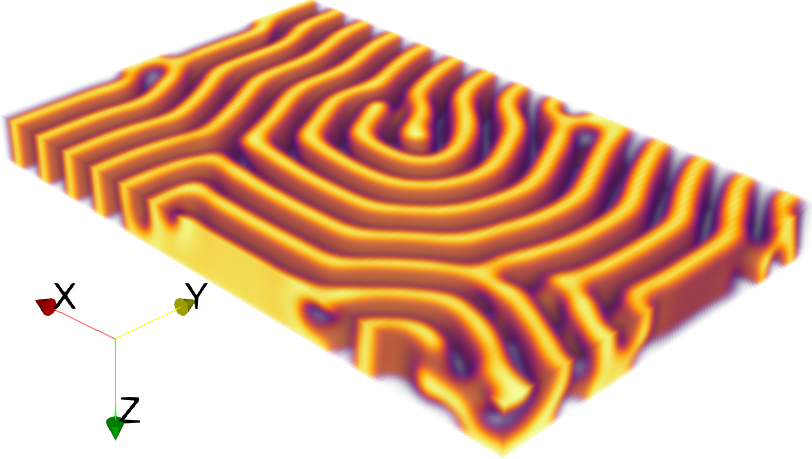}} & \includegraphics[width = 0.21\linewidth]{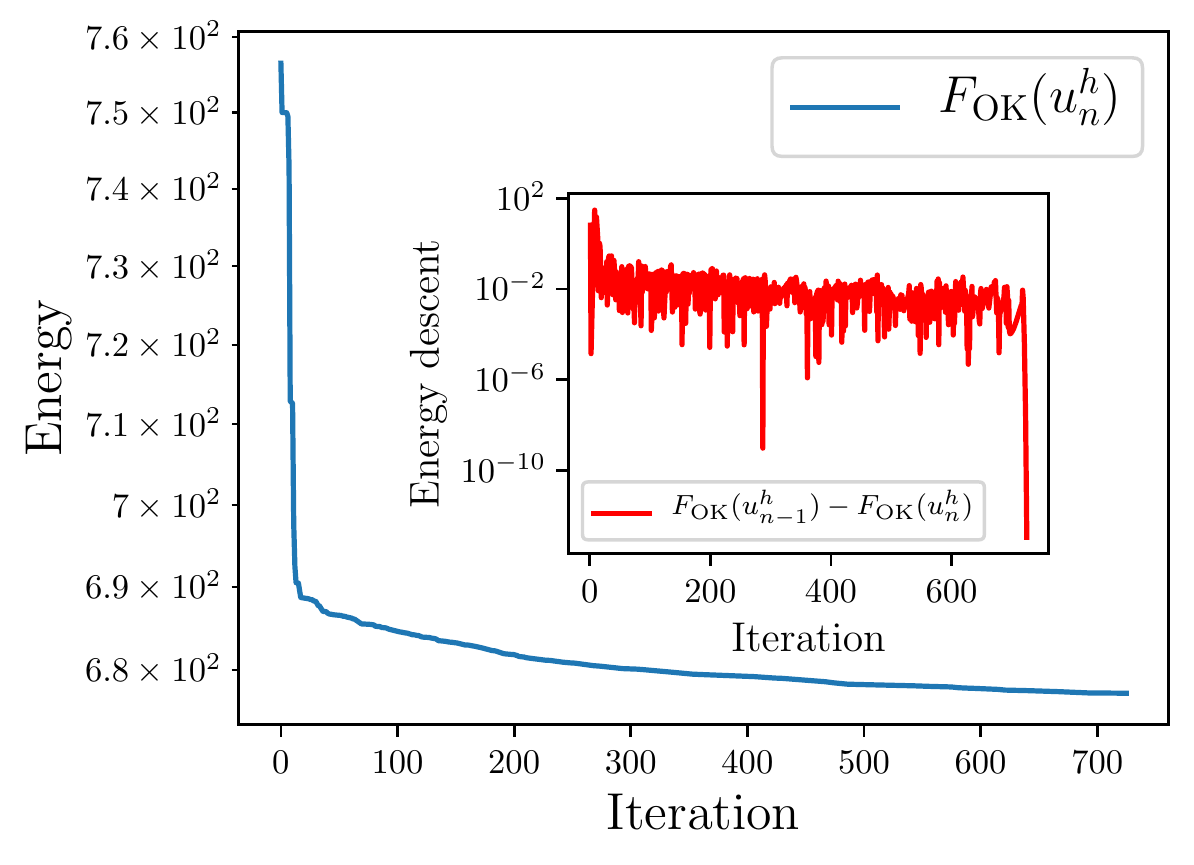}& \includegraphics[width = 0.21\linewidth]{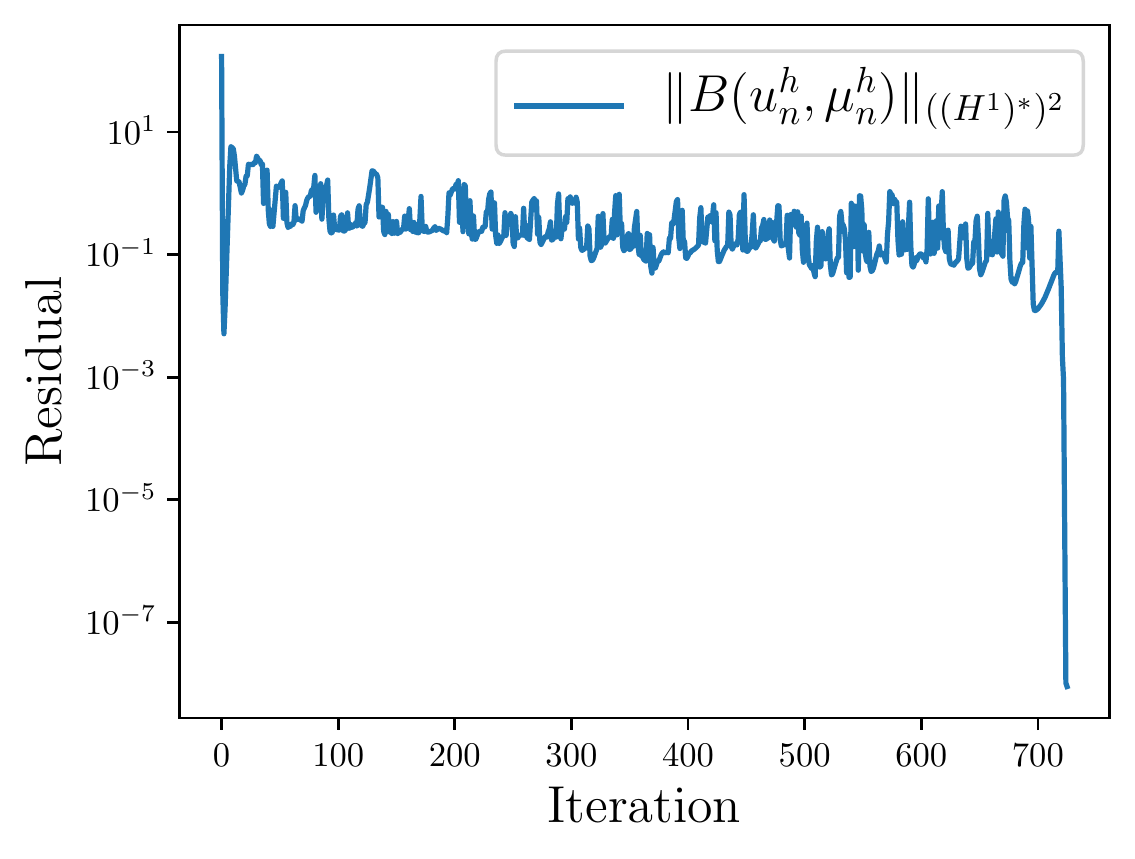}\\\hline
    \multicolumn{2}{|c|}{\includegraphics[width = 0.25\linewidth]{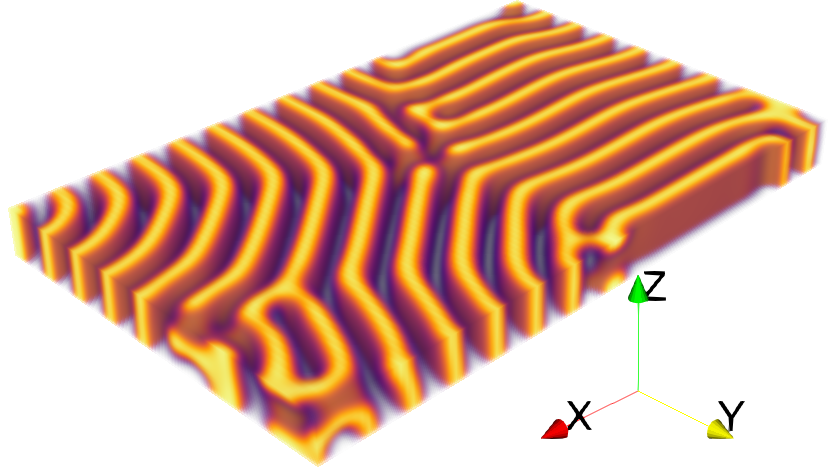}\includegraphics[width = 0.25\linewidth]{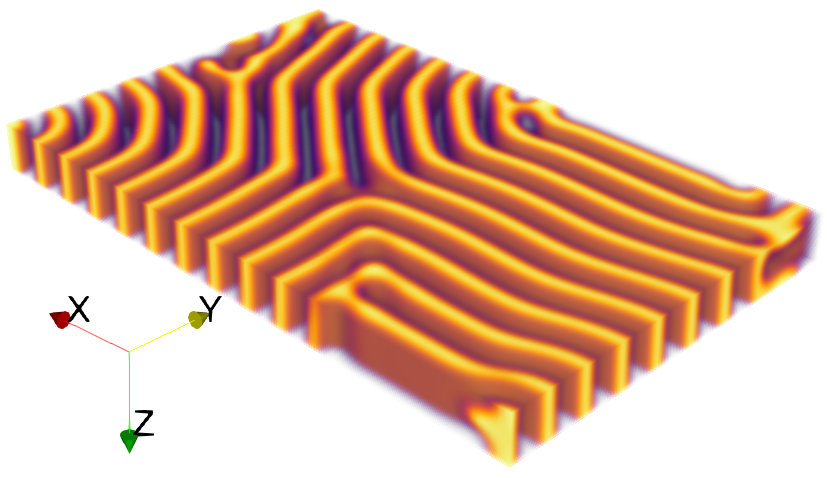}} & \includegraphics[width = 0.21\linewidth]{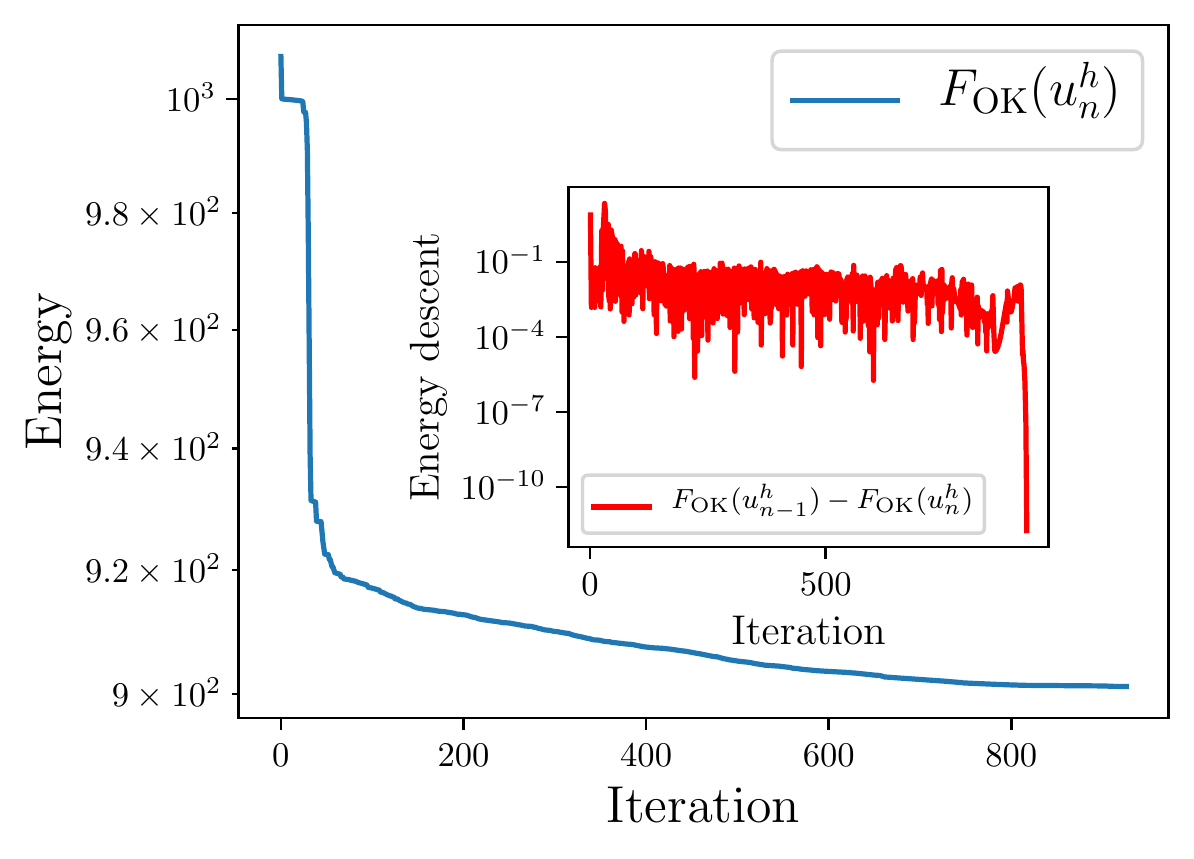}& \includegraphics[width = 0.21\linewidth]{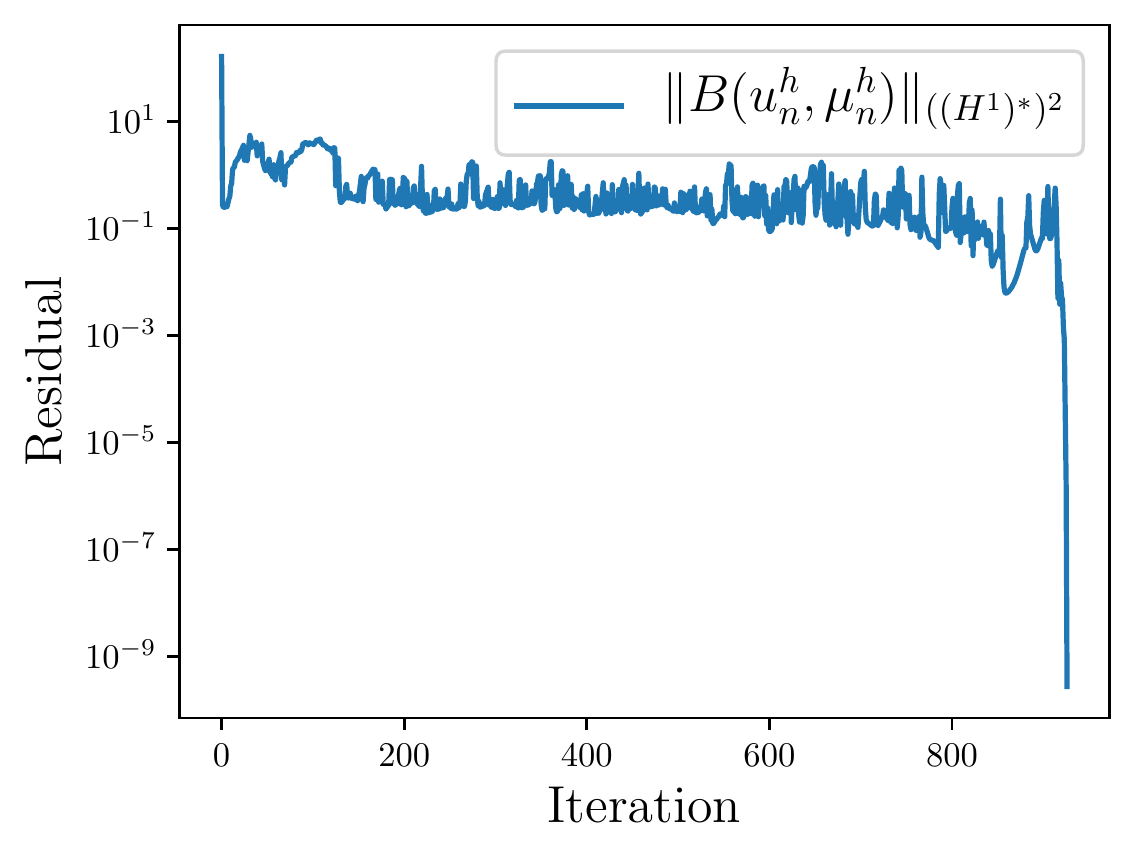} \\\hline
    \end{tabular}
    \caption{The top and bottom view of two OK model solutions obtained through the proposed scheme without the chemical substrate. The domain is discretized with uniform triangular elements with cells of $160\times100\times 12$. Each minimization sequence starts at a random initial guess generated with parameters $(s,\delta,\gamma) = (0.05, 80, 0.02)$. The double well backtracking sequence is set to $\vect{\gamma} = \{1, 0.5, 0\}$. The sequence is terminated when the residual norm is lower than $10^{-7}$. The Hessian-modified $\mathring{H}^{-1}$ Newton step problems at $\gamma = 0.5, 0$ are solved iteratively with the GMRES method and the preconditioner introduced in Appendix \ref{appendix:c}.}
    \label{tab:states_w/o_substrate}
\end{table}
Now consider the scenario where a chemical substrate is placed on a subset of the bottom boundary of the physical domain, $\partial D_s\subset \partial D$. The chemical substrate pattern is periodic strips with width of $62$ nm and periodicity distance of $100$ nm:
\begin{equation}
    \partial D_s = \bigcup_{k=0}^3 [100k+19, 100k+81] \textrm{ nm} \times [0, 250] \textrm{ nm} \times\{0\} \textrm{ nm}\;.
\end{equation}
Its corresponding position on the computational domain is:
\begin{equation}
    \partial \Omega_s = \bigcup_{k=0}^3 [10k+1.9, 10k+8.1] \times [0, 25] \times\{0\}\;.
\end{equation}
\begin{figure}[!htbp]
    \centering
    \begin{subfigure}{0.49\linewidth}
    \centering
    \includegraphics[width=0.9\linewidth]{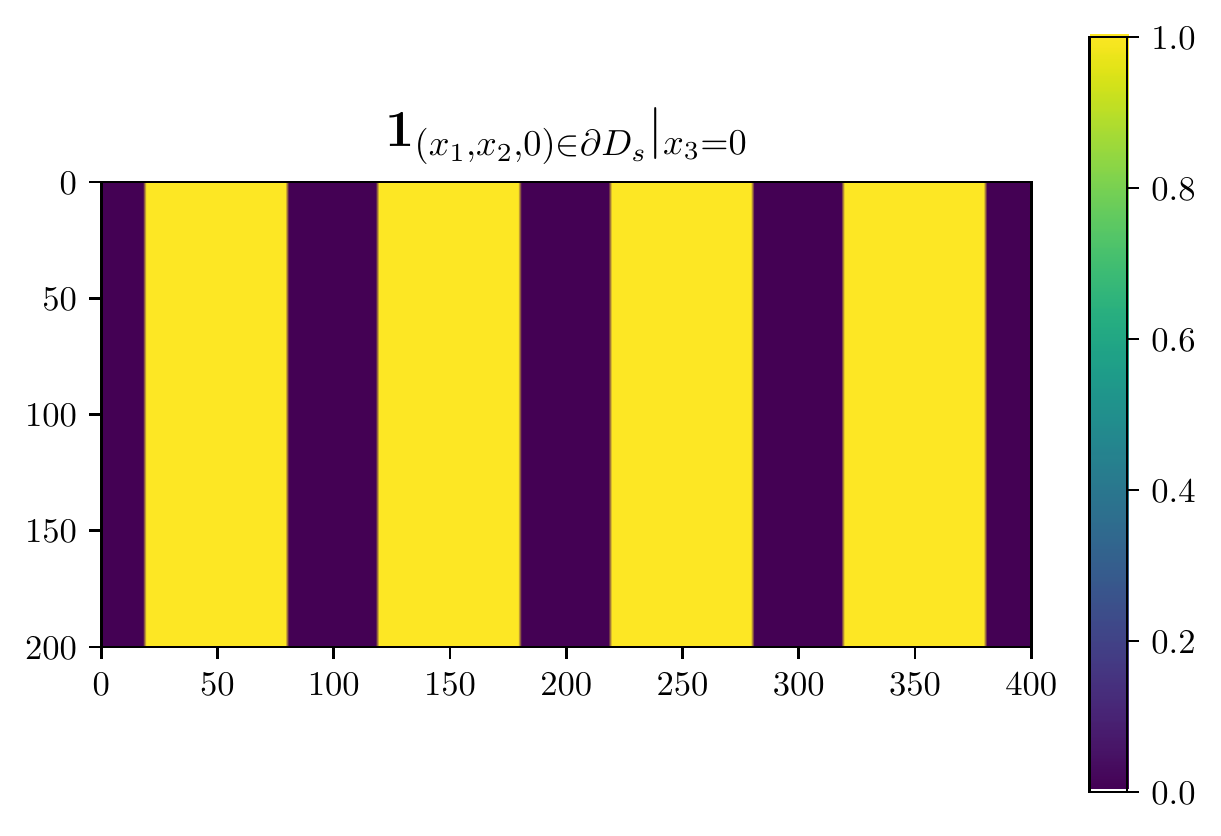}
    \end{subfigure}
    \begin{subfigure}{0.49\linewidth}
    \centering
    \includegraphics[width = 0.8\linewidth]{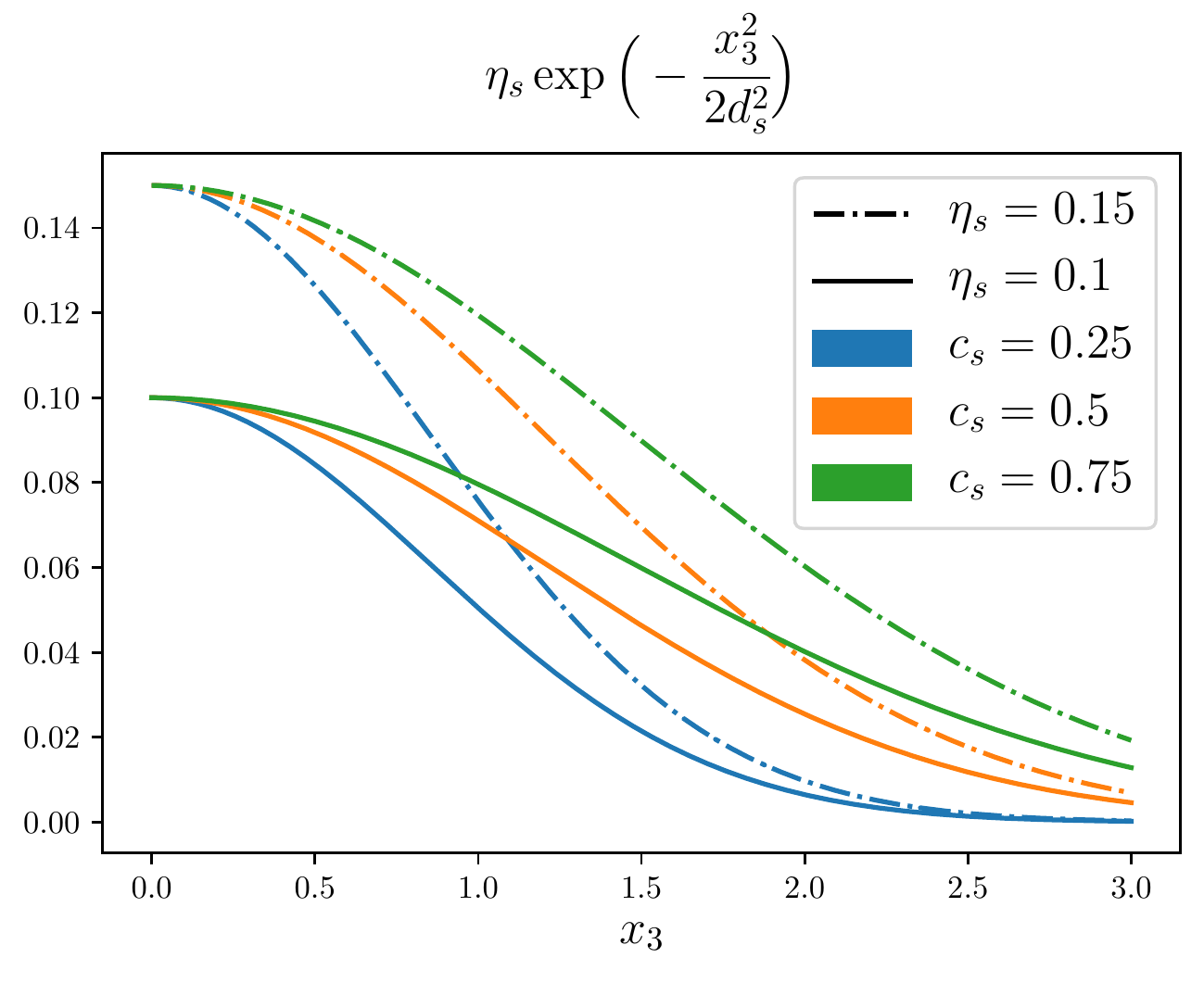}
    \end{subfigure}
    \caption{The chemical substrate pattern placed on the bottom boundary of the physical domain and the decay profile of the polymer-substrate interaction energy in the $x_3$ direction.}
    \label{fig:substrate}
\end{figure}
We consider two different polymer-substrate interaction parameters, $\eta_s=0.1, 0.15$, and three different decay length scales, $c_s = 0.25, 0.5, 0.75$. The decay length scale of the polymer-substrate interaction, $d_s = 2.925c_s$, is computed through \eqref{eq:decay_length}.\\
\indent The OK energy plus the additional polymer-substrate interaction energy is minimized through the proposed scheme to obtain the equilibrium morphologies for the chemical substrate guided DSA. The equilibrium states and the evolution of the residual norm are shown in Table \ref{tab:states_w_substrate}. When the decay length scale is small, the equilibrium morphologies exhibit phase separation parallel to the substrate. The monomer $A$ phase ($u=1$) spans the space above the chemical substrate, indicating that the polymer-substrate interaction dominants the phase separation. As the decay length scale increases, disconnected strips, perpendicular and aligned with the chemical substrate, start to emerge. The size of the feature is similar to that of the equilibrium morphologies without the chemical substrate. At $c_s = 0.5$, there are still large regions of the monomer A phase on the bottom boundary where the chemical substrate is placed, suggesting that the polymer-substrate interaction is the main cause of the non-uniformity in the $x_3$ direction. For $c_s = 0.75$, aligned parallel strips are formed at $\eta_s = 0.15$, while defects are formed among the parallel strips at $\eta_s = 0.1$. The slow convergence of the minimization iterations at $\eta_s = 0.1$ suggests that the energy landscape is flat and the swift convergence of the minimization iterations at $\eta_s = 0.15$ suggests that the lamelle morphology is a dominant minimizer. The numerical study shows that the decay profile of the polymer-substrate interaction is crucial to the formation of the lamella morphology beyond the regime of the thin-film approximation and the polymer-substrate interaction parameter $\eta_s$ is important to the elimination of defects in the chemical substrate guided DSA of BCPs.

\begin{table}[!htbp]
    \centering
    \begin{tabular}{|c|c|c|}\hline
    &$\eta_s = 0.1$ & $\eta_s = 0.15$\\\hline
    \multirow{2}{*}{$c_s = 0.25$} &\includegraphics[width = 0.2\linewidth]{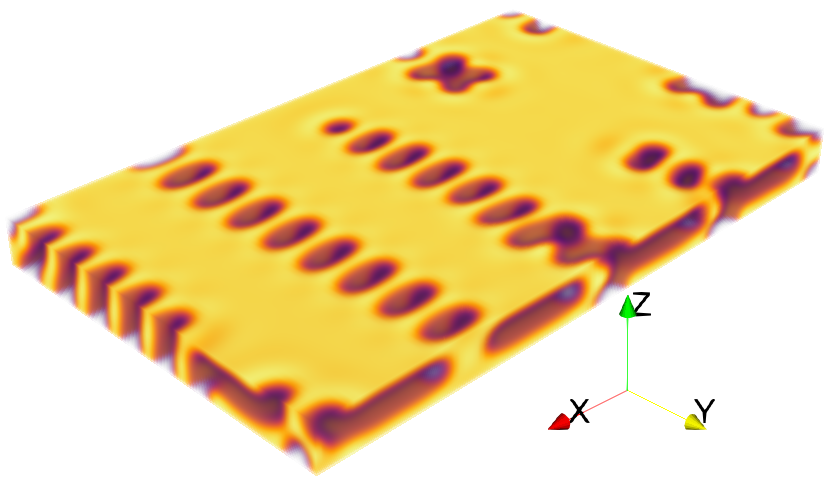}
    \includegraphics[width = 0.2\linewidth]{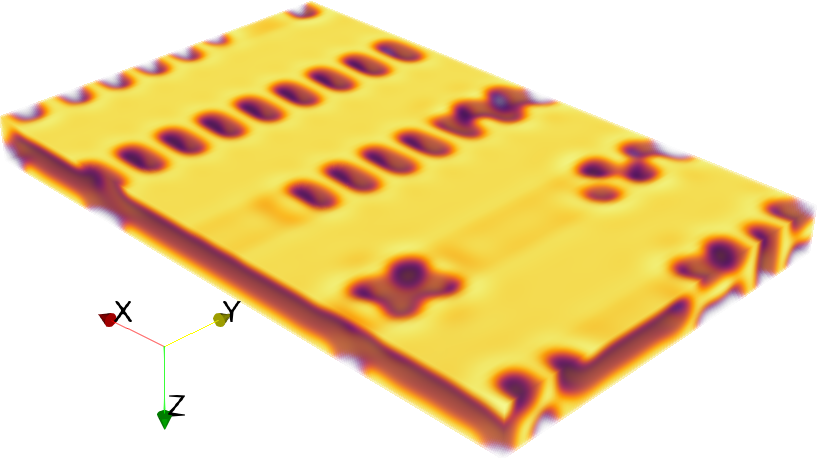} &     \includegraphics[width = 0.2\linewidth]{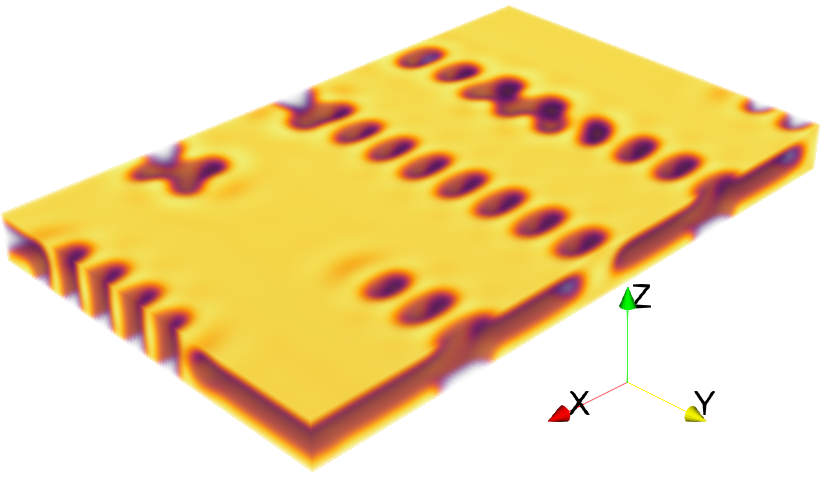}
    \includegraphics[width = 0.2\linewidth]{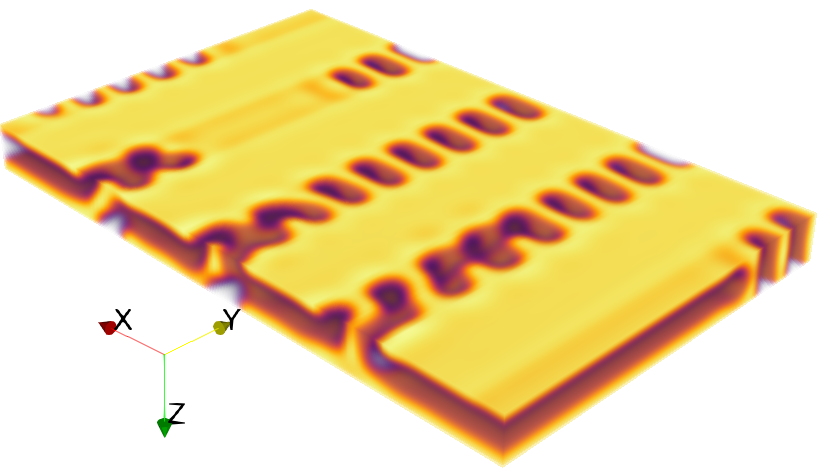}\\ 
    &\includegraphics[width = 0.2\linewidth]{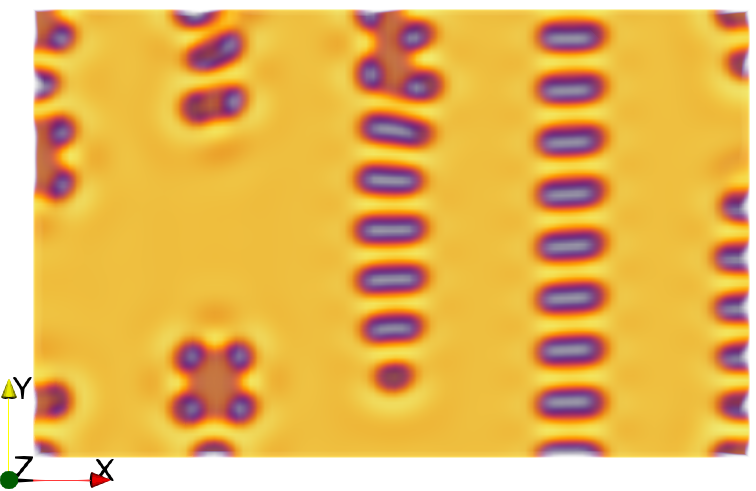} \includegraphics[width = 0.2\linewidth]{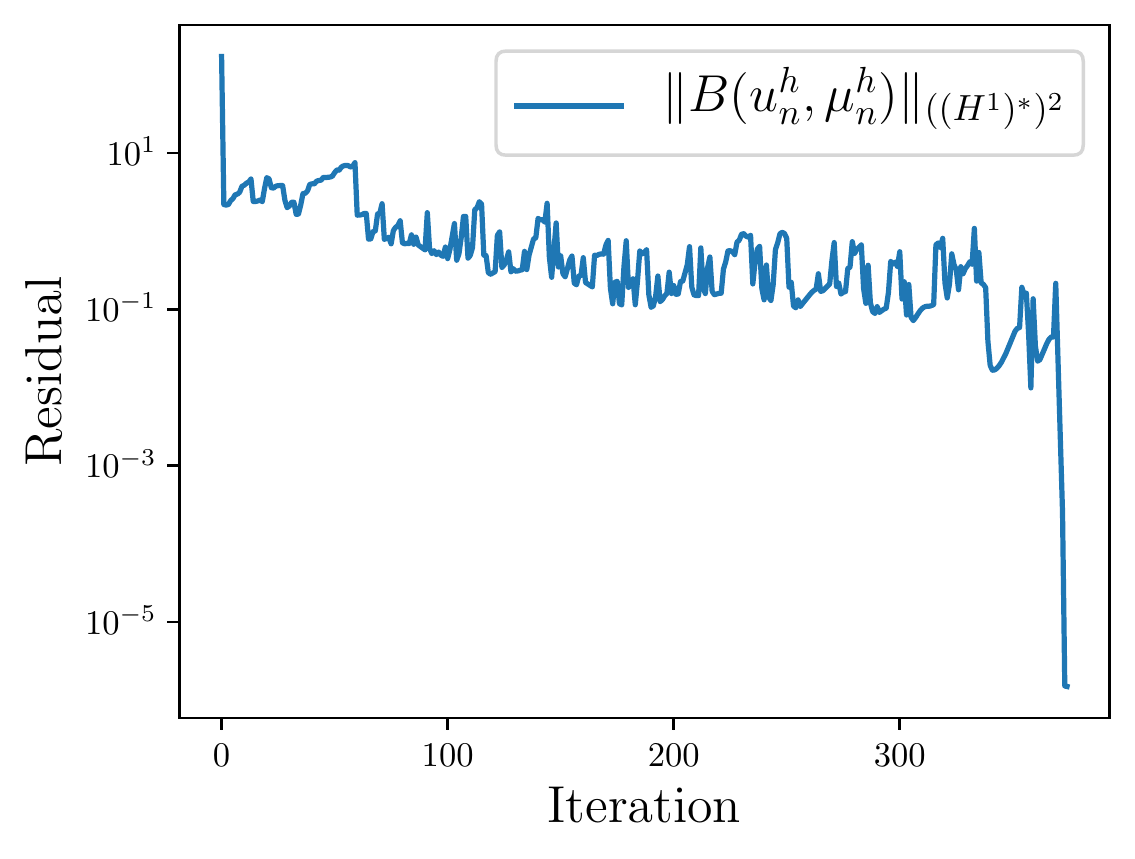} &
    \includegraphics[width = 0.2\linewidth]{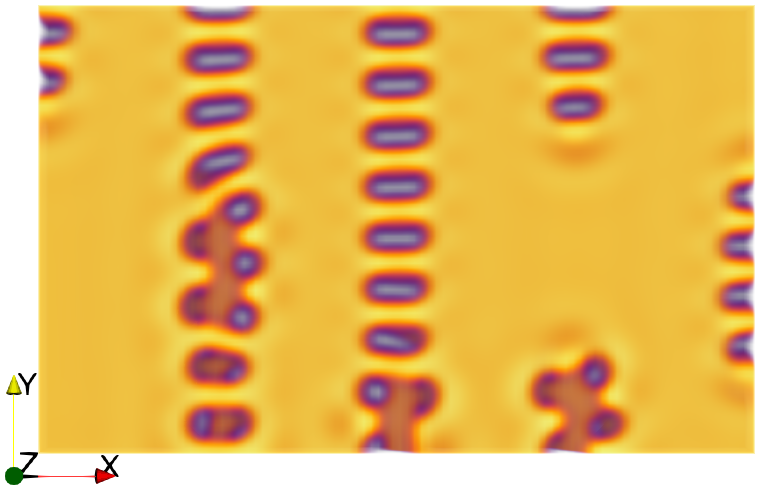}\includegraphics[width = 0.2\linewidth]{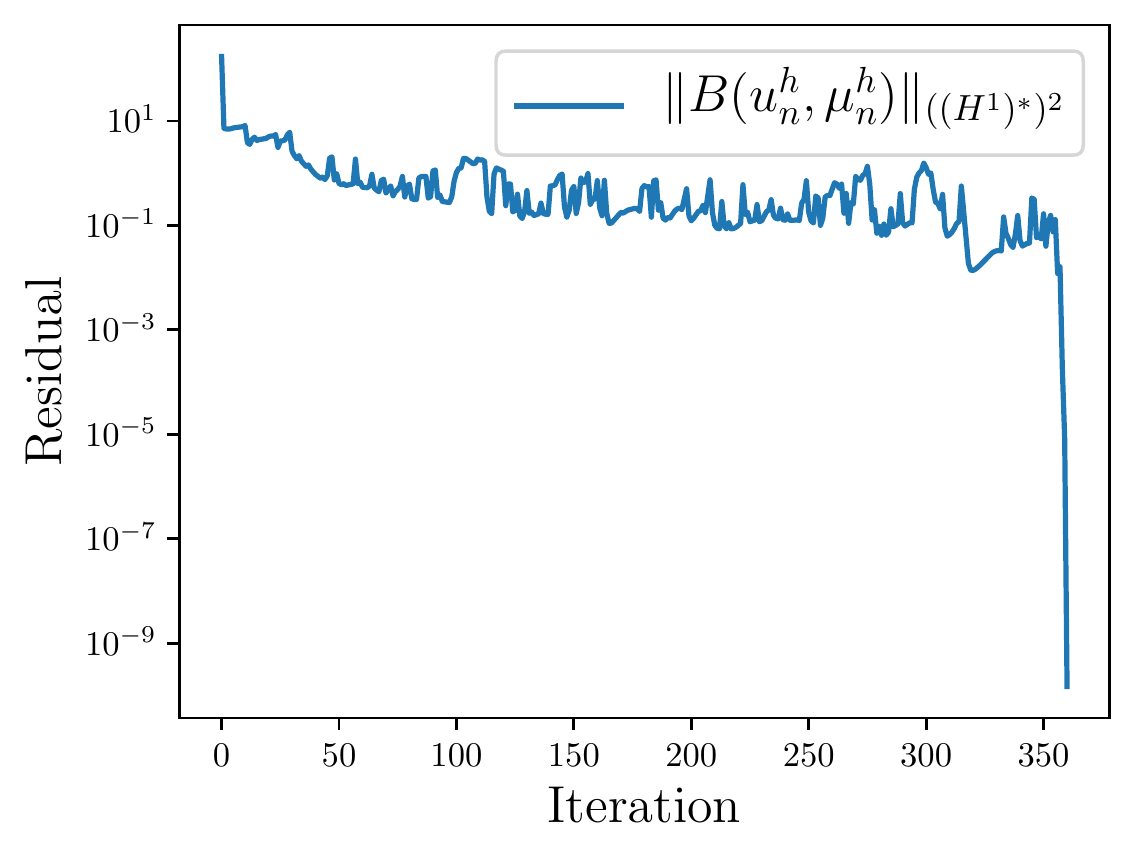}\\\hline
    \multirow{2}{*}{$c_s = 0.5$} &\includegraphics[width = 0.2\linewidth]{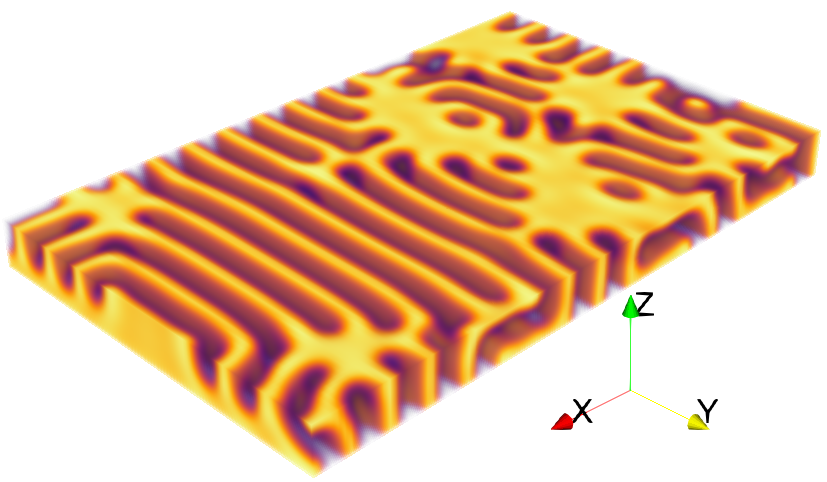}
    \includegraphics[width = 0.2\linewidth]{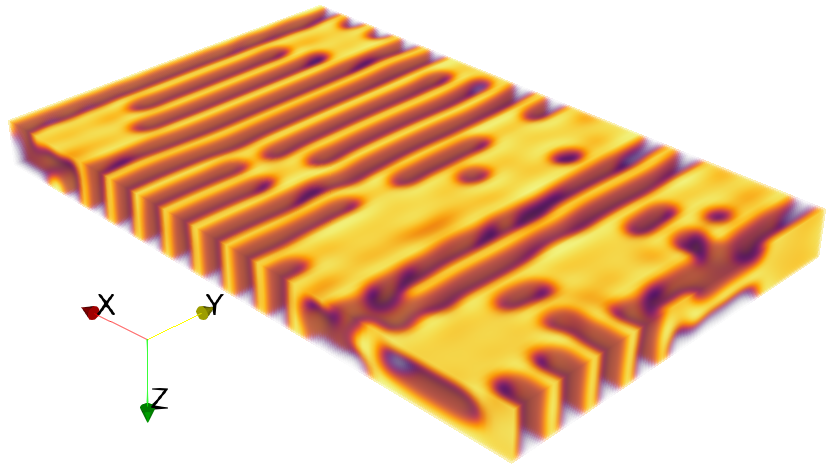} &     \includegraphics[width = 0.2\linewidth]{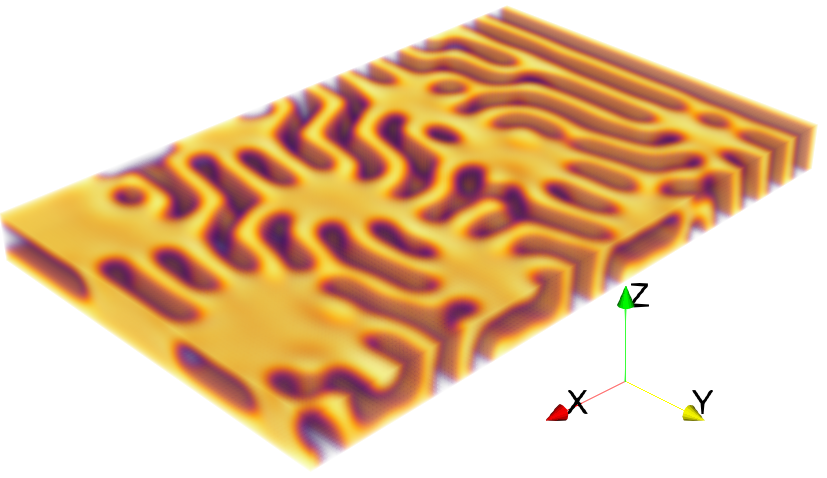}
    \includegraphics[width = 0.2\linewidth]{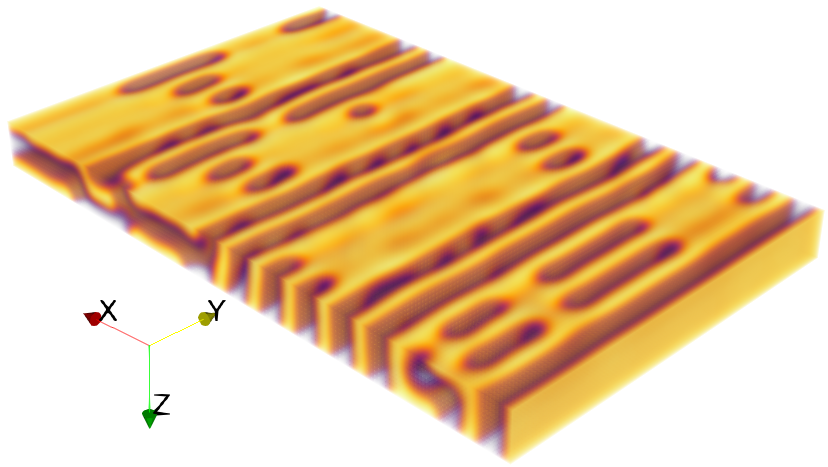}\\ 
    &\includegraphics[width = 0.2\linewidth]{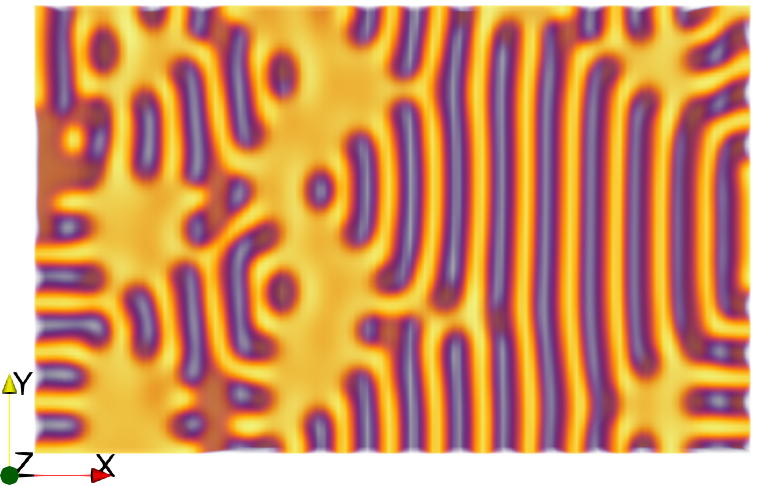}\includegraphics[width = 0.2\linewidth]{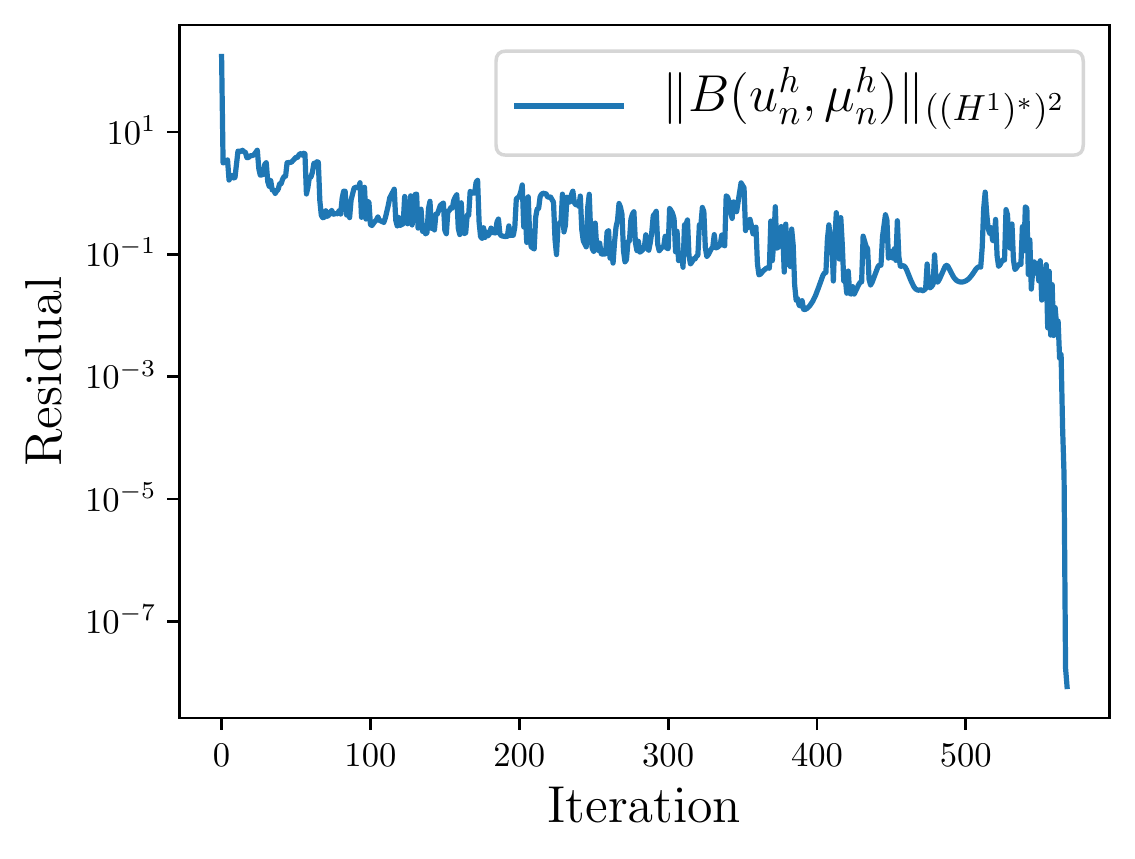} &
    \includegraphics[width = 0.2\linewidth]{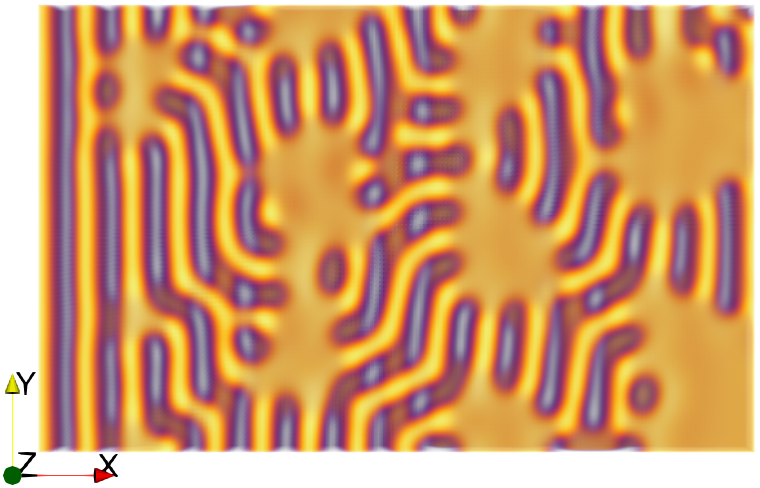}\includegraphics[width = 0.2\linewidth]{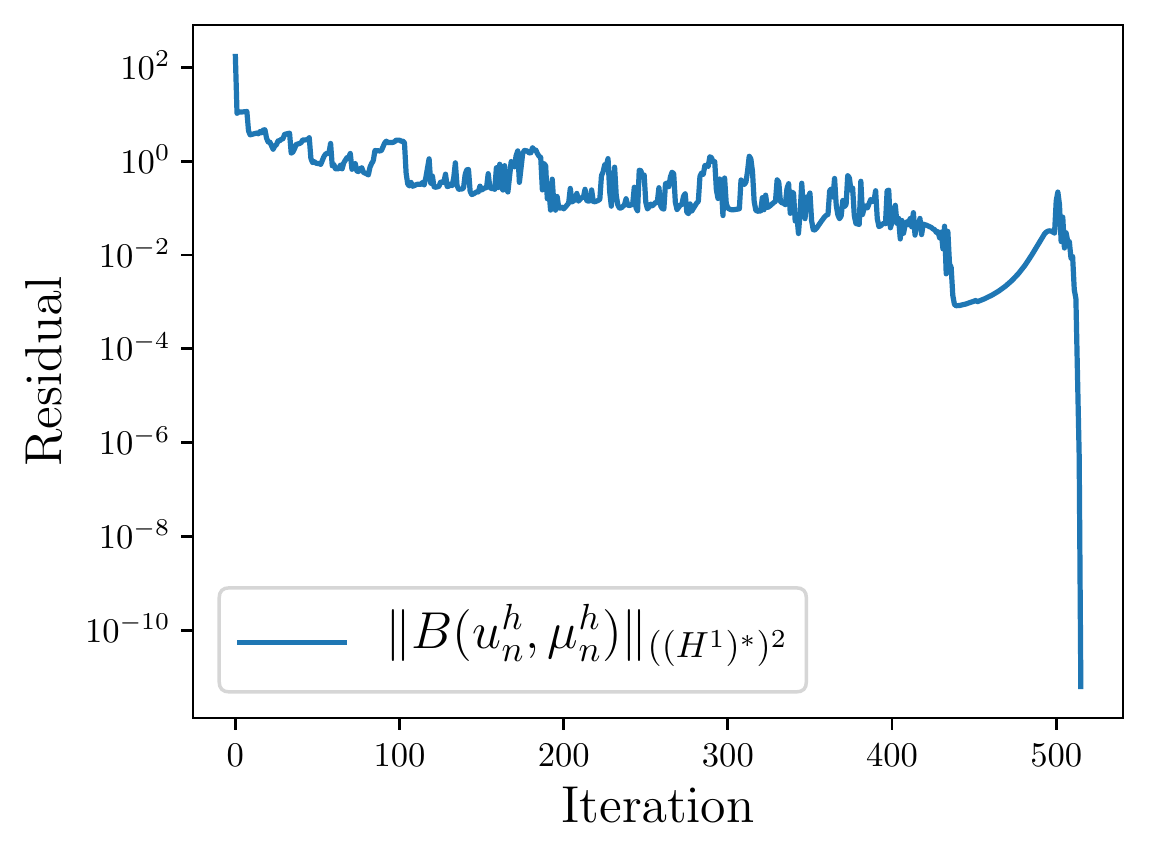}\\\hline
    \multirow{2}{*}{$c_s = 0.75$} &\includegraphics[width = 0.2\linewidth]{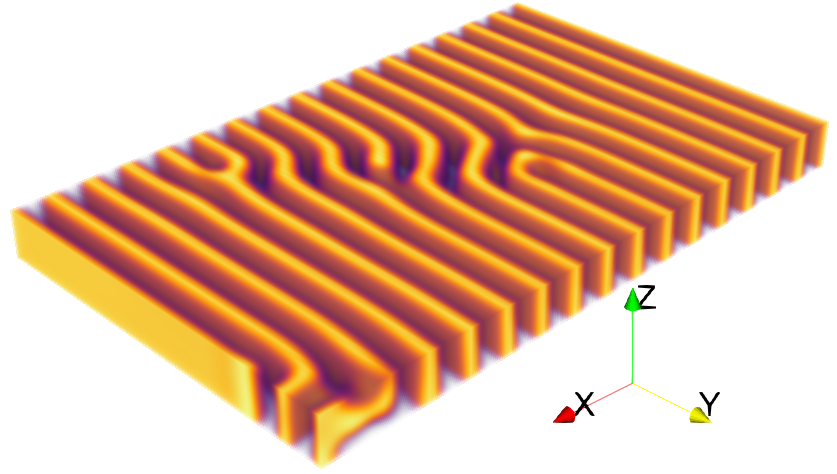}
    \includegraphics[width = 0.2\linewidth]{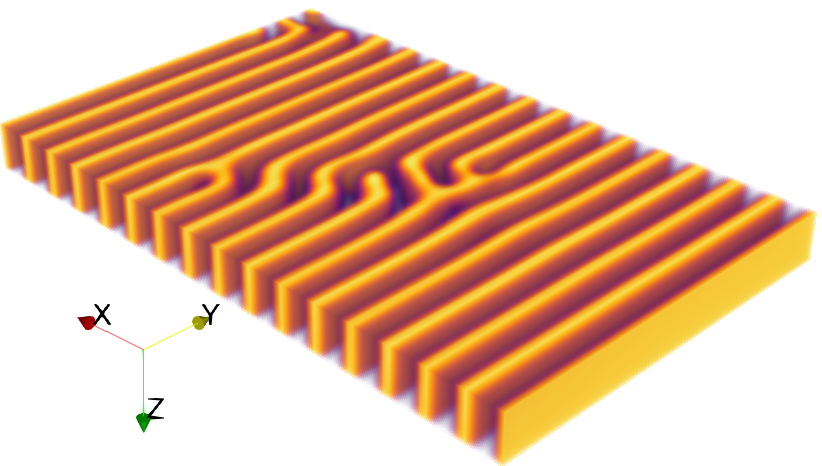} &     \includegraphics[width = 0.2\linewidth]{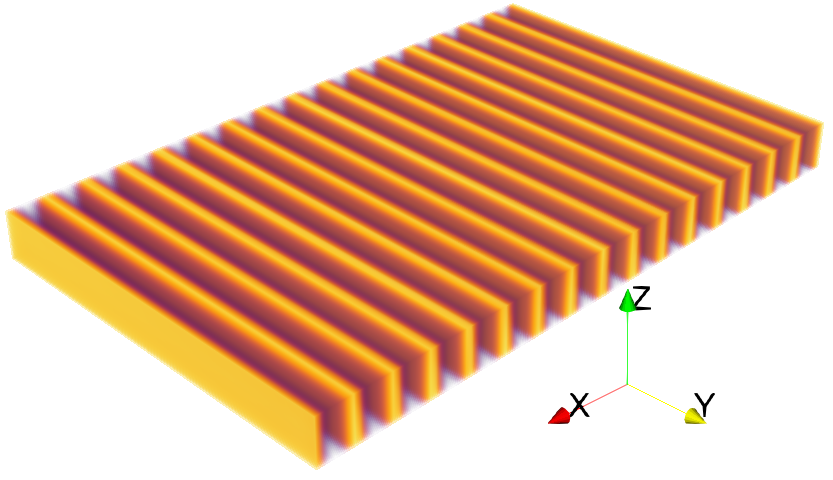}
    \includegraphics[width = 0.2\linewidth]{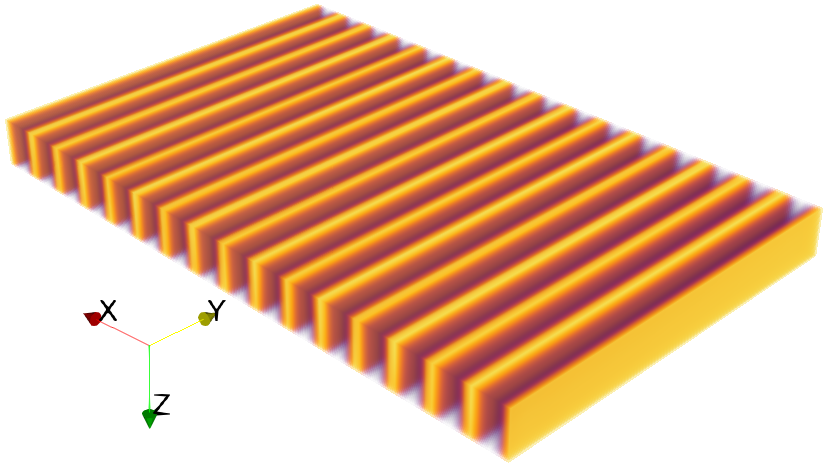}\\ 
    &\includegraphics[width = 0.2\linewidth]{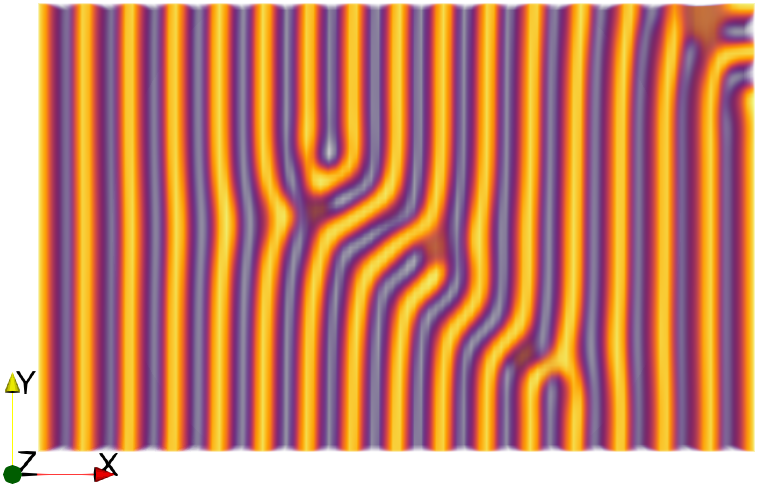}\includegraphics[width = 0.2\linewidth]{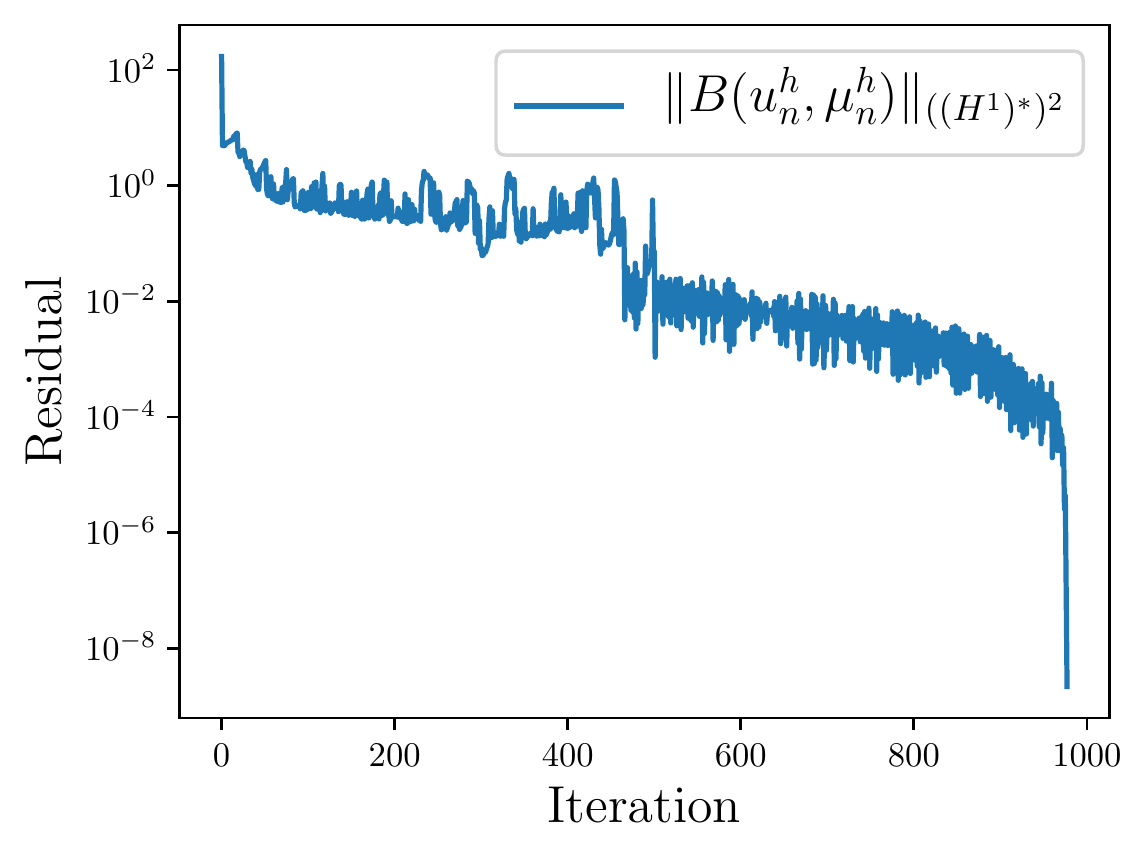} &
    \includegraphics[width = 0.2\linewidth]{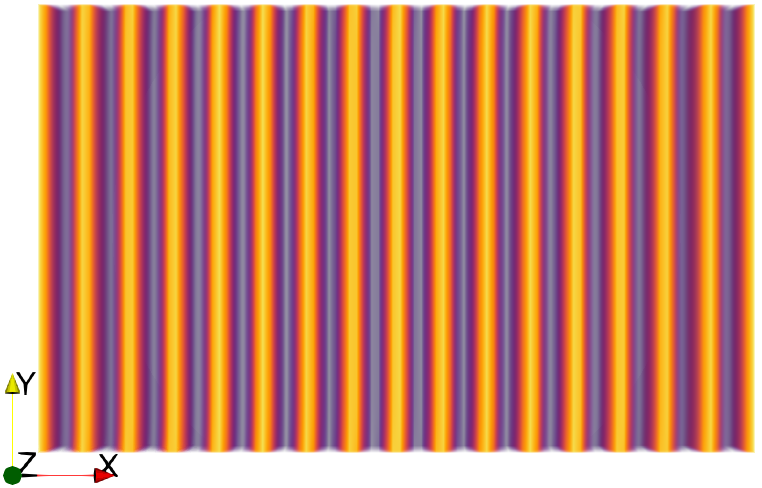}\includegraphics[width = 0.2\linewidth]{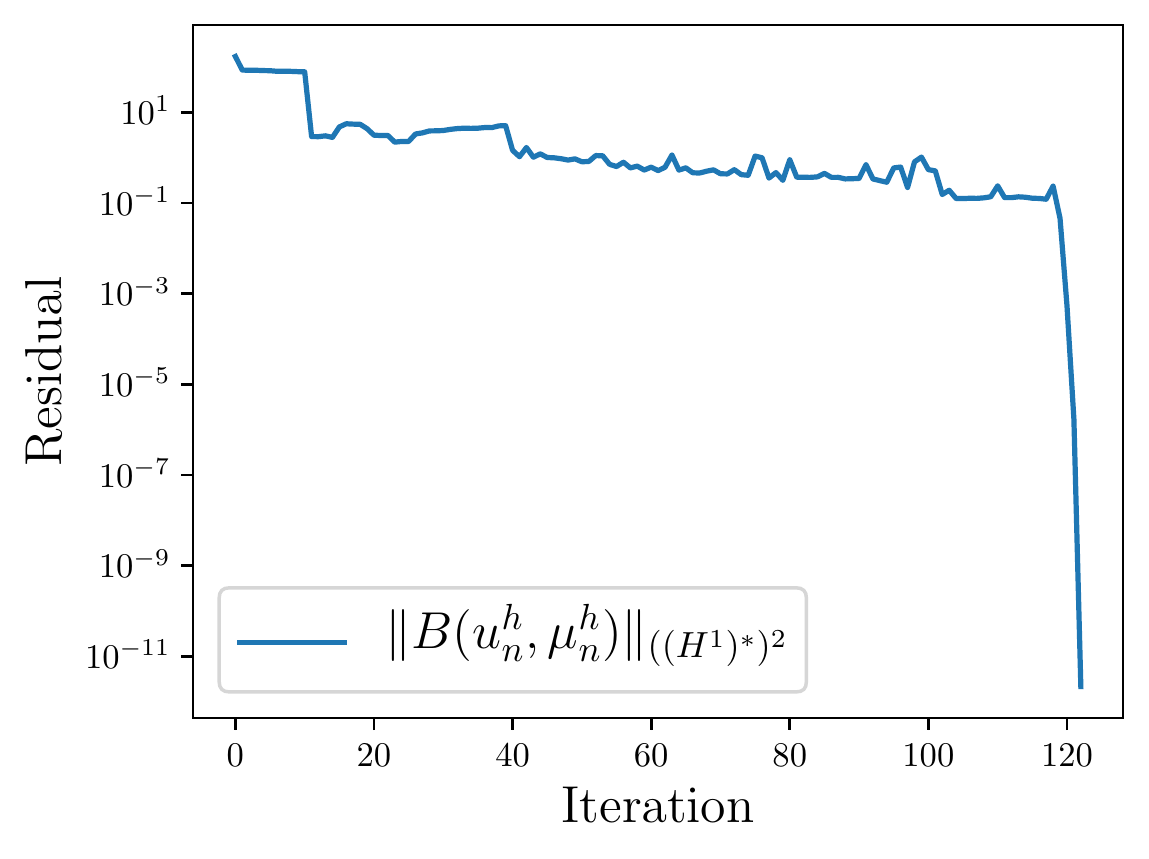}\\\hline
    \end{tabular}
    \caption{The top, bottom, and top-down view of the equilibrium morphologies as well as the evolution of the residual norm obtained through the proposed scheme with the chemical substrate pattern of periodic strips, varying polymer-substrate interaction parameters and decay length scales. The numerical implementation is performed as described in Table \ref{tab:states_w/o_substrate}.}
    \label{tab:states_w_substrate}
\end{table}

\section{Conclusion}
We proposed a fast and robust scheme for the direct minimization of the OK energy. The minimization problem is posed in the $\mathring{H}^1$ space. We studied the regularity of the critical points of the OK energy and transformed the first order condition via the $\mathring{H}^{-1}$ inner product such that the solution space and the test space are both in $H^1$. We introduced a globally convergent modified Newton method for the energy-descending and mass-conservative minimization of the OK energy. The monotonic energy descent is guaranteed by backtracking on a Gauss-Newton relaxation of the nonconvexity of the double well potential. Gaussian random fields are used as initial guesses for the minimization iterations. The properties of the proposed scheme and its three order of magnitude superior efficiency compared to the gradient flow approach are demonstrated through numerical examples. We then apply the proposed scheme in the context of chemical substrate guided DSA of BCPs. A novel polymer-substrate interaction model, in the form of a third-order polynomial, is derived. We conducted a numerical study, via the proposed scheme, of the chemical substrate guided DSA of BCPs based on a physically-reasonable set up. We showed, via a numerical example, that the decay profile of the polymer-substrate interaction and the polymer-substrate interaction strength is crucial to the formation of defect-free morphologies beyond the regime of the thin-film approximation.
\section*{Acknowledgement}
The work of the authors is supported by \href{https://aeolus.oden.utexas.edu}{the AEOLUS center}, funded by the U.S. Department of Energy, Office of Science, Office of Advanced Scientific Computing Research under Award DE‐SC0019303. The authors would like to acknowledge Joshua Chen, Prashant Jha, and Peng Chen for their suggestions on this work.
\begin{appendices}

\section{The regularity of the critical points of \texorpdfstring{$F_{OK}$}{}}\label{appendix:a}
Please refer to \cite{Adams2003, Folland1995, Gilbarg1983} for known theorems with regard to elliptic PDEs and Sobolev spaces used in the derivation below. Pieces of the proof below overlap with the work in \cite{Cowan2005} in the context of the Cahn-Hilliard equation.
\begin{proof}[Proof of Proposition \ref{prop:regularity}]
    Let $s_2 = (\kappa W'(u_{c}), 1)$ and define $w\coloneqq(-\lap_N)^{-1}R(u_{c}-m)$. The first order condition \eqref{eq:1st_var} implies
    \begin{equation*}
        \big(\epsilon^2\grad (u_c-m), \grad \tilde{u}\big) = -(\kappa W'(u_c)-s_2, \tilde{u}) - (\sigma w,\tilde{u})\quad\forall \tilde{u}\in H^1\;,
    \end{equation*}
    where $\tilde{u}_0\in \mathring{H}^1$ is extended to $\tilde{u}\in H^1$. Consequently, $\epsilon^2 (u_c-m)= (-\lap_N)^{-1}R(f)$ with $f = -\kappa W'(u_c)+s_1-\sigma w$ and $W'(u_c) = u_c^3-u_c $.\\
    \indent Now we apply the elliptic regularity theorem on $u_c$, since it is a solution to the above weak form of the pure Neumann Poisson equation. First, notice that Sobolev embedding theorem implies  $u_c\in H^1 \Rightarrow u_c^3\in L^2$. $H^k$ is an algebra for $k>d/2$ in a domain that satisfies the cone condition. Thus,
    \begin{equation*}
        u_c\in H^k \Rightarrow u_c^3\in H^k\;\forall k\geq 2\;.
    \end{equation*}
    Define $H^0 \coloneqq L^2$. For $r\geq 1$, if $u_c\in\mathring{H}_m^{k-1}$ with $1\leq k\leq r$, the elliptic regularity implies that $w\in\mathring{H}^{k+1}$. Combining the results above, we have $f\in\mathring{H}^{k-1}_m$ and $u_c\in\mathring{H}^{k+1}$ by regularity. Take $k=r$ and we have $u_c\in\mathring{H}^{r+1}_m$.
\end{proof}

\begin{proof}[Proof of Proposition \ref{prop:bc}]
    Based on the results in the proof of Proposition \ref{prop:regularity}, if $u_c\in \mathring{H}^2_m$, then $\grad u_c \cdot \nu = 0$ on $\partial\Omega$ is implied by Green's identity. Moreover, if $u_c\in\mathring{H}^4_m$, we have $A(u_c)\equiv 0$ by Theorem \ref{thm:h-1_1st}. Notice that we have $(\epsilon^2\grad(\lap u_c)\cdot\nu, \tilde{u})_{L^2(\partial\Omega)} + (\kappa W''(u_c)\grad u_c\cdot\nu, \tilde{u})_{L^2(\partial\Omega)} = 0\;\forall\tilde{u}\in H^1$, which leads to $\grad(\lap u_c)\cdot\nu = 0$ on $\partial\Omega$.
\end{proof}
\section{Finite element discretization}\label{appendix:b}
We shall consider a finite-dimensional approximation to the minimization problem and the $\mathring{H}^{-1}$ Newton iteration via the finite element method.

\indent First, we introduce a conforming Galerkin discretization employing Ciarlet-Raviart mixed finite elements. Let $\mathcal{P}^h$ denote a conformal partition $\mathcal{P}^h$ of $\Omega$ into non-overlapping elements $K$ such that $\Omega = \cup_{K\in\mathcal{P}^h} \bar{K}$. Each element is the image of an invertible, generally affine map $F_K$ of a reference element $\hat{K}$. Let $P^r(\hat{K})$ denote the space of complete polynomials of degree less than or equal to $r$ on $\hat{K}$, and let $Q^r(\hat{K})$ denote the space of tensor products of polynomials of degree less than or equal to $r$ on $\hat{K}$. We construct the finite-dimensional spaces $\mathcal{S}^h$ by:
\begin{equation}
    \mathcal{S}^h\coloneqq S^{h,r}(\mathcal{P}^h)\coloneqq \Big\{v^h\in H^1(\Omega):v^h|_K=\hat{v}\circ F_K^{-1}, \hat{v}\in P^r(\hat{K}) \text{ or } \hat{v}\in Q^r(\hat{K}) \;\forall K\in\mathcal{P}^h\Big\}\;.
\end{equation}

\subsection{The Newton step problem}
\indent We employ linear finite elements, i.e., the lowest-order approximation based on $S^{h,1}\coloneqq S^{h,1}(\mathcal{P}^h)$, for the Newton step problem at given $u_n^h\in S^{h,1}$:
\begin{subequations}
\begin{align}
    &\text{Find } (\delta u_n^h, \hat{\mu}_n^h)\in(S^{h,1}, S^{h,1}) \text{ such that}:\nonumber\\
    &\qquad\qquad(\grad \hat{\mu}^h, \grad\tilde{u}) + (\sigma\delta u_n^h, \tilde{u})= (\grad \mu^h_n, \grad\tilde{u}) + \big(\sigma(u_n^h-m), \tilde{u}\big)\qquad &&\forall \tilde{u}\in S^{h,1}\;,\label{eq:h-1_newton_approx_del_u}\\
    &\qquad\qquad(\hat{\mu}^h_n, \tilde{\mu}) - \big(\kappa W_{\gamma}''(u_n^h)\delta u_n^h, \tilde{\mu}\big) - (\epsilon^2\grad \delta u_n^h, \grad\tilde{\mu}) = 0&&\forall \tilde{\mu}\in S^{h,1}\;,
\end{align}
\end{subequations}
where $\gamma\in[0,1]$ and the right hand side function $\mu^h$ is given by the following problem:
\begin{equation}\label{eq:h-1_mu_approx}
    \text{Find } \mu^h_n\in S^{h,1} \text{ such that}:\quad (\mu^h_n,\tilde{\mu}) = \big(\kappa W'(u_n^h), \tilde{\mu}\big) + (\epsilon^2\grad u_n^h, \grad \tilde{\mu})\quad\forall \tilde{\mu}\in S^{h,1}\;.
\end{equation}
\indent Now we consider the above problem in the discrete form. Let $\{e_i\}_{i=1}^N$ be the basis functions for $S^{h,1}$; then:
\begin{equation}
    u^h_n = \sum_{i=1}^N (\vect{u}_n)_i e_i\;, \quad\delta u_n^h = \sum_{i=1}^N (\delta\vect{u}_n)_i e_i\;,\quad
    \hat{\mu}^h_n = \sum_{i=1}^N (\hat{\vect{\mu}}_n)_i e_i\;,\quad\mu_n^h = \sum_{i=1}^N (\vect{\mu}_n)_i e_i\;.
\end{equation}
where $\vect{u}_n, \delta \vect{u}_n, \hat{\vect{\mu}}, \vect{\mu}\in\R^N$. We define the matrices $\vect{M},\vect{K}\in\R^{N\times N}$ that correspond to the discretized $L^2$ and $\mathring{H}^1$ inner products in $S^{h,1}$ and the matrix $\vect{D}_n\in\R^{N\times N}$ that corresponds to the nonlinear term in the modified Hessian operator:
\begin{equation}
    \vect{M}_{ij} = (e_i,e_j)\;,\quad \vect{K}_{ij} = (\grad e_i, \grad e_j)\;,\quad(\vect{D}_n)_{ij} = \big((u^h_n)^2 e_i, e_j\big)\;\;.
\end{equation}
Consequently, the discrete Newton step problem is posed as
\begin{equation}\label{eq:h-1_newton_disc}
    \vect{H}_n(\gamma)\begin{bmatrix}
    \delta\vect{u}_n\\
    \vect{\hat{\mu}}_n
    \end{bmatrix} = \begin{bmatrix}
    \sigma \vect{M} & \vect{K}\\
    -\kappa(2+\gamma)\vect{D}_n + \kappa\gamma 
    \vect{M} - \epsilon^2\vect{K} & \vect{M}
    \end{bmatrix}
    \begin{bmatrix}
    \delta\vect{u}_n\\
    \hat{\vect{\mu}}_n
    \end{bmatrix} = 
    \begin{bmatrix}
        \vect{f}_n\\
        \vect{0}
    \end{bmatrix} = \vect{g}_n
\end{equation}
where $\vect{f}_n\coloneqq \vect{f}(\vect{\mu}_n, \vect{u}_n)\in\R^N$ corresponds to the right-hand side of \eqref{eq:h-1_newton_approx_del_u} and depends on $u^h_n$ and $\mu^h_n$. It has the following form:
\begin{equation}
    \vect{f}_n = \vect{K}\vect{\mu}_n + \sigma\vect{M}\vect{u}_n -\sigma m\vect{M}\vect{c}\;.
\end{equation}
where $\vect{c}\in\R^N$ is a vector with entries of all ones and $\mu^h_n$ is given by the following discrete problem:
\begin{equation}\label{eq:mu_disc}
    \vect{M}\vect{\mu}_n = \kappa\vect{d}_n - \kappa \vect{M}\vect{u}_n + \epsilon^2 \vect{K}\vect{u}_n\;,
\end{equation}
where $\vect{d}_n\in\R^N$ corresponds to the term in the double well potential: $(\vect{d}_n)_i = \big((u^h_n)^3, e_i\big)$.

\subsection{The pure Neumann Poisson equation}
As demonstrated in Section \ref{subsec:btls}, solving the Poisson equation with homogeneous Neumann boundary condition is necessary for conducting the Armijo backtracking line search on the energy. There are many different formulations for this problem, as elaborated by Bochev and Lehoucq in \cite{Bochev2005}. Here we employ a saddle point formulation. Consider the following variational problem corresponds to solving $\delta w_n = (-\lap_N)^{-1}R(\delta u_n)$:
\begin{subequations}
\begin{align}
    &\text{Find } (\delta w_n, \lambda)\in H^1\times\R \text{ such that}:\nonumber\\
    &\qquad\qquad\qquad(\grad \delta w_n, \grad \tilde{w}) + (\lambda, \tilde{w}) = (\delta u_n, \tilde{w}) &&\forall \tilde{w}\in H^1\;,\\
    &\qquad\qquad\qquad ( w_n,\tilde{\lambda}) = 0 &&\forall \tilde{\lambda}\in\R\;.
\end{align}
\end{subequations}
This formulation is equivalent to a projected problem, in which the right hand side functional is projected onto $\mathring{H}^{-1}$.\\
\indent The finite element approximation results in the discrete problem:
\begin{equation}\label{eq:poisson_disc}
    \vect{L}
    \begin{bmatrix}
        \delta \vect{w}_n\\
        \lambda
    \end{bmatrix}
    =
    \begin{bmatrix}
        \vect{K} & \vect{M}\vect{c}\\
        c^T\vect{M} & \vect{O}
    \end{bmatrix}
    \begin{bmatrix}
        \delta\vect{w_n}\\
        \lambda
    \end{bmatrix}
    =
    \begin{bmatrix}
        \vect{M}\delta\vect{u}_n\\
        \vect{0}
    \end{bmatrix}
\end{equation}
where $\delta w^h_n = \sum_{i=1}^N (\delta \vect{w}_n)_ie_i\in S^{h,1}$. To solve for $w^h_n = \sum_{i=1}^N (\vect{w}_n)_ie_i\in S^{h,1}$, one
can solve the above system by replacing $\delta\vect{u}_n$ with $\vect{u}_n$.
\subsection{Armijo backtracking line search on the energy}\label{subsubsec:btlc_num}
The backtracking line search on the energy only requires evaluating the energy difference at each step size. The majority of the terms that appear in the energy difference are linear, which allows us to reduce the computational cost of the line search.\\
\indent Assume $\delta u_n^h$ is obtained by the Newton problem at $u_n^h$. Assume $w^h_n$ and $\delta w^h_n$ are obtained by solving the pure Neumann Poisson equation. The energy difference at any backtracking constant $\alpha$ is
\begin{equation}
    F_{OK}(u_n^h+\alpha\delta u_n^h) - F_{OK}(u_n^h) = \delta U_n(\alpha) + \delta U_n(0) + \alpha \delta L_n\;.
\end{equation}
where $\delta U(\alpha)$ and $\delta L_n$ are given by
\begin{equation}
    \delta U(\alpha) \coloneqq \frac{1}{4}\kappa\big(\vect{d}_n(\alpha)\big)^T(\vect{u}_n+\alpha \delta \vect{u}_n)\;,
\end{equation}
\begin{equation}
    \delta L_n \coloneqq \delta \vect{u}_n^{T}\vect{M}(-\kappa\vect{u}_n-\frac{1}{2}\kappa \delta \vect{u}_n
    + \sigma \vect{w}_n + \sigma \delta \vect{w}_n) + \epsilon^2 \delta\vect{u}_n^{T}\vect{K}(2\vect{u}_n + \delta \vect{u}_n) + \sigma \vect{u}_n^T\vect{M}\delta\vect{w}_n\;,
\end{equation}
where $(\vect{d}_n(\alpha))_i = \big((u^h_n+\alpha_k\delta u^h_n)^3, e_i\big)$. The term $\delta U_n$ needs to be evaluated at each step size, while $\delta L_n$ can be determined prior to the backtracking iterations. Once the backtracking converges at $\alpha_K$, $\vect{d}_n(\alpha_K)$ can be reused as $\vect{d}_{n+1}$ at the next minimization iteration.

\section{A preconditioner for the discrete Newton step problem}\label{appendix:c}
Solving the discrete Newton step problem \eqref{eq:h-1_newton_disc} is the most computationally expensive portion of the algorithm. In particular, the double well backtracking algorithm introduced in Section \ref{subsubsec:bt_gamma} to attain monotonic energy descent further increases the cost. In this section, we consider a preconditioner that can considerably decrease the cost of solving the discrete Newton step problem in a certain range of the weight $\gamma$ and the parameter space $(\kappa,\epsilon,\sigma)$.
\subsection{The approximate Schur complement preconditioner}
We formulate the preconditioner that closely follows a matching technique presented in \cite{Parsons2012,Pearson2018,Farrell2017,Li2018}. The Schur complement of the $(1,1)$-block in $\vect{H}_n(\gamma)$ \eqref{eq:h-1_newton_disc} is
\begin{equation}
    \vect{S}_n = \vect{M} + \frac{\epsilon^2}{\sigma}\vect{K}\vect{M}^{-1}\vect{K} + \frac{\kappa(2+\gamma)}{\sigma}\vect{D}_n\vect{M}^{-1}\vect{K} - \frac{\kappa\gamma}{\sigma}\vect{K}\;.
\end{equation}
Consider an approximation to the Schur complement, $\tilde{\vect{S}} = \hat{\vect{S}}\vect{M}^{-1}\hat{\vect{S}}$, with $\hat{\vect{S}} = \vect{M} + \frac{\epsilon}{\sqrt{\sigma}}\vect{K}$:
\begin{equation}
    \tilde{\vect{S}} = \vect{M} + \frac{\epsilon^2}{\sigma} \vect{K}\vect{M}^{-1} \vect{K} + \frac{2\epsilon}{\sqrt{\sigma}}\vect{K}\;.
\end{equation}
The following proposition holds:
\begin{proposition} \label{prop:precond} \cite{Farrell2017}
    The eigenvalues of $\tilde{\vect{S}}^{-1}\vect{S}_n$ are real and satisfy
\begin{equation}
    \lambda(\tilde{\vect{S}}^{-1}\vect{S}_n)\in \Big[\frac{1}{2}-\frac{\kappa\gamma}{4\epsilon\sqrt{\sigma}}, 1+\frac{\kappa}{4\epsilon\sqrt{\sigma}}\big(\gamma + \kappa(2+\gamma)\lambda_+\big)\Big]\;,
\end{equation}
where $\lambda_+=\max\{\lambda_{\text{max}}(\vect{M}^{-1}\vect{D}_n), 0\}$. If $u^h_n(\vect{x})\in[-1,1]\;\forall \vect{x}\in\Omega$, then  $\lambda_+\leq 1$.
\end{proposition}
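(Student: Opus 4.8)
The plan is to follow the matching-preconditioner framework of \cite{Farrell2017,Pearson2018}, adapted to the Hessian-modified Schur complement $\vect{S}_n$. Since $\vect{M}$ is symmetric positive definite it is natural to work in the inner product $\langle\vect{x},\vect{y}\rangle_{\vect{M}}\coloneqq\vect{x}^\top\vect{M}\vect{y}$, with respect to which $\vect{P}\coloneqq\vect{M}^{-1}\vect{K}$ and $\vect{Q}\coloneqq\vect{M}^{-1}\vect{D}_n$ are self-adjoint and positive semidefinite ($\vect{D}_n$ being a mass matrix with nonnegative weight $(u_n^h)^2$). A direct computation gives the perfect-square identity $\vect{M}^{-1}\tilde{\vect{S}}=(\vect{I}+\tfrac{\epsilon}{\sqrt{\sigma}}\vect{P})^{2}=:\vect{T}^{2}$ and $\vect{M}^{-1}\vect{S}_n=\vect{T}^{2}+\vect{R}\vect{P}$, so that
\begin{equation*}
\tilde{\vect{S}}^{-1}\vect{S}_n=\vect{I}+\vect{T}^{-2}\vect{R}\vect{P},\qquad \vect{R}\coloneqq\tfrac{\kappa(2+\gamma)}{\sigma}\vect{Q}-\Big(\tfrac{\kappa\gamma}{\sigma}+\tfrac{2\epsilon}{\sqrt{\sigma}}\Big)\vect{I},
\end{equation*}
with $\vect{R},\vect{T}$ self-adjoint in $\langle\cdot,\cdot\rangle_{\vect{M}}$ and $\vect{T}$ invertible.

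The realness assertion is the crux, and it is where the non-symmetry of $\vect{S}_n$ (caused by the cross term $\vect{D}_n\vect{M}^{-1}\vect{K}$) bites: $\vect{R}\vect{P}$ is a product of two $\vect{M}$-self-adjoint operators that need not commute, so $\tilde{\vect{S}}^{-1}\vect{S}_n$ is not $\vect{M}$-self-adjoint and cannot be symmetrized by a congruence. The way around this is to conjugate by $\vect{T}$; since $\vect{T}$ commutes with $\vect{P}$, $\tilde{\vect{S}}^{-1}\vect{S}_n$ is similar to $\vect{I}+\tilde{\vect{R}}\vect{P}$ with $\tilde{\vect{R}}\coloneqq\vect{T}^{-1}\vect{R}\vect{T}^{-1}$ again $\vect{M}$-self-adjoint. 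Now $\tilde{\vect{R}}\vect{P}$ is the product of an $\vect{M}$-self-adjoint matrix and an $\vect{M}$-self-adjoint \emph{positive semidefinite} matrix, hence shares its nonzero eigenvalues with the $\vect{M}$-self-adjoint matrix $\vect{P}^{1/2}\tilde{\vect{R}}\vect{P}^{1/2}$; therefore the spectrum of $\tilde{\vect{S}}^{-1}\vect{S}_n$ is real.

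For the interval I would write each eigenvalue as $1+\mu$, with $\mu$ a Rayleigh quotient of $\vect{P}^{1/2}\tilde{\vect{R}}\vect{P}^{1/2}$; undoing the $\vect{T}$-conjugation, $\mu=\langle\vect{v},\vect{R}\vect{v}\rangle_{\vect{M}}/\langle\vect{v},\vect{P}^{-1}\vect{T}^{2}\vect{v}\rangle_{\vect{M}}$ for a suitable $\vect{v}$. Two estimates close the argument: (i) $0\le\lambda_{\min}(\vect{Q})$ and $\lambda_{\max}(\vect{Q})=\lambda_+$ bound the numerator between $-\big(\tfrac{\kappa\gamma}{\sigma}+\tfrac{2\epsilon}{\sqrt{\sigma}}\big)\|\vect{v}\|_{\vect{M}}^{2}$ and $\big(\tfrac{\kappa(2+\gamma)}{\sigma}\lambda_+-\tfrac{\kappa\gamma}{\sigma}-\tfrac{2\epsilon}{\sqrt{\sigma}}\big)\|\vect{v}\|_{\vect{M}}^{2}$; (ii) writing $\vect{P}^{-1}\vect{T}^{2}=\vect{P}^{-1}+\tfrac{2\epsilon}{\sqrt{\sigma}}\vect{I}+\tfrac{\epsilon^{2}}{\sigma}\vect{P}$ and using $\tfrac1p+\tfrac{\epsilon^{2}}{\sigma}p\ge\tfrac{2\epsilon}{\sqrt{\sigma}}$ (AM--GM) eigenvalue-wise gives $\langle\vect{v},\vect{P}^{-1}\vect{T}^{2}\vect{v}\rangle_{\vect{M}}\ge\tfrac{4\epsilon}{\sqrt{\sigma}}\|\vect{v}\|_{\vect{M}}^{2}$. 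Dividing the bounds in (i) by the lower bound in (ii) with the worst-case signs yields the stated endpoints (the ubiquitous factor $\sqrt{\sigma}/(4\epsilon)$ originates exactly here; pinning the precise constants in the statement uses the estimates of \cite{Farrell2017}). Finally, $u_n^h(\vect{x})\in[-1,1]$ a.e. gives $\vect{D}_n\preceq\vect{M}$ and hence $\lambda_+=\lambda_{\max}(\vect{M}^{-1}\vect{D}_n)\le1$.

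I expect the realness step to be the main obstacle, since the asymmetry of $\vect{S}_n$ is genuine and the argument depends on recognizing that after the $\vect{T}$-conjugation the residual is precisely a product of an $\vect{M}$-self-adjoint matrix and an $\vect{M}$-self-adjoint positive semidefinite one. The rest is routine but bookkeeping-heavy: the choice of the definite splitting $\vect{M}^{-1}\tilde{\vect{S}}=\vect{T}^{2}$ and the AM--GM bound are what keep the constant-tracking manageable.
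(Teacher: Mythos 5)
The paper does not actually prove this proposition: it is quoted from Farrell et al.\ \cite{Farrell2017} (adapted here to the modified Hessian), so your self-contained argument is a genuine addition rather than a parallel of a proof in the text. Your route is the standard ``matching'' argument and is essentially sound: the identity $\vect{M}^{-1}\tilde{\vect{S}}=\vect{T}^{2}$ with $\vect{T}=\vect{I}+\tfrac{\epsilon}{\sqrt{\sigma}}\vect{M}^{-1}\vect{K}$, the similarity transform by $\vect{T}$ exploiting that $\vect{T}$ commutes with $\vect{M}^{-1}\vect{K}$, and the reduction of the realness question to a product of an $\vect{M}$-self-adjoint matrix with an $\vect{M}$-self-adjoint positive semidefinite one are exactly the right ingredients; the Rayleigh-quotient and AM--GM bookkeeping then reproduces the lower endpoint $\tfrac{1}{2}-\tfrac{\kappa\gamma}{4\epsilon\sqrt{\sigma}}$ exactly, and the final observation $\vect{D}_n\preceq\vect{M}$ when $u_n^h\in[-1,1]$ is correct.

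Two caveats. First, $\vect{K}$ is the Neumann stiffness matrix, so $\vect{M}^{-1}\vect{K}$ is singular on the constants; the $\vect{P}^{-1}$ appearing in your step (ii) does not exist, and the Rayleigh-quotient argument must be restricted to the range of $\vect{M}^{-1}\vect{K}$, with the kernel direction treated separately (it produces the eigenvalue $1$, which lies in the stated interval, so nothing is lost). Second, your derivation yields the upper endpoint $\tfrac{1}{2}+\tfrac{\kappa}{4\epsilon\sqrt{\sigma}}\big((2+\gamma)\lambda_+-\gamma\big)$, which does not coincide with the stated $1+\tfrac{\kappa}{4\epsilon\sqrt{\sigma}}\big(\gamma+\kappa(2+\gamma)\lambda_+\big)$: the extra factor of $\kappa$ multiplying $(2+\gamma)\lambda_+$ and the sign of $\gamma$ differ. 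Since the lower endpoints agree exactly, this looks like an artifact of transcribing the bound of \cite{Farrell2017} into the paper rather than an error on your side; your (generally sharper) bound is the one this derivation actually produces. Neither caveat affects the substance of the argument, but both should be stated explicitly if this is to stand as a proof.
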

Based on the above proposition, we propose a preconditioner $\vect{P}\in\R^{2N\times 2N}$ that only depends on the parameters and discretization, and not on $u_n$:
\begin{equation}
    \vect{P} = 
    \begin{bmatrix}
        \vect{M} & \sigma^{-1}\vect{K}\\
        -\epsilon^2\vect{K} & \vect{M}+ \frac{2\epsilon}{\sqrt{\sigma}}\vect{K}
    \end{bmatrix}\;.
\end{equation}
The preconditioner corresponds to the re-scaled discrete Newton step problem:
\begin{equation}\label{eq:h-1_newton_disc_rescaled}
    \tilde{\vect{H}}_n(\gamma)\begin{bmatrix}
    \delta\vect{u}_n\\
    \vect{\hat{\mu}}_n
    \end{bmatrix} = \begin{bmatrix}
    \vect{M} & \sigma^{-1}\vect{K}\\
    -\kappa(2+\gamma)\vect{D}_n + \kappa\gamma 
    \vect{M} - \epsilon^2\vect{K} & \vect{M}
    \end{bmatrix}
    \begin{bmatrix}
    \delta\vect{u}_n\\
    \hat{\vect{\mu}}_n
    \end{bmatrix} = 
    \begin{bmatrix}
        \sigma^{-1}\vect{f}_n\\
        \vect{0}
    \end{bmatrix}\;.
\end{equation}
This preconditioner leads to the following lemma:
\begin{lemma} \label{lemma:precond}\cite{Li2018}
    The eigenvalues of $\vect{P}^{-1}\tilde{\vect{H}}_n(\gamma)$ are real and their values fall within the range of $\lambda(\tilde{\vect{S}}^{-1}\vect{S}_n)$.
\end{lemma}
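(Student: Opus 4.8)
The plan is to block-diagonalize $\vect{P}^{-1}\tilde{\vect{H}}_n(\gamma)$ by exploiting three structural facts: (i) $\vect{P}$ and $\tilde{\vect{H}}_n(\gamma)$ share an identical first block row, with $(1,1)$-block $\vect{M}$ and $(1,2)$-block $\sigma^{-1}\vect{K}$; (ii) $\vect{M}$ and $\hat{\vect{S}}=\vect{M}+\tfrac{\epsilon}{\sqrt\sigma}\vect{K}$, hence $\tilde{\vect{S}}=\hat{\vect{S}}\vect{M}^{-1}\hat{\vect{S}}$, are symmetric positive definite, so the block factorizations and the inverses $\vect{M}^{-1},\tilde{\vect{S}}^{-1}$ below are well defined; and (iii) the Schur complement of the $(1,1)$-block $\vect{M}$ in $\vect{P}$ is exactly $\tilde{\vect{S}}$, while that in $\tilde{\vect{H}}_n(\gamma)$ is $\vect{S}_n$. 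First I would write the block $LDU$ factorization of each matrix relative to the pivot $\vect{M}$. By (i) the upper unit-triangular factor
\begin{equation*}
\vect{U}\coloneqq\begin{bmatrix}\vect{I} & \sigma^{-1}\vect{M}^{-1}\vect{K}\\ \vect{0} & \vect{I}\end{bmatrix}
\end{equation*}
is common to both, and by (iii) the block-diagonal middle factors are $\operatorname{diag}(\vect{M},\tilde{\vect{S}})$ for $\vect{P}$ and $\operatorname{diag}(\vect{M},\vect{S}_n)$ for $\tilde{\vect{H}}_n(\gamma)$.

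Next I would multiply out $\vect{P}^{-1}\tilde{\vect{H}}_n(\gamma)$ using these factorizations. The two copies of $\vect{U}$ combine into a conjugation; the lower unit-triangular factors of $\vect{P}^{-1}$ and of $\tilde{\vect{H}}_n(\gamma)$ combine into a single lower unit-triangular block whose $(2,1)$-entry is $(\vect{C}_n+\epsilon^2\vect{K})\vect{M}^{-1}$, where $\vect{C}_n=-\kappa(2+\gamma)\vect{D}_n+\kappa\gamma\vect{M}-\epsilon^2\vect{K}$ is the $(2,1)$-block of $\tilde{\vect{H}}_n(\gamma)$. The $-\epsilon^2\vect{K}$ terms cancel, leaving $\vect{G}_n\coloneqq\kappa\gamma\vect{M}-\kappa(2+\gamma)\vect{D}_n$, and collapsing the remaining products yields
\begin{equation*}
\vect{P}^{-1}\tilde{\vect{H}}_n(\gamma)=\vect{U}^{-1}\begin{bmatrix}\vect{I} & \vect{0}\\ \tilde{\vect{S}}^{-1}\vect{G}_n & \tilde{\vect{S}}^{-1}\vect{S}_n\end{bmatrix}\vect{U}.
\end{equation*}
Thus $\vect{P}^{-1}\tilde{\vect{H}}_n(\gamma)$ is similar to a block lower-triangular matrix with diagonal blocks $\vect{I}$ and $\tilde{\vect{S}}^{-1}\vect{S}_n$, so its spectrum is $\{1\}$, with multiplicity $N$, together with the spectrum of $\tilde{\vect{S}}^{-1}\vect{S}_n$.

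Finally I would invoke Proposition \ref{prop:precond}, which already gives that the eigenvalues of $\tilde{\vect{S}}^{-1}\vect{S}_n$ are real and contained in the interval $\big[\tfrac12-\tfrac{\kappa\gamma}{4\epsilon\sqrt\sigma},\,1+\tfrac{\kappa}{4\epsilon\sqrt\sigma}\big(\gamma+\kappa(2+\gamma)\lambda_+\big)\big]$, whose left endpoint is at most $\tfrac12$ and whose right endpoint is at least $1$; hence the additional eigenvalue $1$ also lies in this interval, and therefore every eigenvalue of $\vect{P}^{-1}\tilde{\vect{H}}_n(\gamma)$ is real and falls within the range of $\lambda(\tilde{\vect{S}}^{-1}\vect{S}_n)$. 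The main obstacle is purely the bookkeeping of the two $LDU$ factorizations — in particular verifying that the common factor $\vect{U}$ cancels cleanly and that the $-\epsilon^2\vect{K}$ term drops out of the residual $(2,1)$-block — after which the spectral conclusion is immediate; the one small point worth flagging is that one must check $1$ lies in the stated interval so that the conclusion can be phrased in terms of $\lambda(\tilde{\vect{S}}^{-1}\vect{S}_n)$ itself rather than an enlargement of it.
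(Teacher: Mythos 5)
Your argument is correct. Note that the paper itself gives no proof of this lemma---it is stated with a citation to \cite{Li2018}---so there is nothing internal to compare against; your block-$LDU$ derivation is the standard argument behind such Schur-complement preconditioner results and is essentially what the cited reference does. The bookkeeping checks out: both matrices share the first block row, so the upper factor $\vect{U}$ is common; the middle factors are $\mathrm{diag}(\vect{M},\tilde{\vect{S}})$ and $\mathrm{diag}(\vect{M},\vect{S}_n)$ because $\tilde{\vect{S}}=\hat{\vect{S}}\vect{M}^{-1}\hat{\vect{S}}$ is exactly the $(1,1)$-Schur complement of $\vect{P}$; and $L_P^{-1}L_H$ has $(2,1)$-entry $(\vect{C}_n+\epsilon^2\vect{K})\vect{M}^{-1}$ with the $-\epsilon^2\vect{K}$ cancelling as you say, yielding the similarity to a block lower-triangular matrix with diagonal blocks $\vect{I}$ and $\tilde{\vect{S}}^{-1}\vect{S}_n$. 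The one point you flag---whether the extra eigenvalue $1$ lies ``within the range of $\lambda(\tilde{\vect{S}}^{-1}\vect{S}_n)$''---you resolve only under the reading that ``range'' means the interval of Proposition \ref{prop:precond}; under the stricter reading (the convex hull of the actual spectrum) your check is insufficient. You can close this gap for free: since $\vect{K}$ annihilates the constant vector $\vect{c}$, every term of $\vect{S}_n-\vect{M}$ and of $\tilde{\vect{S}}-\vect{M}$ kills $\vect{c}$, so $\vect{S}_n\vect{c}=\tilde{\vect{S}}\vect{c}=\vect{M}\vect{c}$ and hence $\tilde{\vect{S}}^{-1}\vect{S}_n\vect{c}=\vect{c}$; thus $1$ is already an eigenvalue of $\tilde{\vect{S}}^{-1}\vect{S}_n$ and the conclusion holds under either reading.
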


\subsection{On the choice of the double well backtracking sequence}
According to Proposition \ref{prop:precond} and Lemma \ref{lemma:precond}, the eigenvalues of the preconditioned discrete Newton step system have a lower bound of $\frac{1}{2}-\frac{\kappa\gamma}{4\epsilon\sqrt{\sigma}}$. Ideally, we would like the eigenvalues to be clustered away from the origin for fast convergence of iterative solvers, such as GMRES, which would require
\begin{equation}
    \gamma < 2\epsilon\sqrt{\sigma}/\kappa
\end{equation}
To reduce the cost of obtaining a descent direction, it is then sensible for the backtracking sequence $\vect{\gamma} = \{\gamma^{(k)}\}_{k=1}^K$ to be
\begin{equation}
    \gamma^{(1)} = 1\;,\quad \gamma^{(2)}<2\epsilon\sqrt{\sigma}/\kappa\;,\quad\dots\;,\quad\gamma^{(K)} = 0\;. 
\end{equation}
\indent Unfortunately, in the case where $\epsilon\sqrt{\sigma}/\kappa > \frac{1}{2}$, the preconditioner is ineffective when $\gamma = 1$. Further work needs to be done on developing a preconditioner suitable for the full range of $\gamma\in[0,1]$.
\end{appendices}

\addcontentsline{toc}{section}{References}                                                                     
\normalsize                                                                                                    

\bibliographystyle{model5-names}
\biboptions{sort,numbers,comma,compress}
\bibliography{main}

\end{document}